\documentclass[a4paper,UKenglish,envcountsame]{llncs}
\usepackage{amsmath,amssymb}
\usepackage[utf8]{inputenc}
\usepackage{todonotes}

\usepackage[english]{babel}
\usepackage[T1]{fontenc}

\usepackage{thmtools}
\usepackage{chngcntr}

\def\N{\mathbb{N}}

\def\ta{\mathtt{a}}
\def\tb{\mathtt{b}}
\def\tc{\mathtt{c}}
\def\td{\mathtt{d}}
\def\tn{\mathtt{n}}

\DeclareMathOperator{\Fact}{Fact}

\DeclareMathOperator{\letters}{alph}
\DeclareMathOperator{\Pos}{Pos}
\DeclareMathOperator{\loc}{loc}

\def\NP{\textsf{NP}}
\def\Loc{\textsf{Loc}}
\def\Cutwidth{\textsf{Cutwidth}}
\def\Pathwidth{\textsf{Pathwidth}}
\def\nth#1{#1$^{\text{th}}$}

\newif\ifpaper
\paperfalse 

\def\ebs{\textsf{ebs}}

\usetikzlibrary{patterns}
\usetikzlibrary{arrows}
\usetikzlibrary{decorations.pathreplacing}
\usetikzlibrary{arrows.meta}


\bibliographystyle{splncs04}

\begin{document}

\title{Blocksequences of $k$-local Words}
\titlerunning{Blocksequences}

\author{Pamela Fleischmann\inst{1} \and Lukas Haschke\inst{1} \and 
Florin Manea\inst{2} \and Dirk Nowotka\inst{1} \and Cedric Tsatia Tsida\inst{1} 
\and Judith Wiedenbeck\inst{1}}

\authorrunning{P. Fleischmann et al.}

\institute{Kiel University, Germany \and University of Göttingen, Germany\\
 \email{fpa@informatik.uni-kiel.de, stu105615@mail.uni-kiel.de,  florin.manea@informatik.uni-goettingen.de, dn@informatik.uni-kiel.de,  stu111120@mail.uni-kiel.de, stu107029@mail.uni-kiel.de}}

\maketitle
\begin{abstract}
The locality of words is a relatively young structural complexity measure, introduced by Day et al. in 2017 in order to define classes of patterns with variables which can be matched in polynomial time. 
The main tool used to compute the locality of a word is called marking sequence: an ordering of the distinct letters occurring in the respective order. Once a marking sequence is defined, the letters of the word are marked in steps: in the \nth{i} marking step, all occurrences of the \nth{i} letter of the marking sequence are marked. As such, after each marking step, the word can be seen as a sequence of blocks of marked letters separated by blocks of non-marked letters. By keeping track of the evolution of the marked blocks of the word through the marking defined by a marking sequence, one defines the blocksequence of the respective marking sequence. We first show that the words sharing the same blocksequence are only loosely connected, so we consider the stronger notion of extended blocksequence, which stores additional information on the form of each single marked block. In this context, we present a series of combinatorial results for words sharing the extended blocksequence.\end{abstract}

\section{Introduction}
The \emph{locality} of words (also called strings) is a structural-complexity measure which has been introduced in~\cite{FSTTCS}. To define the locality of a word several notions are important. Firstly, a {\em marking sequence} for that word is an ordering of the symbols occurring in it. For each {\em marking sequence},  we can mark the letters of the word in steps, as follows: in the \nth{i} marking step, all occurrences of the \nth{i} letter of the marking sequence are marked. As such, after each marking step, the word can be seen as a sequence of blocks of marked letters separated by blocks of non-marked letters. Clearly, after each new marking step of a marking sequence, more symbols become marked, so the marked blocks grow and they may unite. Observing the evolution of the marked blocks leads to the definition of {\em the marking number} of the respective marking sequence: the maximal number of marked blocks which occur in the word after a marking step. {\em The locality number of a word} (for short, {\em locality}) is defined as the minimal marking number over all marking sequences for that word.

More precisely, a word is $k$-local if there exists a marking sequence for the respective word such that after each step of the sequence there are at most $k$ contiguous blocks of marked symbols in the word. The \emph{locality number} (or, for short, locality) of a word is the smallest $k$ for which that word is $k$-local, or, in other words, the minimum marking number over all marking sequences. For instance, if $\tb\ta\tn\ta\tn\ta$ is marked according to the marking sequence $(\tb,\tn,\ta)$ the largest number of marked blocks we get (i.e., the marking number of the sequence) is $3$. Thus, $\tb\ta\tn\ta\tn\ta$ is $3$-local. However, if we take the marking sequence $(\tn,\ta,\tb)$ the largest number of blocks we get is $2$ - and we cannot do better. Thus, $\tb\ta\tn\ta\tn\ta$ has the locality number $2$. The locality number of a word describes how many separated (or isolated) marked regions must at least be maintained in exploring the word w.r.t. possible marking sequences; thus, it can be interpreted as a structural complexity measure (e.g., by associating some cost per marked~region).

The original motivation for the introduction of locality in \cite{FSTTCS} is the fact that patterns with variables which have a low locality can be efficiently matched. A \emph{pattern} is a word that consists of \emph{constant letters}  (e.\,g., $\ta, \tb, \tc$) and \emph{variables} (e.\,g., $x_1, x_2, x_3, \ldots $). A pattern is mapped to a word by uniformly replacing the variables by words with constant letters. For 
example, $x_1 x_1 \ta x_2 x_2$ can be mapped to $\ta \tc \ta \tc \ta \tc \tc$, by replacing $x_1$ by $\ta\tc$ and $x_2$ by $\tc$. If a pattern $\alpha$ can be mapped to the word $w$, we say that $\alpha$ matches $w$. Deciding whether a given pattern matches a given 
word is an important problem with applications in many areas: combi\-natorics on words (word equations~\cite[Chapters 12 and 13]{Loth02}, 
unavoidable patterns~\cite[Chapter 3]{Loth02}), formal-language theory (pattern languages~\cite{ang:fin2}), and learning theory (inductive 
inference~\cite{ang:fin2}, PAC-learning~\cite{kea:apo}), database 
theory (extended conjunctive regular path queries~\cite{bar:exp}), programming languages (the processing of extended regular expressions with backreferences~\cite{Fre2013,fri:mas}, used in programming languages like Perl, Java, Python, etc). In general, the \emph{matching problem} is $\NP$-complete 
\cite{ang:fin2}. This is especially bad for some computational tasks on patterns which 
implicitly solve the matching problem: such problems become, inherently, intractable. One such example is the task of finding descriptive patterns for a set of strings \cite{FeMaMeSc16_TCS}, which is useful in the context of learning theory. 

A thorough analysis of the 
complexity of the matching  problem for patterns of variables was performed~\cite{rei:patIaC,FerSch2015,FerSchVil2015,FeMaMeSc14_stacs} and some classes of patterns admitting polynomial time matching, usually defined by restricting structural parameters, were identified. In \cite{FSTTCS} it was shown that $k$-local patterns can be matched in polynomial time when $k$ is a constant, and, based on the results of \cite{FeMaMeSc16_TCS}, that descriptive $k$-local patterns can be efficiently computed for a given set of strings. 

Thus, the study of the locality of words and patterns seems interesting and well-motivated and the most natural problem one could identify in this area was computing the locality number of a word. The problem $\Loc$ of deciding whether the locality of a given word is upper bounded by a given number $k\in\N$ was shown to be $\NP$-complete in \cite{ICALP2019}. More interestingly, in the same work, strong (and surprising) relations between the string-decision problem $\Loc$ and the graph decision problems $\Cutwidth$ (asking to decide whether the cutwidth of a graph is upper bounded by a given number) and $\Pathwidth$ (asking to decide whether the pathwidth of a graph is upper bounded by a given number) were established. These connections explained, on the one hand, all kinds of algorithmic difficulties arising in solving $\Loc$, and, on the other hand, lead to a state-of-the-art approximation algorithm for computing the cutwidth of graphs. 

\textbf{Our contribution.}
We extend the study of the locality of words by taking a combinatorics-on-words-centric perspective. As explained before, while marking a word with respect to a marking sequence, we obtain after each step a set of factors of the word which consist of marked letters and are bounded by unmarked letters. This set of factors provides a snapshot of the word after each marking step. In our setting the number of marked blocks from each snapshot is important. We will call the sequence of numbers of blocks occurring in these snapshots, in the order in which they occur during the marking sequence, the {\em blocksequence} associated to the marking sequence. Looking again at the word $\tb\ta\tn\ta\tn\ta$ and the marking sequence $(\tb,\tn,\ta)$ we obtain the corresponding blocksequence $(1,3,1)$. 

Now, if we assume that we are only given the sequence $(1,3,1)$, we can trivially tell that this is a blocksequence of a word over a three-letter alphabet. Taking into account that the letters may be renamed we can assume that this three-letter alphabet consists of the letters $\ta$, $\tb$, and $\tc$, and the marking sequence defining the considered blocksequence is $\sigma_{\Sigma}=(\ta,\tb,\tc)$. In other words, we can restrict ourselves to a canonical marking sequence, and our reasoning will be true up to the renaming of all letters. This leads to the question of finding the set of words having the given blocksequence when marked according to $\sigma_{\Sigma}$ and understanding what these words have in common from a combinatorial point of view, e.g., w.r.t. to their locality number. As we have seen the locality of $\tb\ta\tn\ta\tn\ta$ is $2$ and not $3$, so is this a characteristic of all words sharing the blocksequence $(1,3,1)$?
We show that the blocksequence alone does not provide much information, and thus we enrich the blocksequence with more combinatorial information: we do not only store the number of marked blocks in each step, but also the {\em kind} each occurrence of a letter has in the respective step, e.g. neighbour, join, or singleton. In this setting we are able to define a normal form for each class of words having the same extended blocksequence. We show how to obtain the normal form for a given word by defining three rules and we compare the locality of the normal form with the locality of the words from the same class. We finally present, in the case of words over three-letters alphabets, how the optimal marking sequence (the one determining the locality of the word) can be obtained by examining the extended block sequence.

\section{Preliminaries and Initial Results}
\noindent {\bf $\S $ Basic Definitions.} Let $\N$ be the set of natural numbers and $\N_0 = \N\cup\{0\}$.
Let $[n]$ denote the set $\{1,\ldots, n\}$ and $[n]_0 = [n]\cup\{0\}$
for an $n\in\N$.

An alphabet is a finite set $\Sigma=\{\ta_1,\dots,\ta_{\ell}\}$ of $\ell\in\N$ symbols, called {\em letters}.
The alphabet is called {\em ordered} if there exists a total ordering $<$ on the letters. We assume here $\Sigma$ to be ordered with 
$\ta_i<\ta_{i+1}$ for all $i\in[\ell-1]$.
$\Sigma^*$ denotes the set of all finite words over $\Sigma$, i.e. the free monoid over $\Sigma$. 
The {\em empty word} is denoted by $\varepsilon$ and 
$\Sigma^+=\Sigma^*\backslash\{\varepsilon\}$.
The length of a word $w$ is denoted by $|w|$. Define 
$\Sigma^{k}:=\{w\in\Sigma^{\ast}|\,|w|= k\}$ for a $k\in\N$. The number of occurrences of a 
letter $\ta\in\Sigma$ in a word $w\in\Sigma^{\ast}$ is denoted by $|w|_\ta$. Define the set of letters occurring in $w\in\Sigma^{\ast}$ by $\letters(w)=\{\ta\in\Sigma|\,|w|_{\ta}>0\}$.
The \nth{i} letter of a word $w$ is given by $w[i]$ for $i\in[|w|]$. For a given 
word $w\in\Sigma^n$ the {\em reversal} of $w$ is defined by 
$w^R=w[n]w[n-1]\cdots w[2]w[1]$. The powers of $w\in\Sigma^{\ast}$ are 
defined recursively by $w^0=\varepsilon$, $w^n=ww^{n-1}$ for $n\in\N$.
A word $u\in\Sigma^{\ast}$ is a \emph{factor} of $w\in\Sigma^{\ast}$, if 
$w=xuy$ holds for some words $x,y\in\Sigma^{\ast}$. Moreover, $u$ is a 
\emph{prefix} (resp., \emph{suffix}) of $w$ if $x=\varepsilon$ (resp., $y=\varepsilon$) holds. The factor $w[i]w[i+1]\cdots w[j]$ of $w$ is denoted by $w[i..j]$, for $1\leq i\leq j\leq |w|$. 
Given a property $P:\Sigma\rightarrow\{0,1\}$, a factor $u$ is a {\em $P$-block}  of a word $w=xuy$ if $P(u[i])=1$ for all 
$i\in[|u|]$ and $P(x[|x|])=P(y[1])=0$ (if $x$ or $y$ are empty the constraint does not have to be fulfilled). 
For the property $P_\ta$ defined by $P_{\ta}(x)=1$ iff $x=\ta$ for $x\in\Sigma$, the
word $\ta\tb\ta\ta\ta\tb\ta\ta\tb\tb$ has 3 $P_{\ta}$-blocks (or short three $\ta$-blocks).

In the following, we give the main definitions on $k$-locality, following~\cite{FSTTCS}. 
 
\begin{definition}\label{basedefs}
Let $\overline{\Sigma}=\{\overline{x}\mid x\in \Sigma\}$ be the set of \emph{marked letters}. For a word $w\in\Sigma^{\ast}$, a {\em marking sequence} of the 
letters occurring in $w$, is an enumeration $(x_1,x_2,\ldots, x_{|\letters(w)|})$ of $\letters(w)$. We say that $\ta_i\leq_{\sigma}\ta_j$ if $\ta_i$ occurs before $\ta_j$ in $\sigma$, for $i,j\in[|\letters(w)|]$. The enumeration obeying the total order of the alphabet is called the {\em canonical marking sequence} $\sigma_{\Sigma}$. A letter $x_i$ is called {\em marked at stage $k\in\N$}  if $i\leq k$.
 Moreover, we define 
$w_k$, {\em the marked version of $w$ at stage $k$}, as the word obtained from $w$ 
by replacing all $x_i$ with $i\leq k$ by $\overline{x_i}$. A factor of $w_k$ is a \emph{marked block} if the defining property of the block is that it contains only elements from $\overline{\Sigma}$. 
The locality of a word $w$ w.r.t. a marking 
sequence $\sigma$ ($\loc_{\sigma}(w)$) is the maximal number of marked blocks 
that occurred during the marking process.
\end{definition}

In the context of Definition \ref{basedefs}, $w_{|\letters(w)|}$ is always completely marked.
Using the idea of a marking sequence, we define the $k$-locality of a word.
 
\begin{definition}\label{def1}
A word $w\in\Sigma^{\ast}$ is {\em $k$-local} for $k\in\N_0$ if 
there exists a marking sequence 
$(x_1,\ldots,x_{|\letters(w)|})$ of $\letters(w)$, such that, for all $i\leq |\letters(w)|$ we have that 
$w_i$ at stage $i$, has at most $k$ marked blocks. A word is called {\em strictly} 
$k$-local if it is $k$-local but not $(k-1)$-local. 
\end{definition}

Consider the word $\tb\ta\tn\ta\tn\ta\in\{\ta,\tb,\tn\}^{\ast}$. The marking sequence $(\ta,\tb,\tn)$
leads to the sequence $w_1=\tb\overline{\ta}\tn\overline{\ta}\tn\overline{\ta}$ (3 marked blocks), $w_2=\overline{\tb\ta}\tn\overline{\ta}\tn\overline{\ta}$ (3 marked blocks), and $w_3=\overline{\tb\ta\tn\ta\tn\ta}$ (1 marked block), i.e. $\tb\ta\tn\ta\tn\ta$ is $3$-local. In fact, it is strictly $2$-local witnessed by the marking sequence $(\tn,\ta,\tb)$ (it is not $1$-local, since this would imply to start with marking $\tb$ and either marking afterwards $\ta$ or $\tn$ leads to more than one marked block). As a second example consider the word $\ta^3\tb^4$. This word is $1$-local since for both marking sequences $(\ta,\tb)$ and $(\tb,\ta)$ the blocks of letters are marked in one step. This motivates to consider the notion of the print of a word - or condensed word - introduced in \cite{DBLP:journals/fuin/SerbanutaS06} and \cite{ICALP2019}, respectively.

\begin{definition}
For $w=x_1^{k_1}x_2^{k_2}\dots x_{m}^{k_{m}}\in\Sigma^{\ast}$ with $k_i,m\in\N$, $i\in[m]$, and $x_j\neq x_{j+1}$ for $j\in[m-1]$, the {\em print (condensed form)} of $w$ is defined by $x_1\dots x_{m}$. A word is called {\em condensed}, if it is its own print.
\end{definition}

\medskip

\noindent {\bf $\S$ Initial Results.} Since in our setting the multiplicity of single letters does not affect the results (all these letters form a single marked block), we restrict the setting to condensed words implicitly, i.e. each $w\in\Sigma^{\ast}$ is implicitly meant to be condensed. Now we define the notion of the blocksequence that captures the number of marked blocks during the marking process. Moreover,  we assume $\letters(w)=\Sigma$.

\begin{definition}
Let $w\in\Sigma^{\ast}$ and $\sigma=(y_1,\dots,y_{\ell})$ be a marking sequence. The {\em blocksequence} $\beta_{\sigma}(w)$
is the sequence $(b_1,\dots,b_{\ell})$ over $\N$ such that in $\sigma$'s \nth{$i$} 
stage on marking $w$, $b_i$ blocks are marked, for all $i\in[\ell]$. 
\end{definition}

Coming back to $\tb\ta\tn\ta\tn\ta$, the marking sequence $(\ta,\tb,\tn)$ leads to the blocksequence 
$(3,3,1)$ and the marking sequence $(\tn,\ta,\tb)$ to $(2,1,1)$. 
Since $|w|_{|\letters(w)|}$ is one marked block, the last position in a blocksequence has to be $1$ and moreover the first position is exactly $|w|_{y_1}$.
Changing the perspective, $n$-tuples (with the last position being $1$) can be seen as 
a blocksequence w.r.t. the canonical marking sequence given by the alphabet and its order. This point of view is inspired by the idea to group words with the same blocksequence in order to deduce information about their locality.

\begin{definition}
For a given $\ell$-tuple $\beta=(b_1,\dots,b_{\ell-1},1)$ define the set of words that 
give exactly $\beta$ on marking with $\sigma_{\Sigma}$ by $\mathfrak{W}_{\beta}=\{w\in\Sigma^{\ast}|\,
\beta_{\sigma_{\Sigma}}(w)=\beta\}$. 
\end{definition}

First, we prove that the class $\mathfrak{W}_{\beta}$ is not empty for all $\beta=(b_1,\dots,b_{\ell-1},1)$.

\ifpaper 
\begin{theorem}[$\ast$]
\else    
\begin{theorem}
\fi
\label{ntupel}
For all $\beta=(b_1,\dots,b_{\ell-1},1)\in\N^{\ell}$ there exists $n_\beta\in\N$ such that for all $n\geq n_\beta$ we have $\Sigma^{n}\cap \mathfrak{W}_{\beta}\neq \emptyset$ and for all $m< n_\beta$ we have $\Sigma^{m}\cap \mathfrak{W}_{\beta}=\emptyset$.
\end{theorem}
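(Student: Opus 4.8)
The plan is to separate the claim into two facts: (1) $\mathfrak{W}_\beta\neq\emptyset$, so that $n_\beta:=\min\{|w|\mid w\in\mathfrak{W}_\beta\}$ is well-defined; and (2) if $\mathfrak{W}_\beta$ contains a word of length $m$, then it also contains one of length $m+1$. Given both, the theorem is immediate: for $n\geq n_\beta$ apply (2) repeatedly to a length-$n_\beta$ witness, and for $m<n_\beta$ the set $\Sigma^m\cap\mathfrak{W}_\beta$ is empty by minimality of $n_\beta$.

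For (2) I would use letter duplication. Write $w=u\,\ta_i\,v\in\mathfrak{W}_\beta$ for some occurrence of some letter $\ta_i$, and set $w'=u\,\ta_i\ta_i\,v$. At every stage $k$ of marking along $\sigma_\Sigma$, the two adjacent copies of $\ta_i$ in $w'$ behave exactly as the single copy in $w$: as long as $k<i$ both are unmarked and only separate the same marked blocks, and once $k\geq i$ the run $\overline{\ta_i\ta_i}$ forms a single marked block that merges or extends precisely the neighbours that $\overline{\ta_i}$ did. Hence $\beta_{\sigma_\Sigma}(w')=\beta_{\sigma_\Sigma}(w)=\beta$ while $|w'|=|w|+1$. (The same observation read backwards shows that passing to the print of a word leaves its blocksequence unchanged, so a length-minimal element of $\mathfrak{W}_\beta$ may be taken condensed, in line with our standing convention.)

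The core is fact (1). Assume $\ell\geq 2$ (for $\ell=1$ take $w_\beta=\ta_1$). Given $\beta=(b_1,\dots,b_{\ell-1},1)$ I would construct a condensed witness incrementally, building words $w^{(1)},\dots,w^{(\ell-1)}$ under the invariant that after performing marking stages $1,\dots,i$ on $w^{(i)}$ one sees exactly the counts $b_1,\dots,b_i$ and the resulting picture is $\overline{B_1}\,\ta_\ell\,\overline{B_2}\,\ta_\ell\cdots\ta_\ell\,\overline{B_{b_i}}$, i.e.\ $b_i$ marked blocks over $\{\ta_1,\dots,\ta_i\}$ separated by single unmarked copies of $\ta_\ell$ and with no $\ta_\ell$ at either end (for $b_i=1$ this degenerates to $\overline{B_1}$, with no occurrence of $\ta_\ell$). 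We start from $w^{(1)}=\ta_1\ta_\ell\ta_1\ta_\ell\cdots\ta_1$ with $b_1$ copies of $\ta_1$. To pass from $w^{(i)}$ to $w^{(i+1)}$ we distinguish the sign of $d_{i+1}:=b_{i+1}-b_i$: if $d_{i+1}>0$ we append $d_{i+1}$ copies of the string $\ta_\ell\ta_{i+1}$, creating $d_{i+1}$ fresh singleton $\ta_{i+1}$-blocks, each isolated by $\ta_\ell$'s; if $d_{i+1}=0$ we prepend a single $\ta_{i+1}$ to $B_1$, a neighbour occurrence that only enlarges the first block; if $d_{i+1}<0$ we replace $|d_{i+1}|$ of the $b_i-1$ separating $\ta_\ell$'s by $\ta_{i+1}$, each a join occurrence that, once $\ta_{i+1}$ is marked, merges the two blocks it sits between. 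In all three cases the newly inserted letters are unmarked during all of stages $1,\dots,i$, so the counts already fixed stay put; stage $i+1$ then produces exactly $b_{i+1}$ marked blocks; and the invariant as well as condensedness are restored, with $\ta_{i+1}$ occurring. Finally we append one $\ta_\ell$ to $w^{(\ell-1)}$; this leaves the counts of stages $1,\dots,\ell-1$ unchanged (an unmarked trailing letter), ensures $\ta_\ell$ occurs so that $\letters(w_\beta)=\Sigma$, and makes marking $\ta_\ell$ at stage $\ell$ merge the $b_{\ell-1}$ pairwise-$\ta_\ell$-separated blocks (together with the trailing $\ta_\ell$) into one block, yielding the last entry $1$. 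Thus $w_\beta\in\mathfrak{W}_\beta$.

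I expect the only real work to be in the bookkeeping of this construction: checking that each update rule truly leaves the counts of the already-processed stages unchanged, that the ``$b_i$ blocks separated by single $\ta_\ell$'s'' invariant survives in particular through repeated decreases (which keep consuming $\ta_\ell$-separators) and through replacing several adjacent separators at one stage, that condensedness is never broken (this is where the bound $|d_{i+1}|\leq b_i-1$ and the ``no $\ta_\ell$ at the ends'' clause get used), and that every $\ta_i$ eventually appears. Everything else, both halves of the theorem included, then follows from the invariant and the duplication argument.
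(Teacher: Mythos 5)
Your proof is correct, but it is organised differently from the paper's. The paper fixes the explicit value $n_\beta=\sum_{i\in[\ell]}\Delta(b_i,b_{i-1})$ (with $b_0=0$, $b_\ell=1$ and $\Delta(x,y)=\max\{|x-y|,1\}$), constructs a word of exactly that length by maintaining a list of marked blocks and inserting the letters $\ta_i$ between, around, or apart from them, and then proves minimality by an inductive counting argument showing $\sum_{j\in[i]}|u|_{\ta_j}\geq\sum_{j\in[i]}\Delta(b_j,b_{j-1})$ for every $u\in\mathfrak{W}_\beta$. You instead define $n_\beta$ abstractly as $\min\{|w|\mid w\in\mathfrak{W}_\beta\}$, which makes the ``$m<n_\beta$'' half of the statement vacuous, and you discharge the remaining obligations with (i) a different witness construction, using $\ta_\ell$ as a universal separator so that the invariant ``$b_i$ marked blocks separated by single unmarked $\ta_\ell$'s'' is easy to maintain, and (ii) the same print-invariance/duplication observation that the paper also invokes to pass from one length to all larger lengths. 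Both routes are sound; your bookkeeping (enough separators because $|d_{i+1}|\leq b_i-1$, condensedness preserved, every letter occurring) checks out. What you give up is the closed-form value of $n_\beta$: your witness is generally not length-minimal (e.g.\ for $\beta=(2,1)$ it has length $4$ rather than $3$), and the paper's explicit formula and minimal-length construction are reused later, in particular in Theorem~\ref{thm:condensed}, which enumerates $\Sigma^{n_\beta}\cap\mathfrak{W}_\beta$. What you gain is a shorter argument for the theorem as literally stated, since the lower-bound induction disappears entirely.
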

\ifpaper 
\else    
    \begin{proof}
Set $n_\beta = \sum_{i\in [\ell]} \Delta(b_i,b_{i-1})$ with $b_0=0, b_{\ell}=1$ and the notation\[
\Delta(x,y)=
\left\{
	\begin{array}{ll}
		y-x  & \mbox{if } y>x, \\
		1 & \mbox{if } x= y, \\
		x-y  & \mbox{if } x>y
	\end{array}
\right.
\]
for $x,y\in\N$. We will show that there exists a word $w\in\Sigma^{n_{\beta}}$  whose blocksequence w.r.t. the canonical marking sequence $\sigma_{\Sigma}$ is $\beta$. This word is defined by the following algorithm. Define in the first step the blocks $u_{1,1}=\ta_1, \ldots ,u_{1,b_1}=\ta_1$. For each $i\in\{2,\ldots,\ell\}$ do the following: 
\begin{itemize}
\item if $b_i=b_{i-1}$, let $u_{i,1}=u_{i-1,1}\ta_i$, and let $u_{i,j}=u_{i-1,j}$ for $j\in \{2,\ldots, b_{i}\}$,
\item if $b_i<b_{i-1}$ and $d=\Delta(b_i,b_{i-1})=b_{i-1}-b_i$, let 
$$u_{i,1}=u_{i-1,1}\ta_i\cdots u_{i-1,d} a_i u_{i-1,d+1}$$
and $u_{i,j}=u_{i-1,j+d}\mbox{ for }j\in \{2,\ldots, b_{i-1}-d\}$,
\item if $b_i>b_{i-1}$ and $d=\Delta(b_i,b_{i-1})=b_{i}-b_{i-1}$, let 
$$u_{i,j}=u_{i-1,j},\mbox{ for }j\in \{1,\ldots, b_{i-1}\}\mbox{ and }u_{i,b_{i-1}+j}=\ta_{i}\mbox{ for }j\in [d].$$
\end{itemize}
Finally, let $w=u_{\ell,1}$. It is immediate that $|w|=n_\beta$ and its blocksequence w.r.t. $\sigma_{\Sigma}$ is $\beta$.
Notice that $w$ is not necessarily the unique word in  $\Sigma^{n_\beta}\cap \mathfrak{W}_{\beta}.$
For all $w'\in\Sigma^{\ast}$ with condensed form $w$, the blocksequence of $w'$ w.r.t. $\sigma_{\Sigma}$ is $\beta$. Thus, for all $n\geq n_\beta$ we have $\Sigma^{n}\cap \mathfrak{W}_{\beta}\neq \emptyset$.

It remains to show that for all $m< n_\beta$ we have $\Sigma^{m}\cap \mathfrak{W}_{\beta}=\emptyset$.
Let $u$ be a word whose blocksequence w.r.t. the marking sequence $\sigma_{\Sigma}$ is $\beta$, i.e. $u\in  \mathfrak{W}_{\beta}$. We will show by induction on $i$ that $\sum_{j\in [i]}|u|_{\ta_j} \geq  \sum_{j\in [i]} \Delta(b_j,b_{j-1})$.  
The property holds clearly for $i=1$. Assume that it holds for all $i-1$. We will show it for $i$ by case analysis. \\
\textbf{case 1:} $b_i=b_{i-1}$\\
Since we have $|u|_{\ta_i}\geq 1$ and $\sum_{j\in [i-1]}|u|_{\ta_j} \geq \sum_{j\in [i-1]} \Delta(b_j,b_{j-1})$, we immediately get $\sum_{j\in [i]}|u|_{\ta_j} \geq \sum_{j\in [i]} \Delta(b_j,b_{j-1})$ by $\Delta(b_i,b_{i-1})=1$. \\
\textbf{case 2:} $b_i>b_{i-1}$\\
One needs to have $|u|_{\ta_i}\geq b_i - b_{i-1}$, as otherwise we could not produce $b_i - b_{i-1}$ new blocks by marking the letters $\ta_i$.  As $\sum_{j\in [i-1]}|u|_{\ta_j} \geq \sum_{j\in [i-1]} \Delta(b_j,b_{j-1})$, we immediately have $\sum_{j\in [i]}|u|_{\ta_j} \geq \sum_{j\in [i]} \Delta(b_j,b_{j-1})$, because $\Delta(b_i,b_{i-1})=b_i - b_{i-1}$. \\
\textbf{case 2:} $b_i<b_{i-1}$\\ 
Here, one needs to decrease the number of blocks, and this means that some blocks need to be joined. More precisely, one needs to decrease the number of blocks to $b_i$ so at least $b_{i-1}-b_{i}+1$ blocks of the existing marked blocks need to be joined. For this, we need $b_{i-1} - b_{i}$ letters $\ta_i$, so $|u|_{\ta_i}\geq b_{i-1} - b_{i}$. As $\sum_{j\in [i-1]}|u|_{\ta_j} \geq \sum_{j\in [i-1]} \Delta(b_j,b_{j-1})$, we immediately have $\sum_{j\in [i]}|u|_{\ta_j} \geq \sum_{j\in [i]} \Delta(b_j,b_{j-1})$, because $\Delta(b_i,b_{i-1})=b_{i-1} - b_{i}$. \\
This concludes our proof, as we get that $|u|\geq n_\beta$.  \qed
\end{proof}

\fi

In fact, one can characterise precisely the set $\Sigma^{n_\beta}\cap \mathfrak{W}_{\beta}$, as well as the condensed words from $ \mathfrak{W}_{\beta}$\ifpaper (see Theorem \ref{thm:condensed} in Appendix). 
\else .
\begin{theorem}\label{thm:condensed}
For a blocksequence $\beta=(b_1,\dots,b_{\ell-1},1)\in\N^{\ell}$, we can define exact procedures that enumerate
\begin{itemize}
\item[A.] the words in $\Sigma^{n_\beta}\cap \mathfrak{W}_\beta$,
\item[B.] all condensed words in $\mathfrak{W}_\beta$.
\end{itemize}
\end{theorem}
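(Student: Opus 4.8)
The plan is to prove both items with the same device: a staged enumeration that refines the algorithm from the proof of Theorem~\ref{ntupel} by branching over \emph{all} admissible choices, paired with a soundness--completeness argument identifying the branches with the target set. The bookkeeping object is, for a word $w$ and a stage $i\in[\ell]$, the \emph{decorated projection} $\pi_i(w)$: the word obtained by deleting from $w$ every occurrence of $\ta_{i+1},\dots,\ta_\ell$ and condensing, decorated with the decomposition into the $b_i$ nonempty pieces cut out by the maximal $\{\ta_1,\dots,\ta_i\}$-runs of $w$, together with two bits recording whether $w$ begins, resp.\ ends, with a letter from $\{\ta_{i+1},\dots,\ta_\ell\}$. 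As $b_i$ is by definition the number of such runs, $w\in\mathfrak{W}_\beta$ holds iff the decomposition decorating $\pi_i(w)$ has exactly $b_i$ pieces for every $i$; it therefore suffices to enumerate all sequences $\pi_1(w)\mapsto\pi_2(w)\mapsto\cdots\mapsto\pi_\ell(w)=w$ of decorated words in which each step legally follows the previous one, a step from stage $i-1$ to $i$ consisting of inserting the occurrences of $\ta_i$ into the word and updating the decomposition and the two bits.

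For item A I would first extract the rigidity forced by minimality. By the proof of Theorem~\ref{ntupel}, a length-$n_\beta$ word of $\mathfrak{W}_\beta$ contains exactly $\Delta(b_i,b_{i-1})$ occurrences of $\ta_i$; since each of these copies can raise the current block count by at most $1$ and lower it by at most $1$, a short counting argument shows that stage $i$ is of a single uniform type: if $b_i=b_{i-1}$ the one copy of $\ta_i$ is appended or prepended to one block; if $b_i>b_{i-1}$ the $b_i-b_{i-1}$ copies each become a new isolated block lying in an end gap or an inter-block gap; if $b_i<b_{i-1}$ the $b_{i-1}-b_i$ copies each sit tightly between two currently distinct blocks (possibly forming chains of merges). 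The only remaining freedom being positional, the procedure enumerates these positions stage by stage. Soundness is a direct check of the running block count and of the length; completeness holds because for $w\in\Sigma^{n_\beta}\cap\mathfrak{W}_\beta$ the sequence $\pi_1(w),\dots,\pi_\ell(w)$ is a legal run of the procedure; and since this run is read off from $w$, distinct runs yield distinct words, so the enumeration is exact.

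For item B I would drop minimality. At stage $i$ the procedure additionally guesses a multiplicity $k_i\geq\Delta(b_i,b_{i-1})$ together with a partition of the $k_i$ copies of $\ta_i$ into three classes --- \emph{extend a block}, \emph{create an isolated block}, \emph{merge two distinct blocks} --- subject to $(\text{\# isolated})-(\text{\# merge})=b_i-b_{i-1}$, and then enumerates all placements realising this partition while keeping the word condensed (no two copies extending the same block on the same side, each merge between two currently distinct blocks, and so on). Since $\mathfrak{W}_\beta$ contains infinitely many condensed words, ``exact'' here means that the procedure lists each such word exactly once: this is achieved by organising the output by total length, as for each $n\geq n_\beta$ only finitely many runs have length $n$, each word again determines its run uniquely, and every valid word of length $n$ is output exactly once. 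Soundness and completeness are verified exactly as in item A.

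The step I expect to be the main obstacle is the gap bookkeeping inside the completeness argument: one must show that an arbitrary $w\in\mathfrak{W}_\beta$ genuinely decomposes into a legal run, which forces one to reconcile the global quantity $b_i$ --- read off from $w_i$, a word that still carries the unmarked letters $\ta_{i+1},\dots,\ta_\ell$ --- with the purely local picture of pieces and gaps maintained by the decorated projection, and in particular to check that every gap posited at a stage is nonempty and is absorbed by a later stage. For item A this is automatic from the identity $\sum_{i\in[\ell]}\Delta(b_i,b_{i-1})=n_\beta$, whereas for item B it needs the observation that the least-indexed letter occurring inside a gap necessarily acts on that gap at its own stage. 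A secondary point is to confirm that the branching never outputs a word twice, which reduces precisely to the uniqueness of the decorated-projection sequence of a fixed word.
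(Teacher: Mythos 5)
Your proposal is correct and follows essentially the same route as the paper: a stage-by-stage nondeterministic insertion of the occurrences of $\ta_i$ with a case analysis on the sign of $b_i-b_{i-1}$ (extend / create / merge, i.e.\ exactly $\Delta(b_i,b_{i-1})$ copies in item A and at least that many in item B), completeness by induction on the stages using the counting from Theorem~\ref{ntupel}, and enumeration by exhausting the branches. Your decorated projections coincide with the paper's running block lists $u_{i,j}$, and your extra care about duplicate outputs and about allowing the single $\ta_i$ to attach on either side of a block in the $b_i=b_{i-1}$ case only tightens details the paper leaves implicit.
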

\begin{proof}
{\bf We first show item A. }

The idea of this proof follows closely the proof of Theorem \ref{ntupel}, so we will use the same notations as in the respective proof. 
In the respective proof, we have constructed a word $w\in\Sigma^{n_{\beta}}$ whose blocksequence w.r.t. the canonical marking sequence $\sigma_{\Sigma}$ is $\beta$. 

We can extend the respective construction to construct any word in $\Sigma^{n_\beta}\cap \mathfrak{W}_\beta$. For simplicity, we describe this as a non-deterministic algorithm.

Just like before, define in the first step the blocks $u_{1,1}=\ta_1, \ldots ,u_{1,b_1}=\ta_1$. Then, we generate nondeterministically a word from the set  $\Sigma^{n_\beta}\cap \mathfrak{W}_\beta$ as follows. For $i\in\{2,\ldots,\ell\}$, in increasing order, do: 
\begin{itemize}
\item if $b_i=b_{i-1}$, choose a $j\in [b_1]$; let $u_{i,j}=u_{i-1,j}\ta_i$, and let $u_{i,t}=u_{i-1,t}$ for $t\in [b_i]\setminus \{t\}$,
\item if $b_i<b_{i-1}$ and $d=\Delta(b_i,b_{i-1})=b_{i-1}-b_i$, choose nondeterministically $d$ pairs of blocks $(u_{i-1,t_j}, u_{i-1,t_j+1})$ for $j\in [d]$ (and assume that they are ordered w.r.t. their second index).

We now process the list of blocks $u_{i-1,g}$, with $g\in [b_{i-1}]$ as follows. For $j$ from $1$ to $d$, concatenate the block ending with $u_{i-1,t_j}$, a letter $\ta_i$, and the block starting with $u_{i-1,t_j+1})$ (these blocks were consecutive in our list).

The list of blocks we obtain this way has $b_i$ elements. We define $u_{i,g}$ as the \nth{$g$} block of this list.

\item if $b_i>b_{i-1}$ and $d=\Delta(b_i,b_{i-1})=b_{i}-b_{i-1}$, let first 
$$u'_{i,j}=u_{i-1,j},\mbox{ for }j\in \{1,\ldots, b_{i-1}\}\mbox{ and }u'_{i,b_{i-1}+j}=\ta_{i}\mbox{ for }j\in [d].$$

Define two ordered lists: $L_1$ is the list of the words $u'_{i,j}\mbox{ for }j\in \{1,\ldots, b_{i-1}\}$ (ordered left to right increasingly w.r.t. the index $j$); $L_2$ is the list of the words $u'_{i,b_{i-1}+j}\mbox{ for }j\in [d]$ (ordered left to right increasingly w.r.t. the index $j$). 

For $j\in [b_i]$, choose one of the lists $L_1$ and $L_2$ nondeterministically, remove its first element $u$, define $u_{i,j}=u$.  
\end{itemize}

Finally, let $w=u_{\ell,1}$. It is immediate that $|w|=n_\beta$ and its blocksequence w.r.t. $\sigma_{\Sigma}$ is $\beta$.

Assume now that there is $w\in \Sigma^{n_\beta}\cap \mathfrak{W}_\beta$ that cannot be obtained by this procedure. Clearly, $w$ is a condensed word (otherwise, a word shorter than $n_\beta$ would be in $\mathfrak{W}_\beta$, contradiction). Following the proof of Theorem \ref{ntupel}, we get that in the \nth{$i$} step of the marking sequence $\beta$ on $w$ we need to mark exactly $\Delta(b_i,b_{i-1})$ letters $\ta_i$, for $i\geq 2$. So, we execute the marking sequence $\beta$ on $w$. In the first step, we mark exactly $b_1$ letter $\ta_i$, and no two of them occur on consecutive positions of $w$. Now, assume that till step $i-1$, the blocks we marked can be obtained by the nondeterministic process we described. We now move to step $i$. Assume that $w$ has the marked blocks $u_{i-1,j}$, for $j\in [b_{i-1}]$. If $b_i=b_{i-1}$, we have that the number of letters $\ta_i$ contained in $w$ is exactly $1$. The only possibility is that this letter $\ta_i$ occurs next to an already existing block. If $b_i<b_{i-1}$, we have that the number of letters $\ta_i$ contained in $w$ is exactly $b_{i-1}-b_i$. The only possibility is that all these letter $\ta_i$ connect already existing blocks. If $b_i>b_{i-1}$, we have that the number of letters $\ta_i$ contained in $w$ is exactly $b_{i}-b_{i-1}$. The only possibility is that all these letter $\ta_i$ create new blocks, not connected to the existing ones. But in all cases, the blocks can be created by our nondeterministic algorithm. By induction, we get that $w$ is generated by our algorithm.

To enumerate all the words in $\Sigma^{n_\beta}\cap \mathfrak{W}_\beta$, it is enough to implement our nondeterministic algorithm using backtracking.

{\bf This concludes the proof of item A.}

\medskip 

{\bf We can now show item B.} The proof is quite similar. The only major difference is that, when generating the condensed words in $\mathfrak{W}_{\beta}$ by a nondeterministic algorithm as above, we do not know the exact number of letters we need to insert in each step, but only that they need to be more at least $\Delta(b_i,b_{i-1})$ (in step $i$), and they should not create powers (e.g., $\ta_i^2$ in step $i$).

The algorithm is the following. We define in the first step the blocks $u_{1,1}=\ta_1, \ldots ,u_{1,b_1}=\ta_1$; in this case we do not have any other choice. Then, we generate nondeterministically a condensed word from the set $\mathfrak{W}_\beta$ as follows. For $i\in\{2,\ldots,\ell\}$, in increasing order, do: 
\begin{itemize}
\item Start with the blocks $u_{i-1,j}$, with $j\in [b_{i-1}]$.
\item We first nondeterministically choose several groups of two or more consecutive blocks (of the existing blocks) and concatenate them (in the same order) also putting $\ta_i$ letters between each two blocks; then, we choose several of the current blocks, and we concatenate single $\ta_i$-letters at both their ends; finally, we create several blocks consisting of a single letter $\ta_i$ each, so that in the end we have exactly $b_i$ blocks.
\item The blocks obtained in this way are now the blocks $u_{i,j}$.
\end{itemize}
Clearly, one needs to be careful in making sure that when we want to decrease the number of blocks when moving from step $i-1$ to step $i$, we first concatenate enough blocks so that we have at most $b_i$ blocks after the first nondeterministic step.

Finally, let $w=u_{\ell,1}$.

It is clear that a word obtained by our procedure is in ${\mathfrak W}$ and is condensed. By a proof similar to that from item A, we can show by induction that any condensed word in $\mathfrak{W}_{\beta}$ can be obtained by our nondeterministic algorithm.

To enumerate all the words in $\mathfrak{W}_\beta$, it is enough to implement our nondeterministic algorithm using backtracking.\qed
\end{proof}

\fi

A blocksequence induced by $\sigma_{\Sigma}$ does not determine a word uniquely witnessed by $\ta\tb\tc\tb\ta$ and $\ta\tb\tc\ta$ for the blocksequence $(2,2,1)$. In fact, it does not even determine a print of the words sharing the same blocksequence uniquely. Indeed, for $\beta=(3,6,1)$ the words 
$w=\ta\tc\tb\tc\tb\tc\ta\tc\tb\tc\ta$, $w'=\ta\tb\ta\tb\ta\tc\tb\tc\tb\tc\tb\tc\tb\tc\tb$, and  
$w''=\ta\tc\tb\tc\ta\tc\tb\tc\ta\tc\tb$
are in $\mathfrak{W}_{\beta}$. Moreover, when considering a different marking sequence, these words have different blocksequences. For instance, $(\tc,\ta,\tb)$  leads to the blocksequences
$(5,4,1), (5,7,1)$, and $(5,3,1)$, respectively. Thus, it is to be expected that even if some words have the same blocksequence w.r.t. a marking sequence, they may have different blocksequences w.r.t. other marking sequences, and, consequently, different localities. 
The main difference in the above words are the different roles the letters have: in $w$ the occurrences of $\tb$ are between two occurrences 
of $\ta$ but $\tb$ does not join the $\ta$-blocks whereas in $w'$ all {\em gaps} between 
the $\ta$s are closed by join occurrences of $\tb$; in $w''$ in each {\em gap} between $\ta$s is only one 
occurrence of $\tb$. This observation leads to the following differentiation of
occurrences of letters:  when the letter is marked it may occur adjacent to exactly 
one marked block (neighbouring), it may join two 
blocks (joining), or it may not be adjacent to any marked block (singleton).
Notice that different occurrences of letters may have different roles.

\begin{definition}\label{rules}
Let $\sigma=(y_1,\dots,y_{\ell})$ be a marking sequence of $w\in\Sigma^{\ast}$. 
At stage $i\in[\ell]$, an occurrence of $y_i$ is said to be a \\
- {\em neighbour} if there exist $u_1\in\overline{\Sigma}^{+}$, 
$u_2\in\Sigma^{+}$ and $v_1,v_2\in(\Sigma\cup\overline{\Sigma})^{\ast}$ with 
$w_i=v_1u_1y_iu_2v_2$, $w_i=v_1u_2y_iu_1v_2$, $w_i=v_1u_1y_i$, or $w_i=y_iu_1v_1$,\\
- {\em join} if there exist $u_1,u_2\in\overline{\Sigma}^{+}$
and $v_1,v_2\in(\Sigma\cup\overline{\Sigma})^{\ast}$ with 
$w_i=v_1u_1y_iu_2v_2$,\\
- {\em singleton} if there exist $u_1,u_2\in\Sigma^{+}$, 
and $v_1,v_2\in(\Sigma\cup\overline{\Sigma})^{\ast}$ with 
$w_i=v_1u_1y_iu_2v_2$, $w_i=v_1u_1y_i$, or $w_i=y_iu_2v_2$.\\
A marking sequence $\sigma$ is called {\em neighbourless} for a word $w\in\Sigma^{\ast}$ if in any stage while marking $w$ with $\sigma$ no neighbour occurrences exist. A word $w\in\Sigma^{\ast}$ is called {\em neighbourless} if there exists a neighbourless marking sequence $\sigma$ for $w$.
\end{definition}

Another observation of $w'$ and $w''$ leads to different forms of singletons: the ones occurring between previously marked letters and the ones occurring outside.

\begin{definition}
Let $\sigma=(y_1,\dots,y_{\ell})$ be a marking sequence of $w\in\Sigma^{\ast}$.
The {\em core} of $w$ at stage $i\in[\ell]_{>1}$ is defined as $u\in\Fact(w)$ with $w_i=v_1uv_2$,
$\letters(v_1),\letters(v_2)$ $\subseteq\{y_{i},\dots,y_{\ell}\}$, and $u[1],u[|u|]\in\{y_1,\dots,y_{i-1}\}$.
A singleton occurrence at stage $i\in[\ell]$ of a letter $y_i\in\Sigma$  is called 
{\em separating} (or a {\em separator}) if it is of the form $v_1u_1z_1y_i z_2 u_2v_2$ 
with $u_1,u_2\in\overline{\Sigma}^+$, 
$v_1,v_2\in(\Sigma\cup\overline{\Sigma})^{\ast}$, and $z_1,z_2\in\Sigma^+$.
A singleton occurrence that is not a separator is called {\em satellite}.
\end{definition}

\begin{remark}\label{arrowNotation}
Separators are within the core whereas satellites are to the left or to the right of the core. For convenience we introduce for a given marking sequence $(y_1,\dots,y_{\ell})$ of a word $w\in\Sigma^{\ast}$ the notations $\overrightarrow{y_i}$ and $\overleftarrow{y_i}$ as arbitrary elements from the sets 
$\{y_ix|\,x\in\{y_{i+1},\dots,y_{\ell}\}^+\}$ and $\{xy_i|\,x\in\{y_{i+1},\dots,y_{\ell}\}^+\}$ respectively, for all $i\in[\ell]$, if $\overrightarrow{y_i}$ ($\overleftarrow{y_i}$ resp.) is a factor $w[j_1\dots j_2]$ of $w$ 
and additionally with $w[j_2+1]\leq_{\sigma}y_i$ ($w[j_1-1]\leq_{\sigma}y_i$ resp.) if $\overrightarrow{y_i}$ ($\overleftarrow{y_i}$ resp.) is not a suffix (prefix resp.) of $w$. By $\overrightarrow{y_i}^m$ we denote a word containing of $m\in\N_0$ possibly different occurrences of $\overrightarrow{y_i}$ (analogously for $\overleftarrow{y_i}$), e.g. $\tb\tc\td\tb\tc$ could be abbreviated by $\overrightarrow{\tb}^2$.
With this notation, the palindromic structure of $k$-local words already mentioned in \cite{FSTTCS} becomes clearer
in this context: if $c_i$ is the core at stage $i$, the word is of the form 
$\overrightarrow{y_\ell}^{s_\ell}\dots \overrightarrow{y_i}^{s_i} c_i \overleftarrow{y_i}^{r_1}\dots
\overleftarrow{y_\ell}^{r_{\ell}}$
with $s_j,r_j\in\N_0$, $j\in[l]$.
\end{remark}

As we have seen, the blocksequence does not provide much information. Therefore we refine the sets $\mathfrak{W}_{\beta}$ by sequences containing, as well, information on the different types of occurrences we have just introduced. 

\begin{definition}
Let $\sigma=(y_1,\dots,y_{\ell})$ be a marking sequence of $w\in\Sigma^{\ast}$. Define the {\em join-sequence} $\iota_w=(j_1,\dots,j_{\ell-2})$ such that $j_i$ is the number of join occurrences of $y_{i+1}$ and the {\em separator-sequence} $\zeta_w=(s_1,\dots,s_{\ell-2})$ such that $s_i$ is the number of separating occurrences of $y_{i+1}$. Finally define the {\em extended blocksequence} ($\ebs$) by $\gamma_w=(\beta_w,\iota_w,\zeta_w)$ w.r.t. $\sigma$.
\end{definition}

Consider the word $w=\ta\tb\ta\td\tb\tc\tb\td\ta\tc\tb\td\tc$ marked with $\sigma_{\Sigma}$. Regarding $\tb$, $w[2]$ joins the two $\ta$ in stage 2 and $w[5],w[7]$ are separating the $\ta$s at positions $3$ and $9$. The $\tc$s at position $6$ and $10$ join two marked blocks in stage $3$ but no occurrence of $\tc$ separates two marked blocks. This leads to $\beta=(3,5,4,1)$. Moreover we have $\iota=(1,2)$ and $\zeta=(2,0)$ as join and separating sequence, resp.

\begin{remark}
For a blocksequence $\beta=(b_1,\dots,b_n)$ it suffices to state $n-2$ elements of $\iota$ and $\zeta$ explicitly. Since we only consider condensed words the first letter to be marked creates $b_1$ separate blocks. These occurrences are all satellites. Similarly the last letter joins all remaining gaps between the marked blocks.
\end{remark}

Whereas by Theorem~\ref{ntupel} for each sequence of natural numbers ending with $1$ there exists a word having this sequence as a blocksequence, the same does not hold for $\ebs$: Consider $((2,1,1),(0),(5))$ as an $\ebs$ for a ternary word. Thus we have two occurrences of $\ta_1$. Moreover, we know that marking $\ta_1$ and $\ta_2$ leads to one block and consequently the two $\ta_1$ need to be joined but the join-sequence dictates that we do not have a join-occurrence of $\ta_2$. So there is no word with this $\ebs$, which leads to the introduction of the notion of valid $\ebs$.

\begin{definition}
A triple $\gamma$ of sequences over natural numbers is called a {\em valid $\ebs$}, if there exists a word $w\in\Sigma^{\ast}$ with $\gamma=\gamma_w$.
\end{definition}

Very importantly, we can exactly identify the valid $\ebs$.
\ifpaper 
\begin{theorem}[$\ast$]
\else    
\begin{theorem}
\fi
\label{validity}
A triple $\gamma=(\beta,\iota,\zeta)$ of sequences $\beta=(b_1,\dots,b_{\ell})$, $\iota=(j_1,\dots$, $j_{\ell-2})$, and $\zeta=(s_1,\dots, s_{\ell-2})$ for $\ell\in\N_{\geq 2}$ is a valid $\ebs$ w.r.t. $\sigma_{\Sigma}$ iff $b_{\ell}=1$, $\max\{b_i-b_{i+1},0\}\leq j_{i}\leq b_{i}-1$, and $s_i=0$ if $b_{i+1}-b_{i}-j_i=0$ as well as $b_{i}-j_i+s_i\leq b_{i+1}$ for all $i\in[\ell-2]$. 
\end{theorem}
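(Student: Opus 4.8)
The plan is to prove both directions of the equivalence, with the forward direction ("valid $\ebs$ implies the numerical constraints") being a relatively direct consequence of the combinatorial meaning of joins, separators, and block counts, and the backward direction ("the constraints imply existence of a realizing word") being established by an explicit construction, generalizing the algorithm from the proof of Theorem~\ref{ntupel}.

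For the forward direction, suppose $\gamma = \gamma_w$ for some condensed $w\in\Sigma^\ast$. The condition $b_\ell = 1$ is immediate since $w_\ell$ is fully marked. For the remaining inequalities, fix $i\in[\ell-2]$ and consider the transition from stage $i$ to stage $i+1$, i.e. the marking of $\ta_{i+1}$. Every occurrence of $\ta_{i+1}$ is a neighbour, a join, or a singleton, and among singletons we distinguish separators from satellites. A join occurrence decreases the block count by one, a separator occurrence increases it by one, a satellite occurrence increases it by one, and a neighbour occurrence leaves it unchanged; moreover the new blocks created purely by $\ta_{i+1}$ letters adjacent to no old block (runs of consecutive $\ta_{i+1}$'s not touching an old block) each contribute $+1$. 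Writing the net change $b_{i+1} - b_i$ as (number of new blocks created) minus (number of joins) $=$ $s_i + (\text{satellite-created blocks}) - j_i$, and using that satellite-created blocks are $\geq 0$, we get $b_{i+1} - b_i \geq s_i - j_i$, i.e. $b_i - j_i + s_i \leq b_{i+1}$. The bound $j_i \leq b_i - 1$ holds because a join merges two of the $b_i$ existing blocks, so one cannot perform more than $b_i - 1$ "merging" operations on the current list of blocks in a way that is consistent with each join occurrence actually joining two distinct marked blocks. The bound $j_i \geq \max\{b_i - b_{i+1}, 0\}$ is the dual of the inequality derived in case~2/case~2' of the proof of Theorem~\ref{ntupel}: decreasing the count from $b_i$ to $b_{i+1}$ forces at least $b_i - b_{i+1}$ joins. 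Finally, $s_i = 0$ when $b_{i+1} - b_i - j_i = 0$: the quantity $b_{i+1} - b_i + j_i$ counts the number of newly created blocks (separators plus satellites plus fresh runs), so if it is $0$ there are no new blocks at all, in particular no separators.

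For the backward direction, assume the constraints hold; I would build a word $w$ realizing $\gamma$ by an inductive construction on the stage $i$, maintaining an ordered list of the current marked blocks together with the unmarked "gaps" between and around them, recording for each gap which future letters (from $\{\ta_{i+1},\dots,\ta_\ell\}$) it must still contain. At stage $1$ place $b_1$ isolated $\ta_1$-blocks. At stage $i\to i+1$: use $j_i$ occurrences of $\ta_{i+1}$ to merge $j_i$ suitably chosen adjacent pairs of current blocks (legal since $j_i \leq b_i - 1$ and since $j_i \geq b_i - b_{i+1}$ guarantees we can afterwards still reach exactly $b_{i+1}$); use $s_i$ occurrences as separators, each inserted into the interior of the already-marked core (possible only if the core is nonempty, which is exactly why the constraint $s_i = 0$ when no block survives/exists is needed — care must be taken that after the merges at least two old marked letters remain with a gap between them, i.e. that the core region is available; the inequality $b_i - j_i + s_i \le b_{i+1}$ together with $j_i\le b_i-1$ ensures the bookkeeping closes); and use the remaining $b_{i+1} - b_i + j_i - s_i \geq 0$ occurrences (this count is nonnegative precisely by $b_i - j_i + s_i \le b_{i+1}$) to create fresh satellite blocks to the left/right of the core. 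One then checks that the block count after stage $i+1$ is exactly $b_i - j_i + (\text{new blocks}) = b_{i+1}$, that each occurrence of $\ta_{i+1}$ has the prescribed type, and that the construction never forces two equal consecutive letters (condensedness) — we may always pad with a fresh separating occurrence of the largest letter $\ta_\ell$ if needed, or simply observe that distinct inserted letters are always separated by letters of strictly larger index in the unmarked gaps. At the end ($i=\ell$) the list collapses to a single block, which is the desired $w$, and the last letter $\ta_\ell$ joins the remaining $b_{\ell-1}-1$ gaps, consistently with $b_\ell=1$.

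The main obstacle I expect is the backward construction's bookkeeping: one must simultaneously (i) keep the block count exactly on target at every stage, (ii) guarantee enough "room" inside the core for the $s_i$ separators (this is where the hypothesis $s_i = 0$ if $b_{i+1}-b_i-j_i = 0$, and the interplay with $j_i \le b_i - 1$, is genuinely used), and (iii) maintain condensedness. The cleanest way to handle (ii) is to observe that after performing the $j_i$ merges we have $b_i - j_i$ blocks, and a separator can be placed in any unmarked gap strictly between two marked letters; such a gap exists as long as at least one merged block of length $\geq 2$ (in number of old blocks) exists or, more simply, as long as the core is nonempty — and I would prove the invariant "the core is nonempty at stage $i$ whenever some $s_j>0$ for $j\ge i$ is still to be realized" is maintained by the very constraints assumed. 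Everything else (types of occurrences, final single block) is then a routine verification along the lines of the proof of Theorem~\ref{thm:condensed}.
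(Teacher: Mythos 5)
Your proposal follows essentially the same route as the paper: the forward direction by counting gaps, joins, and separators exactly as the paper does, and the backward direction by the same stage-by-stage construction (joins into the leftmost gaps, all separators into one interior gap, the remaining $b_{i+1}-b_i+j_i-s_i$ occurrences as appended satellites). The bookkeeping device you gesture at for condensedness and for reserving room in each gap is precisely the paper's placeholder trick of writing $w_1=(\ta_1\bullet)^{b_1}$ and only substituting $\ta_\ell$ for the remaining $\bullet$'s at the very end, so your sketch closes in the same way the paper's proof does.
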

\ifpaper
\else
    \begin{proof}
Consider firstly $\gamma$ to be a valid extended blocksequence. Then there exists $w\in\Sigma^n$ with 
$\gamma=\gamma_{\sigma_{\Sigma}}(w)$. Marking $w$ with $\sigma_{\Sigma}$ leads to a single marked block in the end and thus we have $b_{\ell}=1$. Let $i\in[\ell-2]$. At stage $i$ we have $b_{i}$ marked blocks and thus $b_{i}-1$ gaps between the marked blocks. This implies that $w$ marked with $\sigma_{\Sigma}$ cannot have more than $b_{i}-1$ join occurrences of $\ta_{i+1}$, i.e. $j_{i}\leq b_{i}-1$. If $b_{i+1}<b_i$, the number of blocks is decreased by marking $\ta_{i+1}$, i.e. blocks in stage $i$ need to be joined. This implies $j_i\geq b_i-b_{i+1}$. If $b_{i+1}\geq b_i$, the number of blocks is increased or remains the same by marking $\ta_{i+1}$, i.e. $w$ does not need to have join-occurrences of $\ta_{i+1}$.  By $j_i\in\N_0$ we get $\max\{b_i-b_{i+1},0\}\leq j_i$. Marking $w$ with $\sigma_{\Sigma}$ leads to $b_i$ marked blocks in stage $i$ and $b_{i+1}$ marked blocks in stage $i+1$. Thus if $b_{i+1}-b_i-j_i=0$, after marking only the join-occurrences of $\ta_{i+1}$ leads to $b_{i+1}$ blocks. By the definition of separating occurrences follows that marking such an occurrence increases the number of marked blocks. This implies that $w$ does not have separating occurrences of $\ta_{i+1}$, namely $s_i=0$. By the same argument we get that the blocks marked at stage $i+1$ is lower bounded by the amount of blocks marked at stage $i$ minus the join occurrences of $\ta_{i+1}$ plus the separating occurrences of $\ta_{i+1}$. This concludes the first direction.

\medskip

Consider now $\gamma=(\beta,\iota,\zeta)$ with $\beta=(b_1,\dots,b_{\ell})$, $\iota=(j_1,\dots,j_{\ell-2})$, and $\zeta=(s_1,\dots,s_{\ell-2})$ for $\ell\in\N_{\geq 2}$ and the four constraints. We prove that $\gamma$ is a valid extended blocksequence by constructing $w\in\Sigma^{\ast}$ with $\gamma=\gamma_{\sigma_{\Sigma}}(w)$ inductively. Let $\bullet$ be a symbol different from all letters. Define $w_1=(\ta_1\bullet)^{b_1}$.
Let $i\in[\ell-2]_{>1}$ and assume $w_i$ to be constructed. Define $w_{i+1}$ in the following way: firstly replace the first $j_i$ occurrences of $\bullet$ by $\ta_{i+1}$. Then replace the next $\bullet$ by $(\bullet \ta_{i+1})^{s_i}\bullet$. Denote the obtained word by $v$. Now define $w_{i+1}=v$ if $b_{i+1}-(b_i-j_i+s_i)=0$ or as $v(\bullet \ta_{i+1})^{b_{i+1}-(b_i-j_i+s_i)}$ otherwise. For obtaining $w$ from $w_{\ell-1}$ replace all occurrences of $\bullet$ by $\ta_{\ell}$. Notice that $w$ is well-defined by the four constraints. If $w$ is marked with $\sigma_{\Sigma}$ in the first stage we mark $b_1$ blocks since $\ta_1$ only occurs in $w_1$ and is never added in a later step of the construction. Assume that in stage $i$ we have $b_i$ marked blocks, $j_i$ join occurrences and $s_i$ separating occurrences of $\ta_i$. By the definition of $w_{i+1}$ we replaced the first $j_i$ occurrences of $\bullet$ by $\ta_{i+1}$. Since in $w_i$ everything but the occurrences of $\bullet$ are marked in stage $i$, in $w$ we have a single marked block as a prefix that includes all these join-occurrences of $\ta_{i+1}$ including the neighbouring block to the right. The next $\bullet$ in $w_i$ was replaces by $s_i$ separating occurrences of $\ta_{i+1}$ and thus we mark another $s_i$ blocks. If $b_{i+1}-(b_i-j_i+s_i)=0$ we have exactly $b_{i+1}$ marked blocks in $w$ since we did not add anything to $w_{i}$. If $b_{i+1}-(b_i-j_i+s_i)>0$, we added $b_{i+1}-(b_i-j_i+s_i)$ occurrences of $\ta_{i+1}$ to $v$ and thus we have in stage $i+1$ in $w$ exactly $b_i-j_i+s_i+b_{i+1}-b_i+j_i-s_i=b_{i+1}$ marked blocks. This proves $\gamma=\gamma_{\sigma_{\Sigma}}(w)$.\qed
\end{proof}

\fi

\begin{remark}
If $b_i=b_{i+1}$ holds for an $\ebs$ $\gamma$,  the number of occurrences joining existing blocks and singletons creating new blocks of the letter $\ta_{i+1}$ has to be equal and $s_i\leq j_i$. If $j_i = b_i - 1$ (all blocks are joined) then $s_i=0$ and there has to be exactly $j_i$ satellites. For the special case that $b_i=b_{i+1}=1$ there is only one block before marking $\ta_{i+1}$ and there cannot be any joins or separators ($j_i=s_i=0$). Since $b_{i+1}=1$ there can be no satellite occurrence as well and therefore $\ta_{i+1}$ can only occur as a neighbour.
\end{remark}

\begin{definition}
For a valid $\ebs$ $\gamma$ set $\mathfrak{V}_{\gamma}=\{w\in\Sigma^{\ast}|\,\gamma_w=\gamma\}$ and define the equivalence relation $u\sim_{\gamma} v$ if $\gamma_u=\gamma_v$ w.r.t. a given marking sequence $\sigma$.
\end{definition}

Moreover, for a valid $\ebs$ $\gamma$, we can show that all words in $\mathfrak{V}_{\gamma}$ have the same length (in contrast to the words of $\mathfrak{W}_{\beta}$, cf. Theorem~\ref{ntupel}). This will allow us to define later a normal form for each valid $\ebs$.

\ifpaper
\begin{theorem}[$\ast$]
\else
\begin{theorem}
\fi
\label{lengthforebs}
For a valid $\ebs$ $\gamma=((b_1,\dots,b_{\ell-1},1),(j_1,\dots,j_{\ell-2}),\zeta)$, all words in $\mathfrak{V}_{\gamma}$ have length $b_1+b_{\ell-1}-1+\sum_{i=1}^{\ell-1}(b_i-b_{i-1}+2j_{i-1})$.
\end{theorem}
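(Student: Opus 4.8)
The plan is to prove the stronger statement that for a valid $\ebs$ $\gamma$ and every $w\in\mathfrak{V}_\gamma$ the number $|w|_{\ta_i}$ of occurrences of each letter is already determined by $\gamma$; the length $|w|=\sum_{i=1}^{\ell}|w|_{\ta_i}$ then depends only on $\gamma$, and the displayed formula is obtained by regrouping that sum.

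First I would fix $w\in\mathfrak{V}_\gamma$, mark it with $\sigma_\Sigma$, and record, at every stage $i$, how the occurrences of $\ta_i$ split over the four occurrence types of Definition~\ref{rules} and of the definition immediately following it: write $J_i,S_i,R_i,N_i$ for the numbers of join, separating, satellite and neighbour occurrences of $\ta_i$, so that $|w|_{\ta_i}=J_i+S_i+R_i+N_i$. The backbone of the argument is the block-count balance
\[
b_i=b_{i-1}-J_i+S_i+R_i,\qquad b_0:=0,
\]
which holds because a join fuses two distinct marked blocks into one, a separating occurrence and a satellite occurrence each open a new marked block, and a neighbour occurrence is absorbed into an existing block without changing their number. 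The two extreme letters are handled via the Remark following the $\ebs$-definition: since $w$ is condensed the $b_1$ occurrences of $\ta_1$ are pairwise non-adjacent and are all satellites, so $|w|_{\ta_1}=b_1$; and at stage $\ell$ nothing unmarked is left, so $\ta_\ell$ has no separating and no satellite occurrences and closes exactly the $b_{\ell-1}-1$ remaining gaps, giving $|w|_{\ta_\ell}=b_{\ell-1}-1$ up to the neighbour count discussed below. For an inner letter $2\le i\le\ell-1$ the $\ebs$ itself supplies $J_i=j_{i-1}$ and $S_i=s_{i-1}$; the balance identity then forces $R_i=b_i-b_{i-1}+j_{i-1}-s_{i-1}$, and summing the four types the separator contribution cancels, yielding $|w|_{\ta_i}=2j_{i-1}+(b_i-b_{i-1})$, once more up to the neighbour count.

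The last step is elementary bookkeeping: $|w|=b_1+\sum_{i=2}^{\ell-1}\bigl(2j_{i-1}+b_i-b_{i-1}\bigr)+(b_{\ell-1}-1)$, the sum $\sum_{i=2}^{\ell-1}(b_i-b_{i-1})$ telescopes to $b_{\ell-1}-b_1$, and rearranging (and fixing the conventions for the boundary index of the sum) reproduces $b_1+b_{\ell-1}-1+\sum_{i=1}^{\ell-1}(b_i-b_{i-1}+2j_{i-1})$. I would not write out this calculation.

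The step that really requires work --- the main obstacle --- is controlling the neighbour counts: I have to show that the $N_i$ are independent of the chosen $w\in\mathfrak{V}_\gamma$ (for the displayed formula this amounts to $N_i=0$ for $i\ge 2$ and no boundary neighbour for $\ta_\ell$). The natural device is the palindromic/core decomposition recalled in Remark~\ref{arrowNotation}: at stage $i$ the word reads $\overrightarrow{\ta_\ell}^{\,s_\ell}\cdots\overrightarrow{\ta_i}^{\,s_i}\,c_i\,\overleftarrow{\ta_i}^{\,r_i}\cdots\overleftarrow{\ta_\ell}^{\,r_\ell}$, so every occurrence of $\ta_i$ either lies inside the core $c_i$, where, being surrounded by already-marked material, it must act as a join or a separator, or lies strictly outside $c_i$, where it is part of some $\overrightarrow{\ta_j}$ or $\overleftarrow{\ta_j}$ and hence is a satellite. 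Combined with the validity inequalities of Theorem~\ref{validity}, this should leave no room for an \emph{extra} neighbour occurrence in a word realising exactly $\gamma$, and it simultaneously pins down the shapes of $\ta_1$ and $\ta_\ell$. Turning this structural picture into a clean induction on the stages, with the first and last stages argued separately, is the delicate part of the proof.
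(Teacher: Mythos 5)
Your counting argument is, in substance, exactly the paper's proof: the paper also fixes $w\in\mathfrak{V}_{\gamma}$, reads off $|w|_{\ta_1}=b_1$ and $|w|_{\ta_{\ell}}=b_{\ell-1}-1$, and for each inner letter uses the block balance to get $j_{i-1}$ joins, $s_{i-1}$ separators and $b_i-(b_{i-1}-j_{i-1}+s_{i-1})$ satellites, hence $|w|_{\ta_i}=b_i-b_{i-1}+2j_{i-1}$, and then sums (with the same off-by-one in the boundary index of the displayed sum that you also noticed and propose to repair). So the part you ``would not write out'' is fine and coincides with what the paper writes out.

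The step you isolate as the main obstacle, however, cannot be carried out as you plan it: the claim that validity of $\gamma$ forces $N_i=0$ for every $w\in\mathfrak{V}_{\gamma}$ is false, because a neighbour occurrence is invisible to the $\ebs$. Concretely, $w=\ta\tc\tb\tc\ta$ and $w'=\tb\ta\tc\tb\tc\ta$ are both condensed and both have $\gamma_w=\gamma_{w'}=((2,3,1),(0),(1))$ w.r.t.\ $\sigma_{\Sigma}$ --- the leading $\tb$ of $w'$ is a neighbour at stage $2$ and so changes neither the block counts nor $\iota$ nor $\zeta$ --- yet $|w|=5\neq 6=|w'|$. Your dichotomy ``inside the core $\Rightarrow$ join or separator, outside the core $\Rightarrow$ satellite'' also breaks once neighbours are admitted: in $\ta\tb\tc\ta$ the letter $\tb$ lies inside the core at stage $2$ and is a neighbour, and in $w'$ above the prefix $\tb$ is neither of the form $\overrightarrow{\tb}$ nor a satellite, so the palindromic decomposition of Remark~\ref{arrowNotation} that you invoke is itself only available for neighbourless markings. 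The statement is meant to be read under the paper's standing neighbourlessness convention (made explicit in Lemma~\ref{occletter} immediately afterwards and adopted as a blanket assumption at the start of the next section); under that hypothesis $N_i=0$ is given rather than proved, your ``delicate part'' disappears, and the remainder of your argument reduces to the paper's proof.
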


\ifpaper
\else
\begin{proof}
Let $w\in\mathfrak{V}_{\gamma}$ and $\zeta=(s_1,\dots,s_{\ell-2})$.
By the definition of the $\ebs$ we have $|w|_{\ta_1}=b_1$ and $|w|_{\ta_{\ell}}=b_{\ell-1}-1$. Let $i\in[\ell-1]_{>1}$. By $\iota$ and $\zeta$ we know that we have $j_{i-1}$ join occurrences and $s_{i-1}$ separating occurrences of $\ta_i$. The number of satellites of $\ta_i$ can be calculated in the following way: at stage $i$ there are $b_{i-1} - j_{i-1} + s_{i-1}$ blocks marked considering only the previous blocks as well as the join and separator occurrences of the letter $\ta_i$. Since there must be $b_i$ blocks in the end there have to be exactly $b_i - (b_{i-1} - j_{i-1} + s_{i-1})$ satellites of the letter $\ta_i$. Thus we have $b_i-b_{i-1}+j_{i-1}-s_{i-1}+j_{i-1}+s_{i-1}=b_i-b_{i-1}+2j_{j-1}$ occurrences of $\ta_i$. This concludes the proof.\qed
\end{proof}

\fi

We finish this section with a result about neighbourless marking sequences which will be of importance in the following sections.

\ifpaper
\begin{lemma}[$\ast$]
\else
\begin{lemma}
\fi
\label{occletter}
For a given $\ebs$ $\gamma$, if all occurrences of the letters are 
either join-occurrences or singletons, $|w| = |w'|$ and $|w|_x=|w'|_x$ holds for all neighbourless
$w,w'\in\mathfrak{V}_{\gamma}$ and all $x\in\Sigma$.
\end{lemma}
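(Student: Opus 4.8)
The plan is to show that, for a word $w\in\mathfrak{V}_\gamma$ none of whose letter occurrences is a neighbour under the marking sequence $\sigma_\Sigma$ defining $\gamma$, each count $|w|_{\ta_i}$ is determined by $\gamma$ alone, for $i\in[\ell]$. Once this is established, $|w|_x=|w'|_x$ for every $x\in\Sigma$ is immediate, and $|w|=\sum_{x\in\Sigma}|w|_x=|w'|$ follows by summing (in agreement with Theorem~\ref{lengthforebs}). The argument follows the same block-counting bookkeeping as the proof of Theorem~\ref{lengthforebs}: one tracks how the number of marked blocks changes from stage to stage of $\sigma_\Sigma$.

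Concretely, I would fix $1<i<\ell$ and partition the occurrences of $\ta_i$ in $w$ according to Definition~\ref{rules} into joins, neighbours and singletons, the singletons splitting further into separators and satellites. By the definition of the extended blocksequence there are exactly $j_{i-1}$ joins and exactly $s_{i-1}$ separators of $\ta_i$, and by hypothesis there are no neighbours. Since $w$ is condensed, no two occurrences of $\ta_i$ are adjacent, so marking all occurrences of $\ta_i$ at stage $i$ changes the block count additively: a join merges two marked blocks into one (net $-1$), a singleton becomes a fresh length-one marked block (net $+1$), and a neighbour merely extends an existing block (net $0$). Hence $b_i-b_{i-1}$ equals the number of singletons of $\ta_i$ minus $j_{i-1}$, so $\ta_i$ has $b_i-b_{i-1}+j_{i-1}$ singleton occurrences, and therefore $|w|_{\ta_i}=j_{i-1}+(b_i-b_{i-1}+j_{i-1})=b_i-b_{i-1}+2j_{i-1}$. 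The boundary letters are handled with the remarks recorded after the definition of the extended blocksequence: $\ta_1$ occurs $b_1$ times, all as satellites since $w$ is condensed; and $\ta_\ell$, having no neighbour occurrences, can only join, so it must perform exactly $b_{\ell-1}-1$ joins to merge the $b_{\ell-1}$ blocks present after stage $\ell-1$ into the single final block, i.e.\ $|w|_{\ta_\ell}=b_{\ell-1}-1$. Each of these values depends only on $\gamma$.

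Putting this together, $|w|_{\ta_i}=|w'|_{\ta_i}$ for all $i\in[\ell]$, and hence $|w|=\sum_{i\in[\ell]}|w|_{\ta_i}=|w'|$, which is exactly the claim. The step I expect to be the main obstacle is justifying the additivity of the block-count change under simultaneous marking of all occurrences of $\ta_i$: one has to check that near-by occurrences do not interfere --- for instance configurations such as $\overline{u}\,\ta_i\,z\,\ta_i\,\overline{v}$ or $z_1\,\ta_i\,\overline{u}\,\ta_i\,z_2$ with $z,z_1,z_2$ consisting only of yet-unmarked letters, in which both occurrences are neighbours and the combined net change is still $0$ --- and to observe that it is precisely condensedness (the absence of a factor $\ta_i\ta_i$) that forces the three occurrence types to contribute their stated $-1$, $0$, $+1$ independently, so that their sum is exactly $b_i-b_{i-1}$. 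The rest is the same routine counting as in the proof of Theorem~\ref{lengthforebs}.
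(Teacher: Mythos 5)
Your proposal is correct and follows essentially the same route as the paper: the paper's proof also argues that the numbers of joins, separators and satellites of each letter are fixed by $\gamma$ (the satellite count being recovered from the block-count bookkeeping, exactly as in the proof of Theorem~\ref{lengthforebs}), so that with no neighbours and the word condensed, all letter multiplicities and hence the length are determined. Your version merely makes explicit the additivity of the block-count change (join $-1$, singleton $+1$) that the paper leaves implicit.
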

\ifpaper
\else
    \begin{proof}
Consider the valid extended blocksequence $\gamma=(\beta,\iota,\zeta)$ with the blocksequence $\beta=(b_1,\dots,b_n)$, the sequence of joins $\iota=(j_1,\dots,j_{n-2})$, and the sequence of separators $\zeta=(s_1,\dots,s_{n-2})$ for $n\in\N_{\geq 2}$. Since $w,w'\in\mathfrak{V}_{\gamma}$ they have exactly $b_1$ (the first element of the block sequence) occurrences of the first letter and 
the same number of join-occurrences and singletons (explicitly given by the number of separators and the next number of blocks) in every marking stage that follows.
Because there are no neighbours and the words are condensed $|w|_x=|w'|_x$ holds for all $x\in\Sigma$ and therefore also $|w| = |w'|$.\qed
\end{proof}

\fi

\section{Neighbourless Marking Sequences and a Normal Form}
As seen in the previous section, adding neighbours does not change the blocksequence. For this reason, we restrict ourselves to words that are neighbourless w.r.t. $\sigma_{\Sigma}$. In this section, we firstly present some results regarding neighbourless marking sequences before we present a normal form for $\mathfrak{V}_{\gamma}$ for a valid $\ebs$ $\gamma$.

\ifpaper
\begin{theorem}[$\ast$]
\else
\begin{theorem}
\fi
\label{orderofletters}
Given a word $w\in\Sigma^n$, a marking sequence $\sigma=(y_1,\dots,y_{\ell})$ is neighbourless for $w$ iff  
$w[1]<_{\sigma}w[2]$, $w[n]<_{\sigma}w[n-1]$, and
for all $i\in[\lfloor\frac{n}{2}\rfloor-1]_{>1}$ we have $w[2i]>_{\sigma}w[2i+1]$ and $w[2i-1]<_{\sigma}w[2i]$.
\end{theorem}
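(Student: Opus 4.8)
The plan is to reduce neighbourlessness to a purely local, position-by-position condition and then to read off the claimed global alternation pattern. The key observation is that a marking sequence examines each occurrence of each letter exactly once, namely at the stage at which its letter is marked; hence $\sigma$ is neighbourless for $w$ if and only if, for every position $p\in[n]$, the occurrence at $p$ fails to be a neighbour at the stage when $w[p]$ gets marked. By Definition~\ref{rules}, whether that occurrence is a neighbour depends only on which of the (at most two) adjacent positions carry a letter that is strictly $\sigma$-smaller than $w[p]$, i.e.\ already marked at that stage. Since $w$ is condensed we have $w[p-1]\neq w[p]\neq w[p+1]$, so each adjacent letter is unambiguously either $<_\sigma w[p]$ or $>_\sigma w[p]$, and the analysis will be a clean finite case distinction on the shape of $p$'s neighbourhood.

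The first real step is to unwind Definition~\ref{rules} at each position. For $p=1$ the occurrence fits the neighbour shape (a marked block to its right) exactly when $w[2]<_\sigma w[1]$, and it is a singleton otherwise, so it is \emph{not} a neighbour iff $w[1]<_\sigma w[2]$. Symmetrically the occurrence at $p=n$ is not a neighbour iff $w[n]<_\sigma w[n-1]$. For an interior position $1<p<n$ the four neighbour shapes say precisely that exactly one of $w[p-1],w[p+1]$ is $<_\sigma w[p]$ (one side lies in a marked block, the other does not); if both are $<_\sigma w[p]$ the occurrence is a join, and if both are $>_\sigma w[p]$ it is a singleton, so these three cases are mutually exclusive and exhaustive. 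Consequently the occurrence at an interior $p$ is not a neighbour iff $w[p-1]$ and $w[p+1]$ lie on the same side of $w[p]$ with respect to $\leq_\sigma$, equivalently $w[p]$ is a $\leq_\sigma$-local extremum of the triple $(w[p-1],w[p],w[p+1])$.

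Combining the cases, $\sigma$ is neighbourless for $w$ iff $w[1]<_\sigma w[2]$, $w[n]<_\sigma w[n-1]$, and every interior letter is a $\leq_\sigma$-local extremum of its triple. I would then show by an easy induction on $p$ that, starting from $w[1]<_\sigma w[2]$, the interior conditions force the strict zigzag $w[1]<_\sigma w[2]>_\sigma w[3]<_\sigma w[4]>_\sigma\cdots$ (if $w[p-1]<_\sigma w[p]$, the extremum condition at $p$ forces $w[p+1]<_\sigma w[p]$, and symmetrically). The boundary condition at $p=n$ then forces the last comparison, between $w[n-1]$ and $w[n]$, to be a descent, which pins down the parity of $n$. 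Conversely this zigzag makes every occurrence a join, an interior singleton, or a boundary singleton, hence never a neighbour, so it is exactly equivalent to neighbourlessness. Reading the zigzag off separately along even and odd positions yields precisely the stated inequalities $w[2i-1]<_\sigma w[2i]$ and $w[2i]>_\sigma w[2i+1]$, with the two displayed boundary inequalities accounting for the extreme ascent and descent; what remains is only to align the index range with the parity of $n$ and to dispose of the degenerate cases $n\leq 3$ by inspection.

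The only genuinely delicate point is the interior case of the second step: one must verify carefully that the four syntactic shapes defining a neighbour in Definition~\ref{rules} (together with their left/right and boundary variants) really do coincide with ``exactly one adjacent letter is already marked'', and that the ``join'' and ``singleton'' shapes cover the two complementary configurations without overlap. Once that correspondence is secured, the induction that produces the zigzag and the translation into the indexed inequalities are routine bookkeeping.
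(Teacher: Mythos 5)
Your proposal is correct and follows essentially the same route as the paper: both arguments rest on the observation that an interior occurrence is a neighbour precisely when exactly one adjacent letter is marked earlier, so neighbourlessness is equivalent to every interior letter being a $\leq_\sigma$-local extremum together with the two boundary ascents, which is exactly the stated zigzag. The paper phrases the forward direction as a minimal-counterexample case analysis rather than your induction along positions, but the content is the same.
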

\ifpaper
\else\begin{proof}
Let first $\sigma$ be a neighbourless marking sequence for $w$. If $w[2]$ would be marked before $w[1]$ we had with $w[1]$ a neighbouring occurrence. Analogously we have $w[n]<_{\sigma}w[n-1]$.  Suppose that there exists an $i\in[\lfloor\frac{n}{2}-1\rfloor]_{>1}$ with $w[2i]<_{\sigma} w[2i+1]$ or $w[2i-1]>_{\sigma} w[2i]$ (equality is excluded since $w$ is condensed). Choose $i$ minimal. We have to consider three cases.\\
\textbf{case 1:} $w[2i]<_{\sigma} w[2i+1]$ and $w[2i-1]>_{\sigma} w[2i]$.\\
In this case the letters $w[2i-3],w[2i-2],w[2i-1],w[2i]$ are of interest. We know that $w[2i-3]$ is marked before $w[2i-2]$ by the minimality of $i$ and $w[2i]$ is marked before $w[2i-1]$ which is marked before $w[2i-2]$ by the case-constraint. This implies that either $w[2i-2]$ is marked while $w[2i-1]$ is unmarked and thus $w[2i-2]$ is a neighbour or vice versa.\\
\textbf{case 2:} $w[2i]<_{\sigma} w[2i+1]$ and $w[2i-1]<_{\sigma} w[2i]$ \\
In this case $w[2i]$ is marked when $w[2i+1]$ ($w[2i-1]$) is already marked and $w[2i-1]$ ($w[2i+1]$) is still unmarked and thus $w[2i]$ is a neighbouring occurrence.\\
\textbf{case 3:} $w[2i]\geq_{\sigma} w[2i+1]$ and $w[2i-1]>_{\sigma} w[2i]$\\
This case is similar to case 2.\\
Since we get a contradiction in all cases, the $\Rightarrow$-direction is proven.

\medskip

Assume for the other direction that the constraints hold. Let $j\in[n]$. The constraints ensure that $w[j-1]$ and $w[j+1]$ are either both marked before $w[j]$ and thus $w[j]$ is join occurrence or both are marked after $w[j]$ and thus $w[j]$ is a separator. Hence, $\sigma$ is neighbourless.\qed
\end{proof}

\fi

\begin{remark}
By Theorem~\ref{orderofletters}, only words of odd length can be neighbourless.
\end{remark}

The following two results are of algorithmic nature We use the standard computational model RAM with logarithmic word-size (see, e.g., \cite{KarkkainenSB06}). Following a standard assumption from stringology (see, e.g., \cite{KarkkainenSB06}), if $w$ is the input word for our algorithms, we can assume that $\Sigma=\letters(w)=\{1,2,\ldots,\ell\}$ with $\ell \leq |w|$. 
Since we restrict ourselves to neighbourless words and marking sequences, we show how to check in linear time whether a word is neighbourless and, if that is the case, how the $\ebs$ can be computed within the same time complexity. 

\ifpaper
\begin{proposition}[$\ast$]
\else
\begin{proposition}
\fi
\label{nlcheck}
We can check whether a condensed word $w\in\Sigma^n$ is neighbourless, and compute a neighbourless marking sequence, in $O(n)$ time.
\end{proposition}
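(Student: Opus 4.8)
The plan is to reduce the question to a single topological sort on a graph whose vertices are the letters of $w$. By Theorem~\ref{orderofletters}, a marking sequence $\sigma$ is neighbourless for $w$ if and only if the induced total order $<_\sigma$ on $\letters(w)$ satisfies a fixed collection of $O(n)$ precedence constraints, each of the form $w[j]<_\sigma w[k]$ for two explicitly determined positions: $w[1]<_\sigma w[2]$, $w[n]<_\sigma w[n-1]$, and, for each relevant index $i$, both $w[2i-1]<_\sigma w[2i]$ and $w[2i+1]<_\sigma w[2i]$. Since $w$ is condensed, the two letters in each constraint are distinct, so every constraint is nontrivial. I would build a directed graph $G$ whose vertex set is $\letters(w)$ and which contains an arc $a\to b$ for every constraint requiring $a<_\sigma b$. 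By construction, a total order on $\letters(w)$ satisfies all the constraints precisely when it is a topological ordering of $G$; hence such an order exists if and only if $G$ is acyclic, and in that case any topological ordering of $G$ is a neighbourless marking sequence for $w$ (it is, in particular, an enumeration of $\letters(w)=\Sigma$, as a marking sequence must be).

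This gives the algorithm. First scan $w$ once from left to right, adding to the adjacency lists of $G$ the arc corresponding to each of the $O(n)$ constraints; since $\Sigma=\letters(w)=\{1,\dots,\ell\}$ with $\ell\le n$, this costs $O(n)$ time and space using direct array addressing, and we make no attempt to delete parallel arcs, as they do not affect correctness and there are still only $O(n)$ of them. Then run a standard linear-time topological sort (e.g., Kahn's algorithm) on $G$ in time $O(|V|+|E|)=O(\ell+n)=O(n)$: if it reports a cycle, output that $w$ is not neighbourless; otherwise output the computed order, which by the previous paragraph is a neighbourless marking sequence. Correctness is immediate from the equivalence above, and the total running time is $O(n)$.

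Finally, a word on degenerate cases and on where care is needed. The case $n=1$ is trivial (no constraints; return the one-letter sequence), and for $n=2$ the two boundary constraints $w[1]<_\sigma w[2]$ and $w[2]<_\sigma w[1]$ already form a $2$-cycle, so the algorithm correctly reports non-neighbourlessness, matching the observation after Theorem~\ref{orderofletters} that neighbourless words have odd length. I do not expect a genuine obstacle here: once Theorem~\ref{orderofletters} is available, the result is a direct reduction to acyclicity testing. The only points that require attention are (i) reading off the exact constraint list from Theorem~\ref{orderofletters}, including the handling of the first and last positions and the parity of $n$, and (ii) keeping the graph construction strictly linear — in particular, not deduplicating arcs by pairwise comparison, which could blow up to $\Theta(n^2)$.
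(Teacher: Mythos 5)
Your proof is correct and follows essentially the same route as the paper: encode the precedence constraints of Theorem~\ref{orderofletters} as arcs of a directed graph on $\letters(w)$ and run a linear-time topological sort, reporting non-neighbourlessness exactly when the graph has a cycle. The only (immaterial) differences are your choice of Kahn's algorithm instead of the DFS-based sort and your explicit treatment of the even-length and degenerate cases, which the paper dispenses with by assuming $n$ odd.
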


\ifpaper
\else \begin{proof}
We can assume $n=|w|$ is odd. Firstly we build a directed graph based on Theorem~\ref{orderofletters}.
Define $G_w=(\Sigma,E)$ with $E\subset\{(\ta,\tb)|\,\ta,\tb\in\Sigma\}$ as follows. Firstly, add the directed edges $(w[1],w[2])$ and $(w[n],w[n-1])$to $E$. 
Then, for all $i\in[\lfloor\frac{n}{2}\rfloor-1]_{>1}$ we add the edges $(w[2i+1],w[2i])$ and $(w[2i-1],w[2i])$.
Intuitively, we have an edge $(\ta,\tb)$ if and only if we would need to have $\ta <_\sigma \tb$ in any neighbourless marking sequence $\sigma$ for $w$. 

To find such a neighbourless marking sequence $\sigma$, it is enough to find the linear ordering of the vertices of $V$ such that for every directed edge $(\ta,\tb)$ from vertex $\ta$ to vertex $\tb$, $\ta$ comes before $\tb$ in the respective ordering. Such a sequence can be found using a standard topological sorting algorithm based on the depth-first search (DFS). Such an algorithm produces successfully a linear ordering of the vertices of $G_w$ (and, as such, a neighbourless marking sequence $\sigma$ for $w$) if and only if $G_w$ is a directed acyclic graph (DAG). 
The time complexity of this algorithm is $O(\ell+|E|)=O(n)$.\qed
\end{proof}

\fi

\ifpaper
\begin{proposition}[$\ast$]
\else
\begin{proposition}
\fi
\label{ebs}
Given a condensed word $w\in\Sigma^n$ and a neighbourless marking sequence $\sigma$ for $w$, the $\ebs$ of $w$ w.r.t $\sigma$ can be computed in $O(n)$ time.
\end{proposition}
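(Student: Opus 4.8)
The plan is to compute the whole $\ebs$ with a constant number of left-to-right scans of $w$, after reducing each of its three components to a local test on positions together with two ``running-minimum'' arrays. As a first step I would use the standard assumption $\Sigma=\letters(w)=\{1,\dots,\ell\}$ with $\ell\le n$ to build, in $O(\ell)$ time, the array $\mathrm{st}$ with $\mathrm{st}[c]=i$ whenever $c=y_i$ in $\sigma=(y_1,\dots,y_\ell)$, and then in $O(n)$ time the array $a[1..n]$ with $a[p]=\mathrm{st}[w[p]]$, so that $a[p]$ is the stage at which position $p$ becomes marked. Since $w$ is condensed, consecutive entries of $a$ are distinct, hence for every internal position $p$ the left neighbour $w[p-1]$ is marked strictly before $w[p]$ iff $a[p-1]<a[p]$, and symmetrically on the right. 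In two further $O(n)$ scans I would compute the prefix minima $\mathrm{pm}[p]=\min\{a[q]\mid 1\le q<p\}$ and the suffix minima $\mathrm{sm}[p]=\min\{a[r]\mid p<r\le n\}$, with $\min\emptyset=+\infty$.

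For the blocksequence, note that a marked block at stage $i$ is a maximal run of positions $p$ with $a[p]\le i$, so $b_i$ equals the number of block-start positions at stage $i$, namely positions $p$ with $a[p]\le i$ and ($p=1$ or $a[p-1]>i$). Setting $a[0]=\ell+1$, a fixed position $p$ is a block start precisely at the stages $i$ with $a[p]\le i<a[p-1]$, a non-empty range exactly when $a[p]<a[p-1]$. I would therefore maintain a difference array $D$ over $\{1,\dots,\ell+1\}$, and for every $p$ with $a[p]<a[p-1]$ increment $D[a[p]]$ and decrement $D[a[p-1]]$; a single prefix-sum pass then recovers all values $b_i=\sum_{j\le i}D[j]$ at once (in particular $b_\ell=1$, as it must be). This costs $O(n+\ell)=O(n)$.

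For $\iota_w$ and $\zeta_w$ I would classify each occurrence $p$, with $i=a[p]$, using only $a[p-1]$, $a[p+1]$, $\mathrm{pm}[p]$ and $\mathrm{sm}[p]$. The occurrence at $p$ is a \emph{join} of $y_i$ iff $p\notin\{1,n\}$, $a[p-1]<i$ and $a[p+1]<i$; it is a \emph{singleton} iff ($p=1$ or $a[p-1]>i$) and ($p=n$ or $a[p+1]>i$); since $\sigma$ is neighbourless, every occurrence falls into exactly one of these two cases. Among singletons, the occurrence at $p$ is \emph{separating} exactly when it lies strictly inside the core at stage $i$; since the core is the factor of $w$ delimited by the leftmost and rightmost positions holding a letter of $\{y_1,\dots,y_{i-1}\}$ (cf.\ Remark~\ref{arrowNotation}), this is equivalent to $\mathrm{pm}[p]<i$ and $\mathrm{sm}[p]<i$, the remaining singletons being satellites. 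Running once through all positions, I accumulate, for every $i\in\{2,\dots,\ell-1\}$, the number of joins of $y_i$ into $j_{i-1}$ and the number of separators of $y_i$ into $s_{i-1}$; occurrences of $y_1$ and $y_\ell$ require no bookkeeping, since $\iota_w$ and $\zeta_w$ only range over the indices $1,\dots,\ell-2$. This is one more $O(n)$ pass, and printing $\beta_w$, $\iota_w$, $\zeta_w$ takes $O(\ell)$ time, so the whole procedure runs in $O(n)$ as claimed.

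The part that needs care is not the running time but verifying that these constant-time local tests really capture Definition~\ref{rules} and the notion of core; the delicate point is the separator test. In $w_i$ an occurrence of $y_i$ can have marked letters on both sides merely because of \emph{other} occurrences of $y_i$, so it is essential that the test asks for a letter marked at a stage strictly below $i$ on each side — equivalently, that the occurrence falls between the extreme positions carrying a letter of $\{y_1,\dots,y_{i-1}\}$, i.e.\ inside the core — and an occurrence failing this on one side is a satellite lying to that side of the core. Once this, together with the boundary conventions at positions $1$ and $n$, is pinned down, correctness of all three components follows by unwinding the definitions (and, for the blocksequence, by the recurrence $b_i=b_{i-1}+|w|_{y_i}-2\cdot(\text{number of joins of }y_i)$ with $b_0=0$, which doubles as a sanity check), while the $O(n)$ bound is immediate from the four linear scans above.
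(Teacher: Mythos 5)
Your proof is correct and follows essentially the same route as the paper's: reduce to the canonical marking sequence by renaming, classify each occurrence as join/singleton from its immediate neighbours' marking stages, distinguish separators from satellites by whether a letter marked strictly before stage $i$ occurs on both sides (your prefix/suffix minima play exactly the role of the paper's $F'$ and $L'$ arrays), and recover $\beta$ from the counts. The only differences are implementation details (a difference array for $\beta$ instead of the recurrence $b_i=b_{i-1}-\iota_{i-1}+\zeta_{i-1}+\zeta'_{i-1}$), and your explicit justification of the separator test is if anything more careful than the paper's.
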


\ifpaper
\else \begin{proof}
Recall that, when discussing algorithms, we work under the assumption that $\Sigma= \{1,2,\ldots,\ell\}$ for some $\ell \leq |w|$. Moreover we can assume w.l.o.g. $\sigma=\sigma_\Sigma$. Otherwise, we could rename the letters of $w$ by replacing $\ta$ by $\sigma^{-1}(\ta)$  ($\sigma$ is a permutation of $\Sigma$, so we can invert it) for all $\ta\in \Sigma$. This means that if $\ta $ is the \nth{i} in the ordering $\sigma$, we replace $\ta$ by $i$. In this way, we obtain a word $w'$ from $w$, which is neighbourless w.r.t. $\sigma_\Sigma$, given that $w$ was neighbourless w.r.t. $\sigma$. 
 
So, in the following, $w \in \Sigma^n$ is a neighbourless word w.r.t. $\sigma=\sigma_\Sigma$, and we want to produce the $\ebs$ of $w$ w.r.t. $\sigma$. 
Firstly, we construct a list-array $\Pos$ of size $\ell$, where $\Pos[\ta]$ stores the list of the positions (ordered from left to right) where $\ta$ occurs in $w$, for all $\ta\in \Sigma$.
Further, we construct the arrays $F$ and $L$ of size $\ell$ to store the first and last occurrence of every letter in $w$. More precisely, $F[\ta]=\min\{i\in [n]\mid w[i]=\ta\}$ and $L[\ta]=\max\{i\in [n]\mid w[i]=\ta\}$, for all $\ta\in \Sigma$. 
Clearly, all these arrays can be computed by going once through $w$. 

Now, we compute the arrays $F'$ and $L'$, both of size $\ell -1$, where $F'[i] = \min(F[1 \ldots (i-1)])$ and $L'[i] = \max(L[1 \ldots (i-1)])$ for $2 \leq i \leq \ell$.
The entry $F'[i]$ represents the first position in $w$ where a letter strictly smaller than $i$ occurs, while $L'[i]$ is the last position where a letter strictly smaller than $i$ occurs. Note that all letters strictly smaller than $i$ are marked before $i$ in our $\sigma$. The arrays $F'$ and $L'$ can be computed by a simple dynamic programming approach in linear time. For instance, $F'$ is computed using the formula $F[1]=n+1$ and $F'[i]=\min\{F'[i-1],F[i-1]\}$ for $i\geq 2$, and $L'$ is computed using the formula $L[1]=0$ and $L'[i]=\max\{L'[i-1],L[i-1]\}$ for $i\geq 2$. Intuitively, before executing the \nth{i} marking step in the sequence $\sigma$, $F'[i]$ is the leftmost marked symbol and $L'[i]$ the rightmost marked symbol. 

Now, we start marking the positions of $w$ following $\sigma$. We first mark the positions in $\Pos[1]$, and count the number of blocks we obtain this way. Further, we explain how we mark the positions in $\Pos[i]$, for $i\geq 2$. We go through the list $\Pos[i]$ left to right. If $j$ is the current element (i.e., position of $w$), we check whether $w[j-1]$ and $w[j]$ are marked. If yes $j$ is a join occurrence. If none of $w[j-1]$ and $w[j]$ are marked, and $F'[i]<j<L'[i]$ (meaning that $w[j]$ occurs between two already marked blocks), then the position $j$ is a separator. Finally, the position $j$ is a satellite, otherwise (as $\sigma$ is neighbourless).

The join sequence $\iota$, the separator sequence $\zeta$, as well as the number of satellite occurrences of each letter, can be calculated directly from the procedure described above. Everything takes clearly $O(n)$ time.

The blocksequence $\beta = (b_1, \cdots,b_{\ell-1} ,1)$ can be calculated from the join and separator sequences with the additional information about the satellite occurrences, similar to Theorem~\ref{lengthforebs}: $b_1$ is the number of occurrences of $1$ and $b_i=b_{i-1} - \iota_{i-1} + \zeta_{i-1} + \zeta'_{i-1}$, where $\zeta'_{i-1}$ is the number of satellites of $i$. The calculations can be done in linear time $O(n+\ell)$. This leads to a total time for the algorithm in $O(n)$.\qed
\end{proof}

\fi

From now on, we will assume that $w\in\Sigma^n$, for $n\in\N$, is neighbourless w.r.t $\sigma_{\Sigma}$. For each valid $\ebs$ 
$\gamma$ we define a normal form $w_{\gamma}\in \mathfrak{V}_{\gamma}$ according to Theorem~\ref{validity} such that $w_{\gamma}$ is neighbourless w.r.t. $\sigma_{\Sigma}$.

\begin{definition}\label{defnormalform}
For a valid $\ebs$ $\gamma=(\beta,\iota,\zeta)$ with $\beta=(b_1,\dots,b_{\ell})$, $\iota=(j_1,\dots$, $j_{{\ell}-2})$, and $\zeta=(s_1,\dots,s_{{\ell}-2})$ define $w_{\gamma}$ by $(v_i)_{i\in[\ell]}$ with $v_{\ell}=w_{\gamma}$ with
$v_1=(\ta_1\bullet)^{b_1}$ and for $i\in[\ell-2]_{>1}$ define $v_{i+1}$ as follows: firstly replace the first $j_i$ occurrences of $\bullet$ by $\ta_{i+1}$, then replace the next $\bullet$ by $(\bullet \ta_{i+1})^{s_i}\bullet$, denote the obtained word by $v$ and set $v_{i+1}$ to $v$ if $b_{i+1}-(b_i-j_i+s_i)=0$ or to $v(\bullet \ta_{i+1})^{b_{i+1}-(b_i-j_i+s_i)}$ otherwise. For obtaining $v_{\ell}$ from $v_{\ell-1}$ replace all occurrences of $\bullet$ by $\ta_n$.
\end{definition}

\begin{remark}
For each word $w\in\Sigma^{\ast}$ which is neighbourless w.r.t $\sigma_{\Sigma}$ we have a corresponding valid $\ebs$ $\gamma_w$. 
Thus, one can define the word $w_{\gamma_w}\in  \mathfrak{V}_{\gamma_w}$, as in Definition \ref{defnormalform}. This word will be called the normal form of $w$ w.r.t. the marking sequence $\sigma_\Sigma$ and the corresponding $\ebs$~$\gamma_w$.
\end{remark}

In the following part, we present three rules with which the normal form of a given $w\in\Sigma^{\ast}$ can be obtained.

\begin{definition}
Let $\gamma$ be a valid $\ebs$ and $i\in[\ell]_{>1}$. Define the following rules:\\
\textbf{Filling the leftmost gaps with joins $R_1$}: For $w=x_1\ta_{k_1}u\ta_{k_2}v\ta_{k_3}x_2$ with $u,v\in\Sigma^+$, $k_1,k_2,k_3<i$ and $x_1,x_2\in(\Sigma\cup\overline{\Sigma})^{\ast}$ define the application of $R_1$ by $w'=x_1\ta_{k_1}v\ta_{k_2}u\ta_{k_3}x_2$ (see Fig.~\ref{figr1}).\\
\begin{figure}[h]
\centering
\begin{tikzpicture}[font=\sffamily, thick, scale=0.4, every node/.style={transform shape}]
\fill[lightgray](-1,0) rectangle (0.6,1);
\draw[black, pattern=north east lines, pattern color=black] (-1,0) rectangle (0.6,1);
\node at (-0.2,-0.5) {\huge $x_1$};
\draw (-1,1) -- (-1,-1);
\fill(0.6,0) rectangle (1.5,1);
\node at (1.05,-0.5) {\Large $\ta_{k_1}$};
\draw (0.6,1) -- (0.6,-1);
\fill[lightgray](1.5,0) rectangle (3.5,1);
\node at (2.5,-0.5) {\huge $u$};
\draw (1.5,1) -- (1.5,-1);

\fill(3.5,0) rectangle (4.4,1);
\node at (3.95,-0.5) {\Large $\ta_{k_2}$};
\draw (3.5,1) -- (3.5,-1);

\fill[lightgray](4.4,0) rectangle (7,1);
\node at (5.8,-0.5) {\huge $v$};
\draw (4.4,1) -- (4.4,-1);

\fill(7,0) rectangle (7.9,1);
\node at (7.45,-0.5) {\Large $\ta_{k_3}$};
\draw (7,1) -- (7,-1);

\fill[lightgray](7.9,0) rectangle (10,1);
\draw[black, pattern=north east lines, pattern color=black] (7.9,0) rectangle (10,1);
\draw (7.9,1) -- (7.9,-1);
\node at (8.95,-0.5) {\huge $x_2$};
\draw (10,1) -- (10,-1);

\node (1) at (2.5,1.5) {};
\node (2) at (5.8,1.5) {};
 \path[triangle 60-triangle 60,line width=0.5mm]
    (2) edge[bend right] node [left] {} (1);

 \end{tikzpicture}
\caption{Appl. of $R_1$: dark is marked, light is unmarked, shaded contains both kinds.}\label{figr1}
\end{figure}
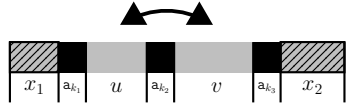
\noindent\textbf{All separators in one gap $R_2$}: Consider $w=x_1z_1u\overrightarrow{\ta_i}^{k_1}z_2v
\overrightarrow{\ta_i}^{k_2}z_3x_2$ with
$x_1,x_2\in(\Sigma\cup\overline{\Sigma})^{\ast}$, $u,v\in\{\ta_{i+1},\dots,\ta_n\}^+$,  $z_1,z_3\in\{\ta_1,\dots,\ta_{i-1}\}$, $z_2\in(\Sigma\cup\overline{\Sigma})^+$ with $z_2[1],z_2[|z_2|]\in\overline{\Sigma}$ and with $\overrightarrow{\ta_i}^{k_1}=\ta_i t_1\ta_i t_2\dots \ta_i t_{k_1}$ and $\overrightarrow{\ta_i}^{k_2}=\ta_i t_1'\ta_it_2'\dots \ta_it_{k_2}'$. For an application of $R_2$ chose $m_1,m_2\in\N$ with $m_1+m_2=k_1+k_2$ and $r_1,\dots,r_{m_1}\in\{t_1,\dots,t_{k_1},t_1',\dots,t_{k_2}'\}$ different as well as $r_1',\dots,r_{m_2}'$ with $\{r_1',\dots,r_{m_2}'\}=\{t_1,\dots,t_{k_1},t_1',\dots,t_{k_2}'\}\backslash\{r_1,\dots,r_{m_1}\}$. Set $p=\ta_ir_1\ta_ir_2\dots \ta_ir_{m_1}$ and $p'=\ta_ir_1'\ta_ir_2'\dots \ta_ir_{m_2}'$. Then the application of $R_2$ to $w$ results in $w'=x_1z_1upz_2vp'z_3x_2$ (see Fig.~\ref{figr2}).\\
\begin{figure}[h]
\centering
\begin{tikzpicture}[font=\sffamily, thick, scale=0.4, every node/.style={transform shape}]

\fill[lightgray](-1,0) rectangle (0.6,1);
\draw[black, pattern=north east lines, pattern color=black] (-1,0) rectangle (0.6,1);
\node at (-0.2,-0.5) {\huge $x_1$};
\draw (-1,1) -- (-1,-1);
\fill(0.6,0) rectangle (1.5,1);
\node at (1.05,-0.5) {\huge $z_1$};
\draw (0.6,1) -- (0.6,-1);
\fill[lightgray](1.5,0) rectangle (2.5,1);
\node at (2,-0.5) {\huge $u$};
\draw (1.5,1) -- (1.5,-1);
\fill[lightgray](2.5,0) rectangle (5,1);
\fill(2.5,0) rectangle (2.7,1);
\fill(3.1,0) rectangle (3.3,1);
\fill(4,0) rectangle (4.2,1);
\draw (2.5,1) -- (2.5,-1);
\node at (3.75,-0.5) {\Large $\overrightarrow{\ta_i}^{k_1}$};
\fill[lightgray](5,0) rectangle (7,1);
\draw[black, pattern=north east lines, pattern color=black] (5,0) rectangle (7,1);
\fill[black](5,0) rectangle (5.2,1);
\fill[black](6.8,0) rectangle (7,1);
\draw (5,1) -- (5,-1);
\node at (6,-0.5) {\huge $z_2$};
\fill[lightgray](7,0) rectangle (8,1);
\draw (7,1) -- (7,-1);
\node at (7.5,-0.5) {\huge $v$};
\fill[lightgray](8,0) rectangle (10,1);
\fill(8,0) rectangle (8.2,1);
\fill(9.2,0) rectangle (9.4,1);
\draw (8,1) -- (8,-1);
\node at (9,-0.5) {\Large $\overrightarrow{\ta_i}^{k_2}$};
\fill(10,0) rectangle (10.9,1);
\draw (10,1) -- (10,-1);
\node at (10.45,-0.5) {\huge $z_3$};
\fill[lightgray](10.9,0) rectangle (12,1);
\draw[black, pattern=north east lines, pattern color=black] (10.9,0) rectangle (12,1);
\draw (10.9,1) -- (10.9,-1);
\node at (11.5,-0.5) {\huge $x_2$};
\draw (12,1) -- (12,-1);

\node (1) at (4,1.5) {};
\node (2) at (9,1.5) {};
 \path[triangle 60-triangle 60,line width=0.5mm]
    (2) edge[bend right] node [left] {} (1);

 \end{tikzpicture}
\caption{Appl. of $R_2$: dark is marked, light is unmarked, shaded contains both kinds.}\label{figr2}
\end{figure}
\noindent\textbf{Moving satellites $R_3$}: For $w=x_1\overrightarrow{\ta_i}^r c_i \overleftarrow{\ta_i}^s x_2$ with $r,s \in\N_0$, $x_1,x_2\in(\Sigma\cup\overline{\Sigma})^{\ast}$ and the core $c_i$ at stage $i$, define the application of 
$R_3$(a) by $w'=x_1     (\overleftarrow{\ta_i}^s)^R \overrightarrow{\ta_i}^r     c_ix_2$ and of 
$R_3$(b) by $w'=x_1 c_i \overleftarrow{\ta_i}^s     (\overrightarrow{\ta_i}^r)^R    x_2$ 
(see Fig.~\ref{fig3}).

\begin{figure}[h]
\centering
\begin{tikzpicture}[font=\sffamily, thick, scale=0.4, every node/.style={transform shape}]

\fill[lightgray](0,5) rectangle (1.5,6);
\draw[black, pattern=north east lines, pattern color=black] (0,5) rectangle (1.5,6);
\node at (0.75,4.5) {\huge $x_1$};
\draw (0,6) -- (0,4);

\fill[lightgray](1.5,5) rectangle (5,6);
\fill(1.5,5) rectangle (1.7,6);
\fill(2.7,5) rectangle (2.9,6);
\fill(3.5,5) rectangle (3.7,6);
\node at (2.75,4.5) {\huge $\overrightarrow{\ta_i}^r$};
\draw (1.5,6) -- (1.5,4);
%
\fill[lightgray](4,5) rectangle (7,6);
\draw[black, pattern=north east lines, pattern color=black] (4,5) rectangle (7,6);
\fill(4,5) rectangle (4.2,6);
\node at (5.5,4.5) {\huge $c_i$};
\fill(6.8,5) rectangle (7,6);
\draw (4,6) -- (4,4);

\fill[lightgray](7,5) rectangle (9,6);
\fill(7.8,5) rectangle (8,6);
\fill(8.8,5) rectangle (9,6);
\node at (8,4.5) {\huge $\overleftarrow{\ta_i}^s$};
\draw (7,6) -- (7,4);

\fill[lightgray](9,5) rectangle (11,6);
\draw[black, pattern=north east lines, pattern color=black] (9,5) rectangle (11,6);
\node at (10,4.5) {\huge $x_2$};
\draw (9,6) -- (9,4);
\draw (11,6) -- (11,4);
\fill[lightgray](13,5) rectangle (14.5,6);
\draw[black, pattern=north east lines, pattern color=black] (13,5) rectangle (14.5,6);
\node at (13.75,4.5) {\huge $x_1$};
\draw (13,6) -- (13,4);

\fill[lightgray](14.5,5) rectangle (17,6);
\fill(14.5,5) rectangle (14.7,6);
\fill(15.7,5) rectangle (15.9,6);
\fill(16.5,5) rectangle (16.7,6);
\node at (15.75,4.5) {\huge $(\overleftarrow{\ta_i}^s)^R$};
\draw (14.5,6) -- (14.5,4);

\fill[lightgray](17,5) rectangle (19,6);
\fill(17,5) rectangle (17.2,6);
\fill(18,5) rectangle (18.2,6);
\node at (18,4.5) {\huge $\overrightarrow{\ta_i}^r$};
\draw (17,6) -- (17,4);

\fill[lightgray](19,5) rectangle (22,6);
\draw[black, pattern=north east lines, pattern color=black] (19,5) rectangle (22,6);
\fill(19,5) rectangle (19.2,6);
\node at (20.5,4.5) {\huge $c_i$};
\fill(21.8,5) rectangle (22,6);
\draw (19,6) -- (19,4);

\fill[lightgray](22,5) rectangle (24,6);
\draw[black, pattern=north east lines, pattern color=black] (22,5) rectangle (24,6);
\node at (23,4.5) {\huge $x_2$};
\draw (22,6) -- (22,4);
\draw (24,6) -- (24,4);

\draw[-{Triangle[width=16pt,length=8pt]}, line width=10pt](11.5,5.5) -- (12.5, 5.5);

 \end{tikzpicture}
\caption{Appl. of $R_3$: dark is marked, light is unmarked, shaded contains both kinds.}\label{fig3}
\end{figure}
\end{definition}

\ifpaper
\begin{theorem}[$\ast$]
\else
\begin{theorem}
\fi
\label{obtainingnormalform}
For all $w\in\Sigma^{\ast}$ there exists a sequence $(r_1,\dots,r_m)$ with $r_i\in\{R_1,R_2,R_3\}$, $i\in[m]$, $m\in\N_0$ such that $w_{\gamma_w}$ is obtained from $w$ w.r.t. $\sigma_{\Sigma}$.
\end{theorem}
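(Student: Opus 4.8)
The plan is to show that starting from an arbitrary neighbourless word $w\in\Sigma^{\ast}$ and applying rules $R_1,R_2,R_3$ in a carefully controlled order, one can transform $w$ into $w_{\gamma_w}$, the normal form associated to its $\ebs$. First I would verify the crucial invariant: each of the three rules preserves the extended blocksequence $\gamma_w$ (with respect to $\sigma_\Sigma$). For $R_1$ this is clear because it only permutes factors lying strictly between already-marked letters, so at every marking stage the number of blocks, the join-occurrences and the separating occurrences are unchanged; for $R_2$ the observation is that redistributing the $\overrightarrow{\ta_i}$-blocks among two existing gaps does not change how many separators $\ta_i$ has, nor the block counts, since $z_2$ stays a marked block; for $R_3$ moving satellites from one side of the core to the other (possibly reversing their order) again leaves all counts intact because satellites, by definition, sit outside the core and never merge with it. Once this invariant is established, any word reached from $w$ by these rules stays in $\mathfrak{V}_{\gamma_w}$, and by Theorem~\ref{lengthforebs} it keeps the same length, which will be useful for a termination/potential argument.

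Next I would describe the normalisation procedure stage by stage, mirroring the inductive construction of $w_{\gamma}$ in Definition~\ref{defnormalform}. Proceed by induction on the marking stage $i$ from $2$ to $\ell$. Assume that after processing stages $2,\dots,i-1$ the word agrees with $w_{\gamma_w}$ on the placement of the letters $\ta_1,\dots,\ta_{i-1}$ (up to the still-unfilled ``gaps'', i.e.\ the positions that will receive letters $\ta_i,\dots,\ta_n$). At stage $i$, the letter $\ta_i$ has $j_{i-1}$ join-occurrences, $s_{i-1}$ separating occurrences, and a determined number of satellites. First apply $R_1$ repeatedly to move all join-occurrences of $\ta_i$ into the leftmost available gaps — each application of $R_1$ swaps two factors between marked letters and, iterated, sorts the join-occurrences to the left exactly as in the normal form. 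Then apply $R_2$ to collect all separating occurrences of $\ta_i$ into a single gap (the ``next'' $\bullet$ after the joins), which is precisely what $R_2$ is designed to do: it merges the separator-blocks from two gaps into one, so finitely many applications concentrate them all. Finally apply $R_3(a)$ to push all satellites of $\ta_i$ to one canonical side (and, via the reversal in $R_3$, into the canonical order), so that the remaining satellites appear to the right of the core exactly as the $(\bullet\ta_{i+1})^{b_{i+1}-(b_i-j_i+s_i)}$ suffix prescribes. After this, stage $i$ is in normal form and, crucially, none of these rules disturbed the already-fixed placements of $\ta_1,\dots,\ta_{i-1}$, because all three rules only rearrange factors whose letters are $\ta_i$ or larger, or permute whole already-marked blocks without changing which letters lie inside them.

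I would then assemble these into the final sequence $(r_1,\dots,r_m)$: concatenate, for $i=2,\dots,\ell$, the block of $R_1$-applications, then the block of $R_2$-applications, then the block of $R_3$-applications for stage $i$. For termination within each stage one uses a simple potential, e.g.\ the sum of positions of the join-occurrences of $\ta_i$ decreases under each non-trivial $R_1$ step, the number of distinct gaps containing $\ta_i$-separators decreases under each $R_2$ step, and the number of satellites on the ``wrong'' side decreases under each $R_3(a)$ step; hence $m$ is finite. Putting it together, after stage $\ell$ the word coincides with $w_{\gamma_w}$ on all of $\ta_1,\dots,\ta_n$, i.e.\ equals $w_{\gamma_w}$.

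The main obstacle I anticipate is the bookkeeping in the inductive step: one must argue precisely that the rules, as literally stated, have enough flexibility to realise \emph{every} rearrangement needed — in particular that $R_1$ can always find factors $\ta_{k_1}u\ta_{k_2}v\ta_{k_3}$ with the right marked/unmarked status to migrate a join-occurrence leftward past an unmarked gap, that $R_2$'s side conditions on $z_1,z_2,z_3$ are met whenever there are separators in two different gaps, and that $R_3$'s decomposition $x_1\overrightarrow{\ta_i}^r c_i\overleftarrow{\ta_i}^s x_2$ around the core is available at exactly the moment we need it. A careful use of the palindromic normal form of $k$-local words from Remark~\ref{arrowNotation} should make these applicability checks routine, but getting the order of operations right — joins before separators before satellites, and stages in increasing order — is what makes the argument work and is the part that needs the most care.
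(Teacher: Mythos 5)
Your proposal follows essentially the same route as the paper: a stage-by-stage induction over the marking order in which $R_1$ first sorts the join-occurrences of $\ta_i$ into the leftmost gaps, the separators are then gathered into one gap, and $R_3$ finally pushes the satellites to the canonical side of the core, with the whole sequence obtained by concatenating these blocks over $i=2,\dots,\ell$. Two small points where your write-up deviates from what actually works: first, you invoke $R_3$(a) but describe (correctly) wanting the satellites on the \emph{right} of the core, which is what $R_3$(b) does — the normal form appends the satellite factor $(\bullet\ta_{i+1})^{b_{i+1}-(b_i-j_i+s_i)}$ at the right end, so $R_3$(b) is the rule you need. Second, and more substantively, $R_2$ only redistributes separator occurrences between two gaps that already appear in its pattern; it cannot deposit separators into the specific gap immediately to the right of the join-filled gaps if that gap contains none to begin with. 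The paper handles this with an extra intermediate step: one further application of $R_1$ that swaps a separator-containing gap into the position directly after the joins, after which $R_2$ (with $m_1=k_1+k_2$, $m_2=0$) merges all remaining separators into that anchored gap. Without that positioning step your procedure collects the separators into \emph{some} gap but not necessarily the one prescribed by the normal form. Everything else — the preservation of $\gamma_w$ under the rules, the termination potentials, and the final comparison with Definition~\ref{defnormalform} — matches the paper's argument.
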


\ifpaper
\else \begin{proof}
We construct the normal form inductively for the extended blocksequence corresponding to $w$ marked with $\sigma_{\Sigma}$. Perform the following four steps for all $i\in[\ell]_{>1}$ (cf. Definition~\ref{rules}):
\begin{enumerate}
\item as long as $w_i$ can be written as $x_1\ta_{k_1}u\ta_{k_2}\ta_i\ta_{k_3}x_2$ with $u\neq \ta_i$ apply $R_1$ (fill the first gaps with join occurrences),
\item if $w_i$ can be written as $x_1\ta_{k_1}u\ta_{k_2}v\ta_{k_3}x_2$ with $v\neq \ta_i$, $\ta_i\in\letters(v)$ and $u\neq \ta_i$, apply $R_1$ (move one block containing a separator occurrence to the gap immediately right to the gaps filled with joins),
\item if $w_i$ can be written as $x_1z_1u\overrightarrow{\ta_i}^{k_1}z_2v\overrightarrow{\ta_i}^{k_2}
z_3x_2$, apply $R_2$ with $m_1=k_1+k_2$ and $m_2=0$ (move all separators into the same gap),
\item if $w_i$ can be written as $x_1\overrightarrow{\ta_i}^r c_i\overleftarrow{\ta_i}^s x_2$, apply $R_3(b)$ (move satellites to the right).
\end{enumerate}
We prove by induction on $i\in[\ell]_{\geq 2}$ that after each application of all four steps the join occurrences of $\ta_i$ are in the leftmost gaps, in the following gap are the separating occurrences of $\ta_i$, and all satellite occurrences of $\ta_i$ are to the right of all $\ta_{i-1}$.
Let $i=2$. Then $w_i$ is of the form $u_1\ta_1u_2\ta_1u_3\dots u_{b_1}\ta_1u_{b_1+1}$ with $u_i\in(\Sigma\backslash\{\ta_1\}))^{\ast}$ for $i\in[b_1+1]$. By $\gamma_w$ we know that $w_i$ has $j_1$ join occurrences of $\ta_2$ and thus there exists exactly $j_1$ $u_i$ with $u_i=\ta_2$. Since we apply $R_1$ as long as there exists a join-occurrence that has a factor $\ta_1u_k\ta_1$ to the left in $w_i$, we obtain after the first step the word $u_1(\ta_1\ta_2)^{j_1}\ta_1xu_{b_1+1}$ with $x=\varepsilon$ if $j_1=b_1-1$ and $x=u_{b_1-j_1-2}\ta_1u_{b_1-j_1-1}\dots u_{b_1}\ta_1$ otherwise. If $s_1=0$ the second step is skipped. Assume $s_1>0$. Thus, there exist separating occurrences of $\ta_2$ and hence at least one $u_i$ is of the form $x_1\ta_2x_2$ with $\letters(x_1x_2)\subseteq\Sigma\backslash\{\ta_1,\ta_2\}$. If this $u_i$ is not $u_{b_1-j_1-2}$, $R_1$ is applied such that $u_{b_1-j_1-2}$ and $u_i$ switch positions. This application results in the word 
\[
u_1(\ta_1\ta_2)^{j_1}\ta_1\cdot x_1\ta_2x_2 \ta_1 u_{b_1-j_1-1}\ta_1 \dots  \ta_1 u_{b_1-j_1-2}\ta_1u_{b_1-j_1-1}\dots u_{b_1}\ta_1\cdot u_{b_1+1}.
\]
In the third step all other separating occurrences of $\ta_2$ are moved by rule $R_2$ into the same gap,
i.e. if there exists another $u_{i'}=x_1'\ta_2x_2'$ with $\letters(x_1',x_2')=\Sigma\backslash
\{\ta_1,\ta_2\}$ the application of $R_2$ leads to
\begin{multline*}
u_1(\ta_1\ta_2)^{j_1}\ta_1\cdot x_1\ta_2x_2\ta_2x_2' \ta_1 u_{b_1-j_1-1} \dots  \ta_1 u_{b_1-j_1-2}\ta_1u_{b_1-j_1-1}\\
\dots
\ta_1 x_1' \ta_1 \dots u_{b_1}\ta_1\cdot    u_{b_1+1}
\end{multline*}
and finally to
\[
u_1(\ta_1\ta_2)^{j_1}\ta_1x_1\overrightarrow{\ta_2}^{s_1}z_1\ta_1z_2\dots z_{\ell}\ta_1z_{\ell+1}
\]
for appropriate $z_i\in(\Sigma\backslash\{\ta_1,\ta_2\})^{\ast}$, $i\in[\ell]$, for some $\ell\in\N$, and $z_{\ell+1}\in(\Sigma\backslash\{\ta_1\})^{\ast}$.
After this step all separating occurrences of $\ta_2$ are between two occurrences of $\ta_1$ and especially between those $\ta_1$s such that all previous occurrences are directly joined by one occurrence of $\ta_2$. Thus only $u_1$ and $z_{\ell+1}$ may contain occurrences of $\ta_2$ (in the form of satellite occurrences). Set $c_2$ as the factor starting at $|u_1|+1$ and ending just before $z_{\ell+1}$. If $w_i$ can be written as $x_1 \overrightarrow{\ta_i}^r c_1 \overleftarrow{\ta_i}^s x_2$ as in rule $R_3$ apply $R_3(b)$. After this step there is no occurrences of $\ta_2$ before the first $\ta_1$ and all satellite occurrences of $\ta_2$ are after the last 
$\ta_1$.

\medskip
Consider now the \nth{$i$} stage. Then we have a word of the form 
\[
u_1B_1u_2B_2u_3\dots u_{b_{i-1}}B_{b_{i-1}}u_{b_{i-1}+1}
\]
with $B_i\in\overline{\Sigma}^{\ast}$ and $u_j\in\Sigma^{\ast}$ for $i\in[b_{i-1}]$ and $j\in[b_{i-1}+1]$. Since there exist $j_{i-1}$ join occurrences of $\ta_i$, there exist exactly $j_{i-1}$ occurrences of $u_k$ which are equal to $\ta_i$. Applying $j_{i-1}$ times rule $R_1$ we swap these occurrences with $u_2,\dots,u_{j_{i-1}+1}$. In the next step we swap a separating occurrences of $\ta_{i}$ with $u_{j_{i-1}+2}$ (if one exists). After these transformation we have with appropriate $z_1,\dots,z_{\ell}$ a word of the form
\[
u_1B_1\ta_iB_2\ta_i\dots \ta_i B_{j_{i-1}+1} z_1 B_{j_{i-1}+2} z_2 \dots z_{\ell-1}B_{b_{i-1}}z_{\ell}
\]
such that $z_1$ contains all separating occurrences of $y_i$. Now define $c_i$ as the factor starting right after $u_1$ and ending just before $z_{\ell}$. 

Notice that $u_1$ and $z_{\ell}$ only may contain satellite occurrences of $y_i$ and $c_i$ does not contain any. With $R_3(b)$ we move these occurrences all to the right into $z_{\ell}$. This proves that the application of the four steps results in a word where always the first gaps are filled with the join-occurrences, followed by a gap containing all separating occurrences, and that the satellite occurrences are all at the right side of the core in each marking step.

\medskip

Comparing this word with the definition of $w_{\gamma_w}$ leads to the claim.\qed
\end{proof}

\fi

\ifpaper
\begin{lemma}[$\ast$]
\else
\begin{lemma}
\fi
\label{class}
For a valid $\ebs$ $\gamma$ and $w \in \mathfrak{V}_{\gamma}$ applying any one of the rules $R_1$, $R_2$, or $R_3$ to $w$ resulting in the word $w'$ we get $w' \in \mathfrak{V}_{\gamma}$ as well.
\end{lemma}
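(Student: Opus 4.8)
The goal is to verify $\gamma_{w'}=\gamma_w$ with respect to $\sigma_{\Sigma}$: since $\mathfrak{V}_{\gamma}$ depends on $w$ only through its extended blocksequence, this is exactly the assertion $w'\in\mathfrak{V}_{\gamma}$. The plan is therefore to run the marking process $\sigma_{\Sigma}$ on $w$ and on $w'$ in parallel and, for each stage $j\in[\ell]$, compare the number of marked blocks, the number of join occurrences of $\ta_j$, and the number of separator occurrences of $\ta_j$. Since $R_1$, $R_2$, $R_3$ only reorder (and, in $R_3$, reverse) certain factors of $w$ while leaving the factors themselves intact, it suffices to show two things at every stage: the number of maximal marked runs is unchanged, and every occurrence of every letter keeps its type (neighbour / join / separator / satellite); the equalities for $\beta$, $\iota$ and $\zeta$ then follow at once. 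One also checks along the way that each rule maps condensed words to condensed words (the delimiting letters have index $<i$, the moved material has index $\ge i$, and inserted $\ta_i$'s have index $i$, so no equal letters become adjacent), so the restriction to condensed words is harmless.

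The uniform mechanism is that each rule acts inside a window of $w$ that is bounded, at every stage, by letters which are marked before stage $i$ (the letters $\ta_{k_1},\ta_{k_2},\ta_{k_3}$ for $R_1$; the letters $z_1,z_3$ and the two end-letters of $z_2$ for $R_2$; the two end-letters of the core $c_i$ for $R_3$) or by the ends of $w$, while the factors that are moved consist only of letters of index $\ge i$. Hence, for a stage $j<i$ the entire moved material is unmarked and the only marked positions inside the window are the unchanged ones of index $<i$, in the same arrangement; so the marked/unmarked pattern, the block count, and all occurrence types at stage $j$ are literally the same for $w$ and $w'$. For a stage $j\ge i$ all the delimiting letters are marked, so the window decomposes into pieces each flanked on both sides by marked letters; one computes the number of maximal marked runs of the window, together with the trivial fact that it still begins and ends with a marked letter, and plugs this into the surrounding context — which the rule does not touch, and in which no marked letter has been created or destroyed — to obtain the same global block count for $w$ and $w'$. (For $R_3$ one additionally uses that the letters of $w$ outside the moved block have index $\ge i$, which is forced by $c_i$ being the actual core, so that at stages $j<i$ there is no merging across the window boundary.)

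The occurrence types are handled in the same spirit. An occurrence outside the window keeps its immediate neighbours and, since the rearrangement neither creates nor destroys marked letters, keeps the property of having a marked block on each side, hence keeps its type. An occurrence inside a moved factor keeps its in-factor neighbours, still has a marked letter as its outward-facing neighbour after the move as well as before, and always has a marked block on both sides, so a singleton there is a separator both before and after. Finally the $\ta_i$-occurrences that are explicitly redistributed are, in $R_2$, all separators before and after (each is surrounded by its group's $\ta_i$'s or by $z_1,z_2,z_3$), and in $R_3$ all satellites before and after (they stay outside the unmoved core $c_i$). For $R_1$ everything is immediate: with $\ta_{k_1},\ta_{k_2},\ta_{k_3}$ fixed marked single letters and $u,v$ kept intact, swapping $u$ with $v$ merely permutes two summands of the run-count formula of the window.

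The only genuinely delicate point, and the one I expect to be the main obstacle, occurs in $R_2$ and $R_3$, where the \emph{tails} — the maximal factors over $\{\ta_{i+1},\dots,\ta_{\ell}\}$ attached to the redistributed $\ta_i$-occurrences — are reordered (in $R_2$) and in addition individually reversed (in $R_3$). For a stage $j>i$ one must check that the total contribution of such a collection of ``$\ta_i$ followed/preceded by a tail'' pieces to the number of marked runs equals $(\text{number of }\ta_i\text{-occurrences})+\sum_{t}\mathrm{runs}(t,j)-\sum_{t}\bigl([\,t\text{ starts marked at }j\,]+[\,t\text{ ends marked at }j\,]\bigr)$, the sums ranging over the tails in the collection; this expression is symmetric in the tails and, term by term, invariant under reversing a tail (reversal preserves $\mathrm{runs}(t,j)$ and only swaps the two indicator summands). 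Since every tail is, at each stage $j>i$, surrounded by marked letters (the $\ta_i$'s of its own group, and at the group's ends $z_1,z_2,z_3$ for $R_2$ or $c_i$/the word boundary for $R_3$), this computation settles the marked-run count for $R_2$, $R_3(a)$ and $R_3(b)$, and the join/separator counts follow from the type analysis above. Assembling these stage-by-stage checks for each of the rules completes the proof, the remaining work being routine case distinctions on whether $j<i$ or $j\ge i$ and on the position of a given occurrence relative to the window.
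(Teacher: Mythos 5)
Your proof is correct and follows essentially the same route as the paper's: a case analysis over $R_1$, $R_2$, $R_3$, comparing the marking of $w$ and $w'$ stage by stage and using that the rearranged material is entirely unmarked before stage $i$ and that, from stage $i$ on, every moved tail sits between marked letters in both words, so block counts and occurrence types are preserved. Your explicit per-tail run-count formula (invariant under permuting and reversing the tails) just makes precise the paper's more informal claim that the displaced unmarked factors ``behave the same'' because they remain encased by marked letters.
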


\ifpaper
\else \begin{proof}
Let $\gamma=(\beta,\iota,\zeta)$ be the valid extended blocksequence w.r.t. $\sigma_{\Sigma}$, blocksequence $\beta=(b_1, \dots b_n)$, join sequence $\iota=(j_1, \dots j_{n-2})$, and separator sequence $\zeta=(s_1, \dots s_{n-2})$. We divide the proof into three parts, one for each rule. Consider the stage $i\in[n]_{>1}$.\\
\textbf{case $R_1$:} If we apply $R_1$ to $w$ resulting in $w'$, these words are of the form
\[
w=x_1\ta_{k_1}u\ta_{k_2}v\ta_{k_3}x_2\mbox{ and }w'=x_1\ta_{k_1}v\ta_{k_2}u\ta_{k_3}x_2
\]
with $u,v\in\Sigma^+$, $k_1,k_2,k_3<i$ and $x_1,x_2\in(\Sigma\cup\overline{\Sigma})^{\ast}$. 
Since $k_1,k_2,k_3<i$ the letters $\ta_{k_1}, \ta_{k_2}$, and $\ta_{k_3}$ are all marked at stage $i$. Notice that $x_1 \ta_{k_1}, \ta_{k_2}$ and $\ta_{k_3}x_2$ are factors of both $w$ and $w'$. Thus, they have the same number of blocks, join occurrences and separating occurrences in $w$ and $w'$. 
Since $u$ and $v$ are surrounded by $\ta_{k_1}, \ta_{k_2}$ and $\ta_{k_3}$ which are all marked, there cannot be any satellite occurrences of $\ta_i$ in $u$ or $v$.
If $u$ is a join occurrence then it is of the form $u=\ta_i$. In this case it remains a join in between $\ta_{k_2}$ and $\ta_{k_3}x_2$.
On the other hand if u contains any number of separating occurrences it is of the form $u=u_1 \ta_i u_2 \ta_i \dots \ta_i u_m$ with $u_1, \dots, u_m \in \Sigma^+$. Specifically $u_1, u_m$ are not marked in this step and not empty. Therefore, all separating occurrences of $\ta_i$ in $u$ remain to be ones in between $\ta_{k_2}$ and $\ta_{k_3}x_2$.
The same holds analogously for v. Therefore, the number of joins and separators is the same in $w$ and $w'$ in stage i of the marking. Since there are no neighbours or satellites in $u$ and $v$ the number of blocks at stage $i$ is the same as well.\\
\textbf{case $R_2$:} If we apply $R_2$ to $w$ resulting in $w'$ these words are of the form 
\[
w=x_1z_1u\overrightarrow{\ta_i}^{k_1}z_2v \overrightarrow{\ta_i}^{k_2}z_3x_2\mbox{ and }
w'=x_1z_1u\overrightarrow{\ta_i}^{k_1}\overrightarrow{\ta_i}^{k_2}z_2vp'z_3x_2
\]
with $k_1,k_2\in\N$.
In order to show that $w' \in \mathfrak{V}_{\gamma_w}$ holds, we will show that the amounts of satellites, separators and joins for both words are the same and therefore both words have the same extended blocksequence.
The factors $x_1z_1u$, $z_2v$, and $z_3x_2$ are the same in both words. The letters that are adjacent to those factors can differ in $w$ and $w'$. However, the order in which a letter adjacent to one these factors and the letter on the border of the respective factor are marked, is the same in both words. So the factors $x_1z_1u$, $z_2v$, and $z_3x_2$ behave the same way in $w$ and $w'$. This means only the factors containing $\overrightarrow{\ta_i}$ can cause any change in the extended blocksequence.   
Note that when marking $w$ and $w'$, before reaching stage $i$ both words behave in the same way since 
$\overrightarrow{\ta_i}^{k_1}$ and $\overrightarrow{\ta_i}^{k_2}$ contain only letters that are marked in stage $i$ or later. Satellites are not affected by applying $R_2$ since $\overrightarrow{\ta_i}^{k_1}$ and 
$\overrightarrow{\ta_i}^{k_2}$ lie within the core of $w$ and satellites only occur outside the core.
In stage $i$, $\overrightarrow{\ta_i}^{k_1}$ and $\overrightarrow{\ta_i}^{k_2}$ contain only separators. By the definition of $\overrightarrow{\ta_i}$ and because $k_1+k_2 = m_1+m_2$ holds, the amount of separators is the the same in $w$ and $w'$.
Thus, in $\overrightarrow{\ta_i}^{k_1}$, $\overrightarrow{\ta_i}^{k_2}$, $p$ and $p'$ only $\ta_i$ is marked, resulting in factors that are repetitions of $\ta_i$ (which is marked) followed by an unmarked factor. These unmarked factors occur in different places in $w$ and $w'$ but are encased by either $\ta_i$, $z_2[1]$, or $z_3[1]$ which are all marked at this stage. Hence, we can apply $R_2$ and the number of joins and separators does not change. This concludes the part for $R_2$.
\\
\textbf{case $R_3$:} If we apply $R_3(a)$ (the case $R_3(b)$ works analogously) to $w$ resulting in $w'$ these words are of the form
\[
w=x_1\overrightarrow{\ta_i}^r c_i\overleftarrow{\ta_i}^s x_2\mbox{ and }
w'=x_1 (\overleftarrow{\ta_i}^s)^R \overrightarrow{\ta_i}^r c_ix_2.
\] 
Notice that similar to $R_2$ in $\overrightarrow{\ta_i}^r$ and $\overleftarrow{\ta_i}^s$ letters are firstly marked in stage $i$. So in $w$, $c_i$ and $x_2$ are separated by an unmarked factor, whereas in $w'$ $c_i$ and $x_2$ occur directly next to each other. Notice that $x_1[|x_1|]$ and $x_2[1]$ are unmarked (or empty) since otherwise the words would not be neighbourless.
Since every occurrence of $\ta_i$ in $\overleftarrow{\ta_i}^s$ is neighboured only by unmarked letters in $w$ and $w'$, these occurrences are a single block in both words. The amount of blocks outside $\overleftarrow{\ta_i}^s$ stays unchanged, so it holds that $b_i = b_i'$. 
After stage $i$ the remaining letters in $\overleftarrow{\ta_i}^s$ greater than $\ta_i$ are yet to be marked. Since $\overleftarrow{\ta_i}^s$ is of the form $x_1 \ta_i x_2 \ta_i \cdots x_s \ta_i$ with $x_j \in \{ \ta_{i+1}, \cdots, \ta_l \}^+ $ and every such $x_j$ is encased by letters marked before it in $w$ ($\ta_i$ from the previous step or the last letter of the core) for $j \in [s]$, the same holds for $(\overleftarrow{\ta_i}^s)^R$ in $w'$, by $R_3$ these factors behave the same in both words. Now we have to consider $c_i$ and $x_2$ which become neighbours, when $\overleftarrow{\ta_i}^s$ is moved by the application of $R_3$. If $x_2=\varepsilon$ we are done, otherwise the right neighbour of $c_i$ is an unmarked letter in $w$, the left neighbour of $x_2$ is marked. In $w'$ $c_i$ and $x_2$ occur directly next to each other so $c_i$ also has an unmarked neighbour and $x_2$ also has a marked left neighbour. This means that moving $\overleftarrow{\ta_i}^s$ does not affect the number of joins, separators and satellites, so both words have the same extended blocksequence.\qed
\end{proof}

\fi

\ifpaper
\begin{corollary}[$\ast$]
\else
\begin{corollary}
\fi
\label{charVgamma}
For a given valid $\ebs$, $\mathfrak{V}_{\gamma}$ contains exactly one normal form and all words having this normal form are in $\mathfrak{V}_{\gamma}$.
\end{corollary}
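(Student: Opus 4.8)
The plan is to obtain the statement as a direct consequence of Theorem~\ref{obtainingnormalform} and Lemma~\ref{class}, the single load-bearing observation being that the rules $R_1,R_2,R_3$ do not change the $\ebs$ of a word w.r.t. $\sigma_{\Sigma}$ (this is exactly what Lemma~\ref{class} says, since $w'\in\mathfrak{V}_{\gamma}$ means $\gamma_{w'}=\gamma$). Fix a valid $\ebs$ $\gamma=(\beta,\iota,\zeta)$; validity gives $\mathfrak{V}_{\gamma}\neq\emptyset$ and makes the word $w_{\gamma}$ of Definition~\ref{defnormalform} well defined.

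First I would check that $w_{\gamma}\in\mathfrak{V}_{\gamma}$ and that $w_{\gamma}$ is a normal form. The construction of $w_{\gamma}$ in Definition~\ref{defnormalform} coincides with the one performed in the ``if'' part of the proof of Theorem~\ref{validity}, where it is shown that the resulting word has $\ebs$ exactly $\gamma$; hence $\gamma_{w_{\gamma}}=\gamma$, i.e.\ $w_{\gamma}\in\mathfrak{V}_{\gamma}$. Since the normal form of a word $v$ is by definition $w_{\gamma_v}$, the word $w_{\gamma}$ equals the normal form of itself ($w_{\gamma_{w_{\gamma}}}=w_{\gamma}$ because $\gamma_{w_{\gamma}}=\gamma$), so it is a normal form. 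For uniqueness, let $v\in\mathfrak{V}_{\gamma}$ be any normal form: then $\gamma_v=\gamma$ and $v=w_{\gamma_v}=w_{\gamma}$, so $w_{\gamma}$ is the only normal form inside $\mathfrak{V}_{\gamma}$. (One may also note that $\gamma\mapsto w_{\gamma}$ is injective because $\gamma_{w_{\gamma}}=\gamma$, which is why distinct valid $\ebs$ never share a normal form.)

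Next I would prove the second assertion: if $u\in\Sigma^{\ast}$ has normal form $w_{\gamma}$, then $u\in\mathfrak{V}_{\gamma}$. By Theorem~\ref{obtainingnormalform} there is a finite rule sequence $r_1,\dots,r_m\in\{R_1,R_2,R_3\}$ rewriting $u$ into $w_{\gamma_u}$, and by hypothesis $w_{\gamma_u}=w_{\gamma}$. Applying Lemma~\ref{class} to each of the $m$ steps in turn, none of them changes the $\ebs$, so $\gamma_u=\gamma_{w_{\gamma_u}}=\gamma_{w_{\gamma}}=\gamma$, i.e.\ $u\in\mathfrak{V}_{\gamma}$. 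Conversely, for an arbitrary $w\in\mathfrak{V}_{\gamma}$, Theorem~\ref{obtainingnormalform} rewrites $w$ into $w_{\gamma_w}=w_{\gamma}$; so, all in all, $\mathfrak{V}_{\gamma}$ is exactly the set of words whose normal form is $w_{\gamma}$.

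I do not expect a real obstacle: the corollary is a bookkeeping combination of the two preceding results. The only places that need a moment's attention are (i) the identification of the output of the normalization procedure in Theorem~\ref{obtainingnormalform} with the canonical word $w_{\gamma_w}$ of Definition~\ref{defnormalform} (already asserted there), and (ii) the translation ``preserves membership in $\mathfrak{V}_{\gamma}$'' $\Leftrightarrow$ ``preserves the $\ebs$'' used to chain Lemma~\ref{class} across several successive rule applications. If one wanted to be scrupulous, one would also remark that $w_{\gamma}$ is a fixed point of the rewriting (the $m=0$ case of Theorem~\ref{obtainingnormalform}), confirming the terminology ``normal form''.
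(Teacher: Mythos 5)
Your proof is correct and follows essentially the same route as the paper's: both reduce the corollary to (i) every word of $\mathfrak{V}_{\gamma}$ having $w_{\gamma}$ as its normal form, via Theorem~\ref{obtainingnormalform}, and (ii) the injectivity of the map $\gamma\mapsto w_{\gamma}$, which yields the converse inclusion. The one noteworthy difference is how (ii) is justified: the paper argues informally that a change in $\beta$, $\iota$ or $\zeta$ would visibly alter the normal form, whereas you derive it cleanly from the identity $\gamma_{w_{\gamma}}=\gamma$, which is exactly what the constructive half of Theorem~\ref{validity} establishes (Definition~\ref{defnormalform} literally repeats that construction). Your additional appeal to Lemma~\ref{class}, chained along the rewriting of Theorem~\ref{obtainingnormalform} to transport the $\ebs$ from $u$ to $w_{\gamma_u}$, is sound but, as you yourself observe, redundant once $\gamma_{w_{\gamma}}=\gamma$ is in hand; if anything, your version makes explicit a step the paper leaves slightly hand-waved.
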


\ifpaper
\else \begin{proof}
Let $w,w'\in\mathfrak{V}_{\gamma}$. Then we have $\gamma_w=\gamma=\gamma_{w'}$. By Theorem~\ref{obtainingnormalform} we get that $w,w'$ have the same normal form since the procedure only takes $\gamma$ into account. Let $w_{\gamma}$ be the normal form of $\gamma$. If there existed a $v\in\Sigma^{\ast}$ with normal form $w_{\gamma}$ and 
$v\not\in\mathfrak{V}_{\gamma}$ then $\gamma\neq \gamma_v$. If the differences were in $\iota$ or $\zeta$ the normal form of $\gamma_v$ would have a different amount of joining blocks are a different amount of separators, respectively. If the difference were in the block sequence the normal form would have a different amounts of at least one letter. Thus, $\mathfrak{V}_{\gamma}$ contains exactly the words having the same normal form.\qed
\end{proof}

\fi

For the $\ebs$ $\gamma = ((4,4,1),(1),(1))$, we have $w_{\gamma}=\ta\tb\ta\tc\tb\tc\ta\tc\ta$ and $\mathfrak{V}_{\gamma} = \{\ta\tb\ta\tc\ta\tc\tb\tc\ta,\ta\tb\ta\tc\tb\tc\ta\tc\ta,\ta\tc\ta\tb\ta\tc\tb\tc\ta,\ta\tc\tb\tc\ta\tb\ta\tc\ta,\ta\tc\ta\tc\tb\tc\ta\tb\ta,\ta\tc\tb\tc\ta\tc\ta\tb\ta\}$ assuming $\sigma_{\Sigma}$ as marking sequence.

In the remaining part of this section we investigate the behaviour of a word in comparison to its normal form regarding the locality of the words. 
\ifpaper
\else
For convenience in the following proofs, we introduce the following notions.

\begin{definition}
Let $n(w,S)$ denote the number of marked blocks in $w$ if all letters from $S\subseteq\Sigma$ are 
marked and define for a given $I\subseteq \Sigma$ the function $o$ by $o(\ta,\tb,I)=1$ if $\ta,\tb\in I$ and $0$ otherwise.
\end{definition}
\fi

\ifpaper
\begin{theorem}[$\ast$]
\else
\begin{theorem}
\fi
\label{applR1}
Let $R_i(w)$ denote the application of $R_i$ to $w$ for $i\in[3]$. Then we have that $\loc(R_1(w))$ differs
from $\loc(w)$ by at most $2$ and $\loc(R_2(w))$ and $\loc(R_3(w))$ differ from $\loc(w)$ by at most $1$.
\end{theorem}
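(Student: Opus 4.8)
The plan is to track, along an arbitrary fixed marking sequence $\sigma$ and step by step, how the number of marked blocks changes when one passes from $w$ to $w':=R_j(w)$, and then to lift such a per-step estimate to the locality. Since each rule only permutes the letters of $w$ without adding or removing any, $\letters(w)=\letters(w')=\Sigma$, so every marking sequence of $w$ is one of $w'$ and vice versa. For $S\subseteq\Sigma$ let $n(z,S)$ denote the number of marked blocks of $z$ when exactly the letters of $S$ are marked; recall $o(\ta,\tb,S)=1$ iff $\ta,\tb\in S$ and that $n(z_1z_2,S)=n(z_1,S)+n(z_2,S)-o(z_1[|z_1|],z_2[1],S)$, so when a word is split into factors its block count equals the sum of the block counts of the factors minus one unit for each \emph{junction} whose two bordering letters both lie in $S$. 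The lifting step is then routine: once we know $|n(w,S)-n(w',S)|\le c$ for all $S$, taking $\sigma$ with $\loc_\sigma(w)=\loc(w)$ gives, with $S$ ranging over the prefixes of $\sigma$, $\loc(w')\le\loc_\sigma(w')=\max_S n(w',S)\le\max_S\big(n(w,S)+c\big)=\loc(w)+c$, and the symmetric argument with an optimal $\sigma$ for $w'$ gives $\loc(w)\le\loc(w')+c$, hence $|\loc(w)-\loc(w')|\le c$.

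For $R_1$ I will prove this per-step estimate with $c=2$. Write $w=X\,u\,Y\,v\,Z$ and $w'=X\,v\,Y\,u\,Z$, where $X=x_1\ta_{k_1}$, $Y=\ta_{k_2}$ is a single letter, and $Z=\ta_{k_3}x_2$; the five factors $X,u,Y,v,Z$ occur in both words, so they contribute the same block counts and only the four junctions bordering $u$ and $v$ can differ. Thus $n(w,S)-n(w',S)$ is the difference of the two sums of $o$-values at these junctions; writing every $o(\ta,\tb,S)$ as the product of the indicators of ``$\ta\in S$'' and ``$\tb\in S$'' and grouping, this difference equals, if $\ta_{k_2}\notin S$, the quantity $([v[1]\in S]-[u[1]\in S])\,[\ta_{k_1}\in S]+([u[|u|]\in S]-[v[|v|]\in S])\,[\ta_{k_3}\in S]$, and, if $\ta_{k_2}\in S$, the quantity $([u[1]\in S]-[v[1]\in S])\,(1-[\ta_{k_1}\in S])+([v[|v|]\in S]-[u[|u|]\in S])\,(1-[\ta_{k_3}\in S])$. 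In both cases it is a sum of two $\{-1,0,1\}$-valued terms, hence lies in $[-2,2]$; lifting gives the claim for $R_1$.

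For $R_3$ the same decomposition-and-junction method works with $c=1$. Take $w=x_1\overrightarrow{\ta_i}^r c_i\overleftarrow{\ta_i}^s x_2$ and $w'=x_1(\overleftarrow{\ta_i}^s)^R\overrightarrow{\ta_i}^r c_ix_2$ (the case $R_3(b)$ is symmetric); the factors $x_1,\overrightarrow{\ta_i}^r,c_i,x_2$ and the word $(\overleftarrow{\ta_i}^s)^R$, which has the same block count as $\overleftarrow{\ta_i}^s$ because reversal preserves block structure, contribute the same counts to $w$ and $w'$, so again only a handful of junctions matter. One case-splits on whether $\ta_i\in S$: when $\ta_i\notin S$, every $\overrightarrow{\ta_i}$- resp.\ $\overleftarrow{\ta_i}$-gadget begins resp.\ ends with the unmarked letter $\ta_i$, which kills the $o$-value at each junction where such a gadget abuts another factor, and a short computation gives $n(w,S)-n(w',S)=[c_i[|c_i|]\in S]\,([x_2[1]\in S]-[\mu\in S])$, where $\mu$ is the first letter of $\overleftarrow{\ta_i}^s$; when $\ta_i\in S$ the junctions on the $x_1$-side coincide for the two words and one gets $n(w,S)-n(w',S)=(1-[c_i[|c_i|]\in S])\,([\mu\in S]-[x_2[1]\in S])$ (the degenerate cases $r=0$ or $s=0$ are even simpler). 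Both expressions lie in $\{-1,0,1\}$, and lifting yields $|\loc(R_3(w))-\loc(w)|\le1$.

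$R_2$ is the hard case: the per-step method does \emph{not} suffice, since already redistributing the tails inside the two $\overrightarrow{\ta_i}$-gadgets so that the last tail of each group switches its $S$-membership produces a set $S$ with $|n(w,S)-n(w',S)|=2$. Here the plan is to argue globally. Starting from a marking sequence $\sigma$ optimal for one of the two words, one builds a marking sequence $\sigma'$ for the other with $\loc_{\sigma'}\le\loc_\sigma+1$, using that the whole discrepancy between $w$ and $w'$ is confined to the factor delimited by the two fixed letters $z_1,z_3$ and consists only of occurrences of $\ta_i$ and of letters larger than $\ta_i$; since marking $\ta_i$ collapses the gadget structure into runs of $\ta_i$'s separated by tails, one reorganizes $\sigma$ locally around the step that marks $\ta_i$ (pulling it one step earlier or later and compensating) so that the unfavourable intermediate configurations do not occur for the target word. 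Verifying that this local surgery raises the block count of the target word by at most one at every step is the main obstacle of the proof; by contrast, $R_1$ and $R_3$ are handled directly by the per-step estimates above.
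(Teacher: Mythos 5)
Your handling of $R_1$ and $R_3$ is correct and follows the same route as the paper: split each word into the factors that are merely permuted, note that only a bounded set of junctions can change, bound the per-step difference $|n(w,S)-n(w',S)|$, and lift that bound to the locality via an optimal marking sequence. Your algebraic grouping of the junction indicators is in fact cleaner than the paper's exhaustive case analysis over $|I\cap M|$, and it yields the same bounds ($2$ for $R_1$, $1$ for $R_3$).

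The gap is $R_2$, which you explicitly leave unproven. Your diagnosis that the per-step method can produce a difference of $2$ is accurate \emph{for the rule read in full generality}, i.e.\ with an arbitrary redistribution of the tails $t_1,\dots,t_{k_1},t'_1,\dots,t'_{k_2}$: the only junctions that change are those attached to the last tail of each group, and choosing which tails occupy those two slots can flip two indicator products in the same direction. But having identified this as ``the hard case'', you replace a proof by a plan --- the ``local surgery'' on the marking sequence is never specified, and you yourself name its key verification as the main obstacle. As submitted, the claimed bound for $R_2$ is simply not established. It is worth knowing that the paper does not need, and does not attempt, any such global argument: it applies the same per-step junction count to the one instance of $R_2$ it ever uses (step 3 of the normal-form procedure in Theorem~\ref{obtainingnormalform}), namely $m_1=k_1+k_2$, $m_2=0$ with the tails kept in their original order, so that $w'=x_1z_1u\overrightarrow{\ta_i}^{k_1}\overrightarrow{\ta_i}^{k_2}z_2vz_3x_2$. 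In that instance the six changed junction terms pair the three letters $A=\overrightarrow{\ta_i}^{k_1}[|\overrightarrow{\ta_i}^{k_1}|]$, $B=\overrightarrow{\ta_i}^{k_2}[|\overrightarrow{\ta_i}^{k_2}|]$, $C=v[|v|]$ against the three letters $a=\ta_i$, $b=z_2[1]$, $c=z_3$ in a $3$-cycle, and the difference collapses to $[A]([a]-[b])+[B]([b]-[c])+[C]([c]-[a])$; since the three bracketed differences sum to $0$ and each lies in $\{-1,0,1\}$, at most one is $+1$ and at most one is $-1$, so the whole expression lies in $\{-1,0,1\}$. Thus the per-step method does suffice for the $R_2$ the theorem is actually invoked on; your observation shows only that the statement must either be read with this order-preserving restriction or be given the new argument you sketch but do not supply.
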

\ifpaper
\else \begin{proof}
We are going to prove the claim by comparing the locality resulting from the canonical marking sequence with the locality resulting from an arbitrary marking sequence $\sigma=(y_1,\dots,y_{\ell})$, i.e. the results for the optimal marking sequence may only be better. Consider stage $i\in[\ell]$ and set $I=\{y_1,\dots,y_i\}$.\\
\textbf{case $R_1$:} Let $w=x_1\ta_{k_1}u\ta_{k_2}x_2\ta_{k_3}v\ta_{k_4}x_3$ and 
$R_1(w)=x_1\ta_{k_1}v\ta_{k_2}x_2y_{k_3}u\ta_{k_4}x_3=:w'$. Set $I=\{y_1,\dots,
y_{i}\}$ for a fixed $i\in[n]$. Then after the \nth{$i$} stage of $\sigma'$ we have
\begin{align*}
n(w,I) &= n(x_1\ta_{k_1},I)+n(u,I)+n(\ta_{k_2}x_2\ta_{k_3},I)+n(v,I)+n(\ta_{k_4}x_3,I)\\
&\,-o(\ta_{k_1},u[1],I)-o(u[|u|],\ta_{k_2},I)-o(\ta_{k_3},v[1],I)-o(v[|v|],\ta_{k_4},I)
\end{align*}
and 
\begin{align*}
n(w',I) &= n(x_1\ta_{k_1},I)+n(v,I)+n(\ta_{k_2}x_2\ta_{k_3},I)+n(u,I)+n(\ta_{k_4}x_3,I)\\
&\,-o(\ta_{k_1},v[1],I)-o(v[|v|],\ta_{k_2},I)-o(\ta_{k_3},u[1],I)-o(u[|u|],\ta_{k_4},I)
\end{align*}
Thus we get
\begin{align*}
|n(w',I)&-n(w,I)|\\
= |&-o(\ta_{k_1},u[1],I)-o(u[|u|],\ta_{k_2},I)-o(\ta_{k_3},v[1],I)-o(v[|v|],\ta_{k_4},I)\\
&+o(\ta_{k_1},v[1],I)+o(v[|v|],\ta_{k_2},I)+o(\ta_{k_3},u[1],I)+o(u[|u|],\ta_{k_4},I)|.
\end{align*}
Set $M_1=\{v[1],u[1],u[|u|],v[|v|]\}$, $M_2=\{\ta_{k_1},\ta_{k_2}\,\ta_{k_3},\ta_{k_4}\}$ and 
$M=M_1\cup M_2$. For getting the maximal difference in the locality change, we have to evaluate the different possibilities for $\sigma'$, and thus the different possibilities for $I$.
Consider $J=I\cap M$.
In the case $|J|=0$ the difference is $0$ since non of the summands becomes $1$. Analogously for $|J|=8$ the difference is also $0$ as all summands become 1 and cancel out. For $|J|=1$ also non of the summands can become $1$, so the difference is $0$. For symmetry reasons the difference is also $0$ for $|J|=7$. In the case $|J|=2$ the difference can be either $0$ or $1$. If the two marked letters occur in one summand, the difference is $1$, since all other summands are $0$. If the marked letters do not occur in one summand together, all summands are $0$ and thus the difference is $0$. By symmetry we get an analogous result for $|J|=6$. For $|J|=3$ the difference can be $0$ or $1$. Similar to the case before, only one positive and one negative summand can become $1$ at most. If one negative and one positive summand becomes $1$ the difference is $0$. If no summand becomes $1$, the difference is $0$ as well. But if either one negative or one one positive summand becomes 1 the difference is $1$. Again, for $|J|=5$ for symmetry reasons the same results apply. The case $|J|=4$ is the only one in which the difference can become $0$, $1$ or $2$. Four marked letters can lead to two positive and two negative summands that become $1$, respectively. The largest difference is obtained if either two negative or two positive summands become $1$ and respectively all positive or negative summands are $0$. This leads to a difference of $2$. In all other cases the difference is $0$ or $1$. This concludes the proof for $R_1$.\\
\textbf{case $R_2$:} Consider now
$w=x_1z_1u\overrightarrow{\ta_i}^{k_1}z_2x_2z_3v \overrightarrow{\ta_i}^{k_2}z_4x_3$ and $R_2(w)=x_1z_1u\overrightarrow{\ta_i}^{k_1}$ $\overrightarrow{\ta_i}^{k_2}z_2x_2z_3vz_4x_3=w'$. Then we get
\begin{multline*}
n(w,I) = n(x_1z_1u,I)+n(\overrightarrow{\ta_i}^{k_1},I)+n(z_2x_2z_3v,I)+n(\overrightarrow{\ta_i}^{k_2},I)+n(z_4x_3,I)\\
-o(u[|u|],\ta_i,I)-o(\overrightarrow{\ta_i}^{k_1}[|\overrightarrow{\ta_i}^{k_1}|],z_2,I)-o(v[|v|],\ta_i,I)
-o(\overrightarrow{\ta_i}^{k_2}[|\overrightarrow{\ta_i}^{k_2}|],z_4,I)
\end{multline*}
and 
\begin{multline*}
n(w',I) = n(x_1z_1,u,I)+n(\overrightarrow{\ta_i}^{k_1},I)+n(\overrightarrow{\ta_i}^{k_2},I)+n(z_2x_2z_3v,I)
+n(z_4x_3,I)\\
-o(u[|u|],\ta_i,I)-o(\overrightarrow{\ta_i}^{k_1}|[\overrightarrow{\ta_i}^{k_1}|],\ta_i,I)
-o(\overrightarrow{\ta_i}^{k_2}[|\overrightarrow{\ta_i}^{k_2}|],z_2,I)-o(v[|v|],z_4,I).
\end{multline*}
Thus we get
\begin{multline*}
|n(w',I)-n(w,I)|
=\\
|-o(\overrightarrow{\ta_i}^{k_1}|[\overrightarrow{\ta_i}^{k_1}|],\ta_i,I)
-o(\overrightarrow{\ta_i}^{k_2}[|\overrightarrow{\ta_i}^{k_2}|],z_2,I)-o(v[|v|],z_4,I)\\
+o(\overrightarrow{\ta_i}^{k_1}[|\overrightarrow{\ta_i}^{k_1}|],z_2,I)+o(v[|v|],\ta_i,I)+o(\overrightarrow{\ta_i}^{k_2}[|\overrightarrow{\ta_i}^{k_2}|],z_4,I)|.
\end{multline*}
Set $M_1=\{\overrightarrow{\ta_i}^{k_1}[|\overrightarrow{\ta_i}^{k_1}|], 
\overrightarrow{\ta_i}^{k_2}[|\overrightarrow{\ta_i}^{k_2}|],v[|v|]\}$, $M_2=\{\ta_i,z_2,z_4\}$, and 
$M=M_1\cup M_2$. Analogously to $R_1$, we are distinguishing the possibilities for $\sigma'$. Consider $J=I\cap M$. If $|J|=6$, the difference is obviously 
$0$. If $|J|=5$ the difference is also $0$ since each element from $M$ occurs in exactly one positive and one negative summand. By symmetry we get that the difference is $0$ for $|J|=0$ or $|J|=1$. If $|J|=4$ and either $M_1\cap J$ or $M_2\cap J$ is empty then the difference is $0$ since two negative summands and two positive summands are non-zero. If $J\cap M_1$ and $J\cap M_2$ are non-empty the difference is $1$ since one positive (or negative resp.) summand and two negative (or positive resp.) summands are affected. Again by symmetry we get the analogous result for $|J|=4$. If $|J|=3$, 
the difference is $0$. This concludes the proof for $R_2$.\\
\textbf{case $R_3$:} W.l.o.g. we are only considering $R_3(b)$.
Consider now $w=x_1\overrightarrow{\ta_i}^r c_i \overleftarrow{\ta_i}^s x_2$ according to $R_3$. If $r=0$ the word is not changed by the application and its locality is not either. 
If on the other hand $r>0$ two different cases have to be distinguished in regards to the number of occurrences of satellites on the right side of the core. Firstly, if $s \neq 0$, i.e. satellites occur on both sides, then $R_3(w) = x_1 c_i\overleftarrow{\ta_i}^s(\overrightarrow{\ta_i}^r)^R x_2=:w'$ and we get with $\overrightarrow{\ta_i}[1]=\overleftarrow{\ta_i}[|\overleftarrow{\ta_i}|]=\ta_i$
\begin{align*}
n(w,I) &= n(x_1,I)+n(\overrightarrow{\ta_i}^r,I)+n(c_i\overleftarrow{\ta_i}^s,I)+n(x_2,I)\\
&\,-o(x_1[|x_1|],\ta_i,I)-o(\overrightarrow{\ta_i}^r[|\overrightarrow{\ta_i}^r|],c_i[1],I)-o(\ta_i,x_2[1],I)
\end{align*}
and
\begin{align*}
n(w',I) &= n(x_1,I)+n(c_i\overleftarrow{\ta_i}^s,I)+n((\overrightarrow{\ta_i}^r)^R,I)+n(x_2,I)\\
&\,-o(x_1[|x_1|],c_i[1],I) -o(\ta_i,(\overrightarrow{\ta_i}^r)^R[1],I) -o((\overrightarrow{\ta_i}^r)^R[|(\overrightarrow{\ta_i}^r)^R|],x_2[1],I).
\end{align*}
By $n(\overrightarrow{\ta_i}^r,I)=n((\overrightarrow{\ta_i}^r)^R,I)$ and $(\overrightarrow{\ta_i}^r)^R[|(\overrightarrow{\ta_i}^r)^R|]=\ta_i$ we get 
\begin{multline*}
|n(w',I)-n(w,I)|
=|-o(x_1[|x_1|],c_i[1],I) -o(\ta_i,(\overrightarrow{\ta_i}^r)^R[1],I)\\
+o(x_1[|x_1|],\ta_i,I) +o(\overrightarrow{\ta_i}^r[|\overrightarrow{\ta_i}^r|],c_i[1],I)|,
\end{multline*}
where $(\overrightarrow{\ta_i}^r)^R[1] = \overrightarrow{\ta_i}^r[|\overrightarrow{\ta_i}^r|]$.\\
Set $M_1=\{x_1[|x_1|],(\overrightarrow{\ta_i}^r)^R[1]\}$, $M_2=\{c_i[1],\ta_i\}$, and $M=M_1\cup M_2$. Similarly to $R_1 \mbox{ and } R_2 $, we are distinguishing the possibilities for $\sigma'$. Consider $J=I\cap M$. 
If $|J|=4$, the difference is obviously $0$. If $|J|=3$ the difference is $1$ since each element from $M$ occurs in exactly one positive and one negative summand. By symmetry we get that the difference is $0$ for $|J|=0$ or $|J|=1$. 
If $|J|=2$ and either $M_1\cap J$ or $M_2\cap J$ is empty then the difference is $0$ since two negative summands and two positive summands are non-zero. If $J\cap M_1$ and $J\cap M_2$ are non-empty the difference is $1$ since exactly one positive or negative summand is affected.\\

Secondly, if there are no satellites on the right side of the core ($s=0$) then $w=x_1\overrightarrow{\ta_i}^r c_ix_2\mbox { and } R_3(w) =x_1 c_i (\overrightarrow{\ta_i}^r)^R x_2=:w'$ according to $R_3$.\\
Then we get with $\overrightarrow{\ta_i}^r[1]=(\overrightarrow{\ta_i}^r)^R[|(\overrightarrow{\ta_i}^r)^R|]=a_i$
\begin{align*}
n(w,I) &= n(x_1,I)+n(\overrightarrow{\ta_i}^r,I)+n(c_i,I)+n(x_2,I)\\
&\,-o(x_1[|x_1|],\ta_i,I)-o(\overrightarrow{\ta_i}^r[|\overrightarrow{\ta_i}^r|],c_i[1],I)-o(c_i[|c_i|],x_2[1],I)
\end{align*}
and
\begin{align*}
n(w',I) &= n(x_1,I)+n(c_i,I)+n((\overrightarrow{\ta_i}^r)^R,I)+n(x_2,I)\\
&\,-o(x_1[|x_1|],c_i[1],I) -o(c_i[|c_i|],(\overrightarrow{\ta_i}^r)^R[1],I) -o(a_i,x_2[1],I).
\end{align*}
By $n(\overrightarrow{\ta_i}^r,I)=n((\overrightarrow{\ta_i}^r)^R,I)$ we get 
\begin{multline*}
|n(w',I)-n(w,I)|
=|-o(x_1[|x_1|],c_i[1],I) -o(c_i[|c_i|],(\overrightarrow{\ta_i}^r)^R[1],I) -o(a_i,x_2[1],I) \\
+o(x_1[|x_1|],\ta_i,I) +o(\overrightarrow{\ta_i}^r[|\overrightarrow{\ta_i}^r|],c_i[1],I) +o(c_i[|c_i|],x_2[1],I)|.
\end{multline*}

Consider $M_1=\{x_1[|x_1|],\overrightarrow{\ta_i}^r[|\overrightarrow{\ta_i}^r|], x_2[1]\}$, $M_2=\{c_i[1], c_i[|c_i|], \ta_i\}$, and $M=M_1\cup M_2$. Then we get analogously to case $R_2$ a difference of at most 1. 
The proof for the application of $R_3(a)$ follows symmetrically. Thus, the absolute value of difference in locality for $w \mbox{ and } w'$ is at most $1$ for the case $R_3$.\qed
\end{proof}

\fi

For transforming a neighbourless word $w\in\Sigma^{\ast}$ into $w_{\gamma}$ given the $\ebs$ $\gamma$ w.r.t. $\sigma_{\Sigma}$, $R_1$ needs to be applied $\leq j_i+1$ times (moving $j_1$ joins and one separator), $R_2$ $\leq s_i$ times and $R_3$ once (moving all satellites to the right side of the core).

\ifpaper
\begin{corollary}[$\ast$]
\else
\begin{corollary}
\fi
\label{multappl}
For $w\in\Sigma^{\ast}$ and the $\ebs$ $\gamma_w$ induced by $\sigma_{\Sigma}$, we have $\loc(w)\leq\loc(w_{\gamma_w})+\sum_{i\in[\ell]}(2j_i+s_i)+\ell$.
\end{corollary}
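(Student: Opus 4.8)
The plan is to realize $w_{\gamma_w}$ as the endpoint of a finite chain of rule applications starting at $w$ and to propagate the locality bounds of Theorem~\ref{applR1} along that chain. By Theorem~\ref{obtainingnormalform} there is a sequence $w=w^{(0)},w^{(1)},\dots,w^{(m)}=w_{\gamma_w}$ in which each $w^{(t)}$ is obtained from $w^{(t-1)}$ by a single application of one of $R_1,R_2,R_3$. By Lemma~\ref{class}, every intermediate word $w^{(t)}$ again lies in $\mathfrak{V}_{\gamma_w}$, hence is neighbourless w.r.t.\ $\sigma_{\Sigma}$ and has $\gamma_w$ as its $\ebs$; in particular the hypotheses of Theorem~\ref{applR1} are satisfied at every step, so $|\loc(w^{(t)})-\loc(w^{(t-1)})|\le 2$ when the step uses $R_1$ and $\le 1$ when it uses $R_2$ or $R_3$. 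The triangle inequality then gives $\loc(w)\le \loc(w_{\gamma_w})+2N_1+N_2+N_3$, where $N_i$ counts the applications of $R_i$ in the chain.

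It remains to bound $N_1,N_2,N_3$, which is done stage by stage following the normalization procedure from the proof of Theorem~\ref{obtainingnormalform} (as summarized in the paragraph immediately preceding the corollary). When stage $i$ is processed, i.e.\ when $\ta_{i+1}$ gets marked, the procedure uses $R_1$ at most $j_i+1$ times (sliding the $j_i$ join occurrences into the leftmost gaps and bringing one separator block into place), $R_2$ at most $s_i$ times (merging the remaining separator blocks into a single gap), and $R_3$ exactly once (shifting all satellites to the right of the core). Summing over the stages gives $N_1\le\sum_{i\in[\ell]}(j_i+1)$, $N_2\le\sum_{i\in[\ell]}s_i$, and $N_3\le\ell$, so that plugging these counts into the inequality above yields a bound of the shape $\loc(w)\le\loc(w_{\gamma_w})+\sum_{i\in[\ell]}(2j_i+s_i)+c\ell$ for an absolute constant~$c$.

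The main obstacle is to push this constant down to the value $1$ claimed in the statement. The crude count above charges two worst-case units for the single separator-positioning $R_1$ step of each stage and one more unit for its $R_3$ step, whereas the stated bound can only afford one extra unit per stage in total. To recover it one reinspects precisely these applications: an $R_1$ step that merely transposes two already-placed marked blocks of the core (as opposed to relocating a genuine join occurrence) and the satellite-shifting $R_3$ step should never be in the extremal configuration isolated in the proof of Theorem~\ref{applR1} — in every marking stage they leave the number of marked blocks changed by at most the amount already accounted for in the $R_2$/satellite bookkeeping — so together they contribute at most one extra unit per stage. Carrying out this sharpened case analysis, by rerunning the relevant cases of the proof of Theorem~\ref{applR1} under the additional structural constraints that hold throughout normalization, is the only genuinely technical ingredient; the rest is the telescoping argument described above.
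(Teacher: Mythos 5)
Your route is the paper's: decompose the passage from $w$ to $w_{\gamma_w}$ into single applications of $R_1,R_2,R_3$ via Theorem~\ref{obtainingnormalform}, keep every intermediate word in $\mathfrak{V}_{\gamma_w}$ via Lemma~\ref{class}, bound each step by Theorem~\ref{applR1}, and telescope. The gap is precisely the one you flag in your last paragraph and then do not close: with the counts $N_1\leq\sum_i(j_i+1)$, $N_2\leq\sum_i s_i$, $N_3\leq\ell$, your telescoping only yields $\loc(w)\leq\loc(w_{\gamma_w})+\sum_i(2j_i+s_i)+3\ell$, and the whole content of the corollary beyond the trivial telescoping is the reduction of the additive term from $3\ell$ to $\ell$. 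You assert that the separator-positioning $R_1$ application and the $R_3$ application of each stage ``should never be in the extremal configuration'' and jointly cost at most one unit per stage, but you give no argument, and this is not a formality: by Lemma~\ref{calc} an $R_1$ swap can still change the locality by $2$ when, for instance, $\ta_{k_1}=\ta_{k_2}$ and $\ta_{k_3}=\ta_{k_4}$, so one must actually determine which critical letters arise in the normalization's separator-moving swap and rerun those cases. As written, your proposal proves only the weaker bound with constant $3$.

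For comparison, the paper's proof charges $2$ for each join-moving $R_1$ application and $1$ for each $R_2$ application, and then argues separately --- using the palindromic form $\overrightarrow{\ta_\ell}^{k_\ell}\cdots\overrightarrow{\ta_i}^{k_i}\,c_i\,\overleftarrow{\ta_i}^{k'_i}\cdots\overleftarrow{\ta_\ell}^{k'_\ell}$ of a neighbourless word at stage $i$ --- that shifting the satellites with a single $R_3(b)$ costs at most $1$ per letter (and is not needed for $\ta_1$ and $\ta_\ell$); it does not explicitly charge the separator-positioning $R_1$ application at all. So the sharpening you defer is also treated only tersely in the paper, but a complete proof must supply it: either show that this particular $R_1$ swap falls into a $\leq 1$ (or $0$) case of Lemma~\ref{calc}, or absorb its cost into the per-letter unit together with the $R_3$ step, as you conjecture.
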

\ifpaper
\else \begin{proof}
In the worst case all letters are different and get an increase of $2$ for each application of $R_1$ as well as an increase of $1$ for each application of $R_2$. Interestingly, the locality does not increase with the number of satellites but increases by $1$ at most for each letter.

Since we only consider neighbourless words at any stage $1 < i < l$ in the marking process $w_i$ is of the form $\overrightarrow{\ta_l}^{k_l} \cdots \overrightarrow{\ta_i}^{k_i} c_i \overleftarrow{\ta_i}^{k'_i} \cdots \overleftarrow{\ta_l}^{k'_l} \mbox{ with } k_j, k'_j \in\N_0, i \leq j \leq l$ according to the notation in Remark~\ref{arrowNotation} and \cite{FSTTCS}. To bring $w$ into normal form we apply $R_3(b)$ once in every such stage, moving the factor $\overrightarrow{\ta_i}^{k_i}$ which contains all left satellites of $\ta_i$ to the right side of the core. The order in which these possibly different $\overrightarrow{\ta_i}$ occur there is not of importance since they are of the form $\ta_ix \mbox{ with } x\in\{\ta_{i+1}, \cdots, \ta_{\ell}\}^+$ and all letters of $x$ are either join or singleton occurrences greater than $\ta_i$ and are moved with $R_1$ or $R_2$ in the remaining marking steps, if necessary. For both $\ta_1$ and $\ta_l$ there is no application of $R_3$ needed since no occurrence of $\ta_1$ has to be moved and there are no satellites for $\ta_l$ (which are all joining occurrences).
\qed
\end{proof}

\fi

In this section, we have proven that for neighbourless words (w.r.t. $\sigma_{\Sigma}$) we can always find a normal form and we showed how the locality of the word itself and its normal form differ in the worst case. This upper bound proven in Corollary~\ref{multappl} can only be reached if at any stage the {\em critical letters}, the letters adjacent to the factors moved by the rules, are all different. Since, for instance, if the rules are applied to $\ta_2$, all critical letters have to be $\ta_1$, the upper bound is not tight. The following lemma shows how the locality changes if {\em critical letters} are equal.

\ifpaper
\begin{lemma}[$\ast$]
\else
\begin{lemma}
\fi
\label{calc}
Let $w\in\Sigma^{\ast}$. 
Regarding $R_1$ we have that the locality does not change if the critical letters are identical and it changes by at most $1$ if three critical letters are equal and the fourth is different or if $\ta_1=\ta_3$ or $\ta_2=\ta_4$. Regarding $R_2$ the results are similar: if the critical letters $\overrightarrow{\ta_i}^{k_1}[|\overrightarrow{\ta_i}^{k_1}|],z_2,z_4$, and $v[|v|]$ are all equal or if $\overrightarrow{\ta_i}^{k_1}[|\overrightarrow{\ta_i}^{k_1}|]$ and $z_2=z_4$ the locality does not change.  Finally regarding $R_3(b)$ the locality does not change if both $x_1[|x_1|] =  x_2[1] \mbox{ and } c_i[1] = c_i \overleftarrow{\ta_i}^s[|c_i \overleftarrow{\ta_i}^s|]$ (including the case $x_1 = x_2 = \varepsilon$).
\end{lemma}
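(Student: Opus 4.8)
The plan is to reuse, essentially verbatim, the difference formulas obtained in the proof of Theorem~\ref{applR1}, and to feed the extra equalities of the hypothesis into them. Recall that in that proof one fixes an arbitrary marking sequence $\sigma=(y_1,\dots,y_{\ell})$, and at each stage $i$, writing $I=\{y_1,\dots,y_i\}$, expresses $n(w',I)-n(w,I)$ as an alternating sum of terms $o(\cdot,\cdot,I)$ (eight of them for $R_1$, six for $R_2$ and for $R_3$), each term pairing a marked-side critical letter with an unmarked-side critical letter at a block boundary that the rule creates or destroys. Since by Lemma~\ref{class} each rule preserves the $\ebs$, and hence the alphabet, marking sequences of $w$ and $w'$ coincide, so that, exactly as in Theorem~\ref{applR1}, it suffices to bound $|n(w',I)-n(w,I)|$ uniformly over all $\sigma$ and all stages; the stated bound on $\loc$ then follows by taking a $\sigma$ optimal for $w$ (resp.\ for $w'$). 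Throughout I use that $o(x,y,I)=o(y,x,I)$.

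For $R_1$, write the four marked critical letters as $\ta_{k_1},\dots,\ta_{k_4}$, so that $n(w',I)-n(w,I)$ is the sum of the four positive terms $o(\ta_{k_1},u[1],I),\,o(u[|u|],\ta_{k_2},I),\,o(\ta_{k_3},v[1],I),\,o(v[|v|],\ta_{k_4},I)$ minus the four negative terms $o(\ta_{k_1},v[1],I),\,o(v[|v|],\ta_{k_2},I),\,o(\ta_{k_3},u[1],I),\,o(u[|u|],\ta_{k_4},I)$. If all four $\ta_{k_j}$ coincide, each positive term literally equals a negative term, so the difference is $0$ for every $I$ and the locality is unchanged. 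If exactly three of the $\ta_{k_j}$ coincide, or $\ta_{k_1}=\ta_{k_3}$, or $\ta_{k_2}=\ta_{k_4}$, then after the substitution four of the eight terms cancel and what remains has the shape $o(\alpha,p,I)-o(\alpha,q,I)+o(\beta,q,I)-o(\beta,p,I)$, where $p,q$ are the (at most two) remaining distinct marked critical letters and $\alpha,\beta$ are two of the endpoints $u[1],u[|u|],v[1],v[|v|]$. A four-way case split on whether $p\in I$ and whether $q\in I$ bounds this by $1$ in absolute value: if at most one of $p,q$ lies in $I$, only two of the four terms can be nonzero and they have opposite sign; if both lie in $I$, the four terms reduce to $[\alpha\in I]$ and $[\beta\in I]$ and cancel in pairs. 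Hence $|\loc(R_1(w))-\loc(w)|\le 1$ in these cases.

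For $R_2$ and $R_3(b)$ the reasoning is the same, applied to the six-term formulas of Theorem~\ref{applR1}. For $R_2$, substituting the hypothesised coincidences among $\overrightarrow{\ta_i}^{k_1}[|\overrightarrow{\ta_i}^{k_1}|]$, $v[|v|]$, $z_2$ and $z_4$ pairs each of the three positive terms with an equal negative one, so $n(w',I)=n(w,I)$ for all $I$. For $R_3(b)$ I split as in Theorem~\ref{applR1}: when there are satellites on both sides of the core ($s\ne 0$), the surviving difference consists of the two brackets $o(x_1[|x_1|],\ta_i,I)-o(x_1[|x_1|],c_i[1],I)$ and $o(\overrightarrow{\ta_i}^r[|\overrightarrow{\ta_i}^r|],c_i[1],I)-o(\ta_i,\overrightarrow{\ta_i}^r[|\overrightarrow{\ta_i}^r|],I)$, and the hypothesis $c_i[1]=c_i\overleftarrow{\ta_i}^s[|c_i\overleftarrow{\ta_i}^s|]$, which here reads $c_i[1]=\ta_i$, makes both brackets vanish; when $s=0$ the hypothesis reads $c_i[1]=c_i[|c_i|]$, and together with $x_1[|x_1|]=x_2[1]$ (vacuously true if $x_1=x_2=\varepsilon$) it makes the six-term expression cancel in pairs. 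In every case $n(w',I)=n(w,I)$ for all $\sigma$ and all stages, so $\loc$ is preserved; $R_3(a)$ is symmetric.

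The only genuine work is bookkeeping: for each coincidence one must check that the two boundaries whose $o$-terms should cancel really do become identical — which depends on whether the critical letter in question sits inside or outside the factor moved by the rule — and that the degenerate configurations already handled in the proof of Theorem~\ref{applR1} ($u$ or $v$ a single letter, $x_1$ or $x_2$ empty, $|c_i|=1$, $s=0$, $r=0$) contribute no extra terms. Beyond this careful substitute-and-cancel step, no new idea is required.
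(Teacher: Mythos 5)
Your proposal is correct and follows essentially the same route as the paper's proof: substitute the hypothesised coincidences of critical letters into the $o$-term difference formulas from Theorem~\ref{applR1} and track which boundary terms cancel. Your only departure is organisational — you bound all the ``three equal'' and ``$\ta_{k_1}=\ta_{k_3}$ / $\ta_{k_2}=\ta_{k_4}$'' cases at once via the common shape $o(\alpha,p,I)-o(\alpha,q,I)+o(\beta,q,I)-o(\beta,p,I)$ and a membership case split on $p,q\in I$, where the paper enumerates each configuration separately — and the resulting bounds agree with the paper's in every case.
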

\ifpaper
\else \begin{proof}
Consider firstly $R_1$. For $\ta_{k_1} = \ta_{k_2}= \ta_{k_3}= \ta_{k_4}$ we have
\begin{multline*}
|n(w',I)-n(w,I)|\\
= |-o(\ta_{k_1},u[1],I)-o(u[|u|],\ta_{k_1},I)-o(\ta_{k_1},v[1],I)
-o(v[|v|],\ta_{k_1},I) \\
+o(\ta_{k_1},v[1],I)+o(v[|v|],\ta_{k_1},I)+o(\ta_{k_1},u[1],I)+o(u[|u|],\ta_{k_1},I)|
= 0.
\end{multline*}
If we have $\ta_{k_1}= \ta_{k_3}, \ta_{k_2} = \ta_{k_4}$, we also get
\begin{multline*}
|n(w',I)-n(w,I)| \\
=|-o(\ta_{k_1},u[1],I)-o(u[|u|],\ta_{k_2},I)-o(\ta_{k_1},v[1],I)-o(v[|v|],\ta_{k_2},I)\\
+o(\ta_{k_1},v[1],I)+o(v[|v|],\ta_{k_2},I)+o(\ta_{k_1},u[1],I)+o(u[|u|],\ta_{k_2},I)|
= 0.
\end{multline*}
Consider now $\ta_{k_2}= \ta_{k_3}= \ta_{k_4}$ (the other cases where three critical letters are equal but not the fourth are analogous). Then we get
\begin{multline*}
|n(w',I)-n(w,I)|\\
= |-o(\ta_{k_1},u[1],I)-o(\ta_{k_2},v[1],I)+o(\ta_{k_1},v[1],I)+o(\ta_{k_2},u[1],I)|
\leq 1.
\end{multline*}
Moreover we have for $\ta_{1}=\ta_3$ ($\ta_2=\ta_4$ is analogous)
%
%
%
\begin{multline*}
|n(w',I)-n(w,I)|\\
= |-o(u[|u|],\ta_{k_2},I)-o(v[|v|],\ta_{k_4},I)+o(v[|v|],\ta_{k_2},I)+o(u[|u|],\ta_{k_4},I)|
\leq 1.
\end{multline*}
For $\ta_{k_1}= \ta_{k_2}, \ta_{k_3} = \ta_{k_4}$ we have
\begin{multline*}
|n(w',I)-n(w,I)|\\
= |-o(\ta_{k_1},u[1],I)-o(u[|u|],\ta_{k_1},I)-o(\ta_{k_3},v[1],I)-o(v[|v|],\ta_{k_3},I)\\
+o(\ta_{k_1},v[1],I)+o(v[|v|],\ta_{k_1},I)+o(\ta_{k_3},u[1],I)+o(u[|u|],\ta_{k_3},I)|
\leq 2.
\end{multline*}
For $\ta_{k_1}= \ta_{k_4}, \ta_{k_2} = \ta_{k_3}$ we have
\begin{multline*}
|n(w',I)-n(w,I)|\\
= |-o(\ta_{k_1},u[1],I)-o(u[|u|],\ta_{k_2},I)-o(\ta_{k_2},v[1],I)-o(v[|v|],\ta_{k_1},I)\\
+o(\ta_{k_1},v[1],I)+o(v[|v|],\ta_{k_2},I)+o(\ta_{k_2},u[1],I)+o(u[|u|],\ta_{k_1},I)|
\leq 2.
\end{multline*}
For $\ta_{k_1}= \ta_{k_2}$ we have
\begin{multline*}
|n(w',I)-n(w,I)|\\
= |-o(\ta_{k_1},u[1],I)-o(u[|u|],\ta_{k_1},I)-o(\ta_{k_3},v[1],I)-o(v[|v|],\ta_{k_4},I)\\
+o(\ta_{k_1},v[1],I)+o(v[|v|],\ta_{k_1},I)+o(\ta_{k_3},u[1],I)+o(u[|u|],\ta_{k_4},I)|
\leq 2.
\end{multline*}
For $\ta_{k_1}= \ta_{k_4}$ we have
\begin{multline*}
|n(w',I)-n(w,I)|\\
= |-o(\ta_{k_1},u[1],I)-o(u[|u|],\ta_{k_2},I)-o(\ta_{k_3},v[1],I)-o(v[|v|],\ta_{k_1},I)\\
+o(\ta_{k_1},v[1],I)+o(v[|v|],\ta_{k_2},I)+o(\ta_{k_3},u[1],I)+o(u[|u|],\ta_{k_1},I)|
\leq 2.
\end{multline*}
For $\ta_{k_2}= \ta_{k_3}$ we have
\begin{multline*}
|n(w',I)-n(w,I)|\\
= |-o(\ta_{k_1},u[1],I)-o(u[|u|],\ta_{k_2},I)-o(\ta_{k_2},v[1],I)-o(v[|v|],\ta_{k_4},I)\\
+o(\ta_{k_1},v[1],I)+o(v[|v|],\ta_{k_2},I)+o(\ta_{k_2},u[1],I)+o(u[|u|],\ta_{k_4},I)|
\leq 2.
\end{multline*}
For $\ta_{k_3}= \ta_{k_4}$ we have as a last case for $R_1$
\begin{multline*}
|n(w',I)-n(w,I)|\\
= |-o(\ta_{k_1},u[1],I)-o(u[|u|],\ta_{k_2},I)-o(\ta_{k_3},v[1],I)-o(v[|v|],\ta_{k_3},I)\\
+o(\ta_{k_1},v[1],I)+o(v[|v|],\ta_{k_2},I)+o(\ta_{k_3},u[1],I)+o(u[|u|],\ta_{k_3},I)|
\leq 2.
\end{multline*}
This proves the claim for $R_1$. Consider now $R_2$. If $\overrightarrow{\ta_i}^{k_1}[|\overrightarrow{\ta_i}^{k_1}|] = z_2 = z_4 = v[|v|]$
we get 
\begin{multline*}
|n(w',I)-n(w,I)| \\
= |-o(z_2,\ta_i,I)
-o(\overrightarrow{\ta_i}^{k_2}[|\overrightarrow{\ta_i}^{k_2}|],z_2,I)
-o(z_2,z_2,I)\\
+o(z_2,z_2,I)+o(z_2,\ta_i,I)+o(\overrightarrow{\ta_i}^{k_2}[|\overrightarrow{\ta_i}^{k_2}|],z_2,I)|
= 0.
\end{multline*}
If $\overrightarrow{\ta_i}^{k_1}[|\overrightarrow{\ta_i}^{k_1}|] = v[|v|], z_2 = z_4 $ we get
\begin{multline*}
|n(w',I)-n(w,I)|
= |-o(v|[v|],\ta_i,I)-o(\overrightarrow{\ta_i}^{k_2}[|\overrightarrow{\ta_i}^{k_2}|],z_2,I)
-o(v[|v|],z_2,I)\\
+o(v[|v|],z_2,I)+o(v[|v|],\ta_i,I)+o(\overrightarrow{\ta_i}^{k_2}[|\overrightarrow{\ta_i}^{k_2}|],z_2,I)|
= 0.
\end{multline*}
In the remaining cases the locality may change by at most $1$. If  $z_2 = z_4 = v[|v|]$ (the other cases if three critical letters are equal but not the fourth are analogous) we get
\begin{multline*}
|n(w',I)-n(w,I)|
= |-o(\overrightarrow{\ta_i}^{k_1}[|\overrightarrow{\ta_i}^{k_1}|],\ta_i,I)-o(z_2,z_2,I)\\
+o(\overrightarrow{\ta_i}^{k_1}[|\overrightarrow{\ta_i}^{k_1}|],z_2,I)+o(z_2,\ta_i,I)|
\leq 1.
\end{multline*}
If $\overrightarrow{\ta_i}^{k_1}[|\overrightarrow{\ta_i}^{k_1}|] = z_2, z_4 = v[|v|]$ we get
\begin{multline*}
|n(w',I)-n(w,I)|\\
= |-o(\overrightarrow{\ta_i}^{k_1}[|\overrightarrow{\ta_i}^{k_1}|],\ta_i,I)-o(\overrightarrow{\ta_i}^{k_2}[|\overrightarrow{\ta_i}^{k_2}|],\overrightarrow{\ta_i}^{k_1}[|\overrightarrow{\ta_i}^{k_1}|],I)-o(z_4,z_4,I)\\
+o(\overrightarrow{\ta_i}^{k_1}[|\overrightarrow{\ta_i}^{k_1}|],\overrightarrow{\ta_i}^{k_1}[|\overrightarrow{\ta_i}^{k_1}|],I)+o(z_4,\ta_i,I)+o(\overrightarrow{\ta_i}^{k_2}[|\overrightarrow{\ta_i}^{k_2}|],z_4,I)|
\leq 1.
\end{multline*}
If $\overrightarrow{\ta_i}^{k_1}[|\overrightarrow{\ta_i}^{k_1}|] = z_4, z_2 = v[|v|]$ we get
\begin{multline*}
|n(w',I)-n(w,I)|\\
= |-o(\overrightarrow{\ta_i}^{k_1}[|\overrightarrow{\ta_i}^{k_1}|],\ta_i,I)-o(\overrightarrow{\ta_i}^{k_2}[|\overrightarrow{\ta_i}^{k_2}|],z_2,I)\\
+o(z_2,\ta_i,I)+o(\overrightarrow{\ta_i}^{k_2}[|\overrightarrow{\ta_i}^{k_2}|],\overrightarrow{\ta_i}^{k_1}[|\overrightarrow{\ta_i}^{k_1}|],I)|
\leq 1.
\end{multline*}
If $\overrightarrow{\ta_i}^{k_1}[|\overrightarrow{\ta_i}^{k_1}|] = z_2$ we get
\begin{multline*}
|n(w',I)-n(w,I)|\\
= |-o(\overrightarrow{\ta_i}^{k_1}[|\overrightarrow{\ta_i}^{k_1}|],\ta_i,I)-o(\overrightarrow{\ta_i}^{k_2}[|\overrightarrow{\ta_i}^{k_2}|],\overrightarrow{\ta_i}^{k_1}[|\overrightarrow{\ta_i}^{k_1}|],I)-o(v[|v|],z_4,I)\\
+o(\overrightarrow{\ta_i}^{k_1}[|\overrightarrow{\ta_i}^{k_1}|],\overrightarrow{\ta_i}^{k_1}[|\overrightarrow{\ta_i}^{k_1}|],I)+o(v[|v|],\ta_i,I)+o(\overrightarrow{\ta_i}^{k_2}[|\overrightarrow{\ta_i}^{k_2}|],z_4,I)|
\leq 1.
\end{multline*}
If $\overrightarrow{\ta_i}^{k_1}[|\overrightarrow{\ta_i}^{k_1}|] = z_4$ we get
\begin{multline*}
|n(w',I)-n(w,I)|\\
= |-o(\overrightarrow{\ta_i}^{k_1}[|\overrightarrow{\ta_i}^{k_1}|],\ta_i,I)-o(\overrightarrow{\ta_i}^{k_2}[|\overrightarrow{\ta_i}^{k_2}|],z_2,I)-o(v[|v|],\overrightarrow{\ta_i}^{k_1}[|\overrightarrow{\ta_i}^{k_1}|],I)\\
+o(\overrightarrow{\ta_i}^{k_1}[|\overrightarrow{\ta_i}^{k_1}|],z_2,I)+o(v[|v|],\ta_i,I)+o(\overrightarrow{\ta_i}^{k_2}[|\overrightarrow{\ta_i}^{k_2}|],\overrightarrow{\ta_i}^{k_1}[|\overrightarrow{\ta_i}^{k_1}|],I)|
\leq 1.
\end{multline*}
If $\overrightarrow{\ta_i}^{k_1}[|\overrightarrow{\ta_i}^{k_1}|] = v[|v|]$ w get
\begin{multline*}
|n(w',I)-n(w,I)| 
= |-o(\overrightarrow{\ta_i}^{k_2}[|\overrightarrow{\ta_i}^{k_2}|],z_2,I)-o(\overrightarrow{\ta_i}^{k_1}[|\overrightarrow{\ta_i}^{k_1}|],z_4,I)\\
+o(\overrightarrow{\ta_i}^{k_1}[|\overrightarrow{\ta_i}^{k_1}|],z_2,I)+o(\overrightarrow{\ta_i}^{k_2}[|\overrightarrow{\ta_i}^{k_2}|],z_4,I)|
\leq 1,
\end{multline*}
If $z_2 = z_4$ we get
\begin{multline*}
|n(w',I)-n(w,I)|
= |-o(\overrightarrow{\ta_i}^{k_1}[|\overrightarrow{\ta_i}^{k_1}|],\ta_i,I)-o(v[|v|],z_2,I)\\
+o(\overrightarrow{\ta_i}^{k_1}[|\overrightarrow{\ta_i}^{k_1}|],z_2,I)+o(v[|v|],\ta_i,I)|\leq 1.
\end{multline*}
If  $z_2 = v[|v|]$ we get
\begin{multline*}
|n(w',I)-n(w,I)|
= |-o(\overrightarrow{\ta_i}^{k_1}[|\overrightarrow{\ta_i}^{k_1}|],\ta_i,I)-o(\overrightarrow{\ta_i}^{k_2}[|\overrightarrow{\ta_i}^{k_2}|],z_2,I)-o(z_2,z_4,I)\\
+o(\overrightarrow{\ta_i}^{k_1}[|\overrightarrow{\ta_i}^{k_1}|],z_2,I)+o(z_2,\ta_i,I)+o(\overrightarrow{\ta_i}^{k_2}[|\overrightarrow{\ta_i}^{k_2}|],z_4,I)|\leq 1.
\end{multline*}
Finally for $R_2$ if $z_4 = v[|v|]$ we get
\begin{multline*}
|n(w',I)-n(w,I)|
= |-o(\overrightarrow{\ta_i}^{k_1}[|\overrightarrow{\ta_i}^{k_1}|],\ta_i,I)-o(\overrightarrow{\ta_i}^{k_2}[|\overrightarrow{\ta_i}^{k_2}|],z_2,I)-o(z_4,z_4,I)\\
+o(\overrightarrow{\ta_i}^{k_1}[|\overrightarrow{\ta_i}^{k_1}|],z_2,I)+o(z_4,\ta_i,I)+o(\overrightarrow{\ta_i}^{k_2}[|\overrightarrow{\ta_i}^{k_2}|],z_4,I)|\leq 1.
\end{multline*}

This concludes the proof for $R_2$. Considering $R_3(b)$ there are less options for equality of the critical letters $x_1[|x_1|], c_i[1], c_i \overleftarrow{\ta_i}^s[|c_i \overleftarrow{\ta_i}^s|] \mbox{ and } x_2[1]$. By the construction of the neighbourless word $w_i$ to which $R_3$ is applied we know that $w_i=x_1\overrightarrow{\ta_i}^r c_i \overleftarrow{\ta_i}^s x_2$ with $r > 0$ and $c_i$ the core of $w_i$. According to $R_3$ and the definition of the core $c_i$, we know that $x_1[|x_1|], x_2[1] >_{\sigma_\Sigma} \ta_i \geq_{\sigma_\Sigma} c_i[1], c_i \overleftarrow{\ta_i}^s[|c_i \overleftarrow{\ta_i}^s|]$ and specifically $x_1[|x_1|], x_2[1] \neq c_i[1], c_i \overleftarrow{\ta_i}^s[|c_i \overleftarrow{\ta_i}^s|]$ and the factors $c_i[1] \mbox{ and } c_i \overleftarrow{\ta_i}^s[|c_i \overleftarrow{\ta_i}^s|]$ are not the empty word, whereas $x_1 \mbox{ and } x_2$ might be. Consider first $x_1 \mbox{ and } x_2$ not empty.
\\

If $x_1[|x_1|] =  x_2[1] \mbox{ and } c_i[1] = c_i \overleftarrow{\ta_i}^s[|c_i \overleftarrow{\ta_i}^s|]$ we get
\begin{multline*}
|n(w',I)-n(w,I)| = \\
|-o(x_1[|x_1|],c_i[1],I) -o(c_i[1],(\overrightarrow{\ta_i}^r)^R[1],I) -o(\ta_i,x_1[|x_1|],I) \\
+o(x_1[|x_1|],\ta_i,I) +o(\overrightarrow{\ta_i}^r[|\overrightarrow{\ta_i}^r|],c_i[1],I) +o(c_i[1],x_1[|x_1|],I)| = 0.
\end{multline*}

If $x_1[|x_1|] =  x_2[1] \mbox{ and } c_i[1] \neq c_i \overleftarrow{\ta_i}^s[|c_i \overleftarrow{\ta_i}^s|]$ (the case for $x_1[|x_1|] \neq  x_2[1]$
 and $c_i[1] = c_i \overleftarrow{\ta_i}^s[|c_i \overleftarrow{\ta_i}^s|]$ follows analogously) we get
\begin{multline*}
|n(w',I)-n(w,I)| = \\
|-o(x_1[|x_1|],c_i[1],I) 
-o(c_i \overleftarrow{\ta_i}^s[|c_i \overleftarrow{\ta_i}^s|],(\overrightarrow{\ta_i}^r)^R[1],I) 
-o(\ta_i,x_1[|x_1|],I) \\
+o(x_1[|x_1|],\ta_i,I) 
+o(\overrightarrow{\ta_i}^r[|\overrightarrow{\ta_i}^r|],c_i[1],I) 
+o(c_i \overleftarrow{\ta_i}^s[|c_i \overleftarrow{\ta_i}^s|],x_2[1],I)| \leq 1.
\end{multline*}

If $x_1[|x_1|] \neq  x_2[1] \mbox{ and } c_i[1] \neq c_i \overleftarrow{\ta_i}^s[|c_i \overleftarrow{\ta_i}^s|]$ we get
\begin{multline*}
|n(w',I)-n(w,I)| = \\
|-o(x_1[|x_1|],c_i[1],I) -o(c_i \overleftarrow{\ta_i}^s[|c_i \overleftarrow{\ta_i}^s|],\ta_i,I) -o(\ta_i,x_2[1],I) \\
+o(x_1[|x_1|],\ta_i,I) +o(\overrightarrow{\ta_i}^r[|\overrightarrow{\ta_i}^r|],c_i[1],I) +o(c_i \overleftarrow{\ta_i}^s[|c_i \overleftarrow{\ta_i}^s|],x_2[1],I)| \leq 1.
\end{multline*}

If on the other hand $x_1 \mbox{ or } x_2$ are empty there are less critical letters. The previously given calculations include these cases since the summands in which either $x_1[|x_1|]$ or $x_2[1]$ appears are $0$ by definition of $o$ and do not influence the inequalities. 
\qed
\end{proof}

\fi

Lemma~\ref{calc} shows two peculiarities: the smaller a letter is w.r.t. the given order the less cases exist in which the locality is changed maximally; words can be categorised w.r.t. their joins and separators - the less of these occurrences appear between different critical letters the smaller is the difference between the locality of the normal form and the word itself. Moreover, the worst case does not incorporate that the worst case for one application of one rule may be the best case for another one such that the increase and decrease cancel each other out. We leave this investigation for general alphabets as an open problem. In the following section, we study the behaviour for alphabets of size up to $3$.

\section{The Case $|\Sigma|\leq 3$}\label{card2}
In this section, we are using $\ta,\tb$, and $\tc$ for the alphabet for better readability. For unary alphabets we have exactly one word containing of a single letter since we only consider condensed words.
The binary case $\Sigma=\{\ta,\tb\}$ can also shortly be explained: blocksequences are of the form 
$(b_1,1)$. Again, since the words are condensed and neighbourless, each word has to be an alternation of $\ta$ and $\tb$ and assuming $\sigma_{\Sigma}$ the word starts and ends with $\ta$. Thus we have $b_1$ occurrences of $\ta$ and $b_1-1$ join occurrences of $\tb$.  This leads immediately to the fact that the only other marking sequence is better (and thus optimal) since we obtain the blocksequence $(b_1-1,1)$. 
In the case $\Sigma=\{\ta,\tb,\tc\}$ $\ebs$ are of the form $\gamma=((b_1,b_2,1),j_1,s_1)$ (omitting some brackets for better readability)
implying $w_{\gamma} =(\ta\tb)^{j_1}\ta(\tc\tb)^{s_1}(\tc\ta)^{b_1-j_1-1}(\tc\tb)^{b_2-b_1-s_1+j_1}$.
Firstly, we show how the locality of $w$ and $w_{\gamma}$ differ. Notice that in the case $|\Sigma|=3$ only occurrences of $\tb$ may be join- or separating occurrences and all occurrences of $\tc$ are joins.

\ifpaper
\begin{proposition}[$\ast$]
\else
\begin{proposition}
\fi
\label{locchanges3}
Let $w\in\Sigma^{\ast}$ and $\gamma=((b_1,b_2,1),j_1,s_1)$ the $\ebs$ while marking with $\sigma_{\Sigma}$. Then we have $\loc_{\sigma_{\Sigma}}(w)=\loc_{\sigma_{\Sigma}}(w_{\gamma})$.
\end{proposition}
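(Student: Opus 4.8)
The plan is to exploit that for a \emph{fixed} marking sequence $\sigma$ the quantity $\loc_{\sigma}$ is nothing but the maximum of the corresponding blocksequence, and that the blocksequence is part of the data recorded by the $\ebs$; hence any two words sharing the same $\ebs$ automatically have the same $\loc_{\sigma_{\Sigma}}$.

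Concretely, I would first record the elementary fact that for every $u\in\Sigma^{\ast}$ with blocksequence $\beta_{\sigma}(u)=(b_1,\dots,b_{\ell})$ one has $\loc_{\sigma}(u)=\max\{b_1,\dots,b_{\ell}\}$, which is immediate from the definitions of $\loc_{\sigma}$ and of the blocksequence. Next, I note that $w\in\mathfrak{V}_{\gamma}$ by hypothesis (this is exactly what ``$\gamma$ is the $\ebs$ of $w$ w.r.t.\ $\sigma_{\Sigma}$'' means) and that $w_{\gamma}\in\mathfrak{V}_{\gamma}$ by Corollary~\ref{charVgamma} (or directly from Definition~\ref{defnormalform}); in particular $w$ and $w_{\gamma}$ share the blocksequence $\beta=(b_1,b_2,1)$, where $b_1=|w|_{\ta}$ is the number of blocks after marking $\ta$, $b_2$ the number after additionally marking $\tb$, and $1$ the number after also marking $\tc$. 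Combining the two observations gives $\loc_{\sigma_{\Sigma}}(w)=\max\{b_1,b_2,1\}=\loc_{\sigma_{\Sigma}}(w_{\gamma})$, which is the claim. An alternative route, in the spirit of Section~3, is to invoke Theorem~\ref{obtainingnormalform} to write $w_{\gamma}$ as the result of finitely many applications of $R_1,R_2,R_3$ to $w$, and Lemma~\ref{class} to conclude that each such application keeps the word in $\mathfrak{V}_{\gamma}$, hence preserves the blocksequence w.r.t.\ $\sigma_{\Sigma}$ and therefore preserves $\loc_{\sigma_{\Sigma}}$.

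There is essentially no technical obstacle here; the only point that deserves care is that the statement concerns the locality $\loc_{\sigma_{\Sigma}}$ with respect to the \emph{single} canonical marking sequence, and not the locality number $\loc$ (for which equality fails already in general, cf.\ Theorem~\ref{applR1} and Corollary~\ref{multappl}). The content of the proposition is really the observation that, for $|\Sigma|\le 3$, the rules $R_1$--$R_3$ only shuffle factors around between the $\tc$-separated gaps without changing how many marked blocks are present after each of the (only three) marking stages; this is what makes the ternary case markedly better behaved than the general one, and it is the natural warm-up for the subsequent analysis of the optimal marking sequence.
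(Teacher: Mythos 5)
Your argument is correct for the statement as literally written, but it takes a genuinely different---and far more elementary---route than the paper, and the comparison is instructive. You reduce everything to two observations: $\loc_{\sigma_\Sigma}(u)$ equals the maximum of the blocksequence $\beta_{\sigma_\Sigma}(u)$ (immediate from Definition~\ref{basedefs}), and $w,w_\gamma$ share the blocksequence $(b_1,b_2,1)$ because both lie in $\mathfrak{V}_\gamma$ (the construction in Definition~\ref{defnormalform} is exactly the one used in the proof of Theorem~\ref{validity} to realise $\gamma$). That is airtight, and---as you should notice---it nowhere uses $|\Sigma|=3$: the same one-line argument gives $\loc_{\sigma_\Sigma}(u)=\loc_{\sigma_\Sigma}(v)$ for any $u\sim_\gamma v$ over any alphabet. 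This alphabet-independence is a signal that the literal statement is weaker than what the section actually needs. The paper's proof does something strictly stronger: using the bookkeeping quantities $n(\cdot,I)$ and $o(\cdot,\cdot,I)$ from the proof of Theorem~\ref{applR1}, it checks that in the ternary case every application of $R_1$, $R_2$, $R_3$ on the way to the normal form satisfies $n(w',I)=n(w,I)$ for \emph{every} subset $I\subseteq\Sigma$ of marked letters, not merely for the three prefixes $\{\ta\}$, $\{\ta,\tb\}$, $\Sigma$ of $\sigma_\Sigma$. That yields identical blocksequences for $w$ and $w_\gamma$ under all six marking sequences, hence $\loc_\sigma(w)=\loc_\sigma(w_\gamma)$ for every $\sigma$ and in particular $\loc(w)=\loc(w_\gamma)$---which is exactly what is needed afterwards, when Theorem~\ref{optcard3} reads the optimal marking sequence of $w_\gamma$ off $\gamma$ and the conclusion transfers it back to $w$. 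So your proof settles the proposition as stated, but it does not deliver the $|\Sigma|=3$-specific invariance under arbitrary marking orders that the paper's proof establishes and that the surrounding development relies on; your closing remark about the rules preserving the block counts ``after each of the three marking stages'' gestures at the right phenomenon but tracks only the canonical stages rather than all marking states.
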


\ifpaper
\else \begin{proof}
Given $j_1$ we know that rule $R_1$ has to be applied at most $j_1+1$ times, namely $j_1$ for bringing the join occurrences at the correct position and $1$ for bringing 
one gap with separating occurrences of $\tb$ at the correct position (in the case that some are already in the correct position, the number of applications decreases).
Moreover, we may assume for the input of $R_1$: $\ta_{k_j}=\ta$ for $j\in[4]$, $u[1]=u[|u|]=\tc$ since $u$ needs to be an $\ta$-gap which does not contain a join-occurrence of $\tb$ (otherwise we do not apply $R_1$), and $v[1]=v[|v|]=\tb$. Thus, we get for $I\subseteq\Sigma^{\ast}$
\begin{multline*}
|n(w',I)-n(w,I)|
= |-o(\ta,\tc,I)-o(\tc,\ta,I)-o(\ta,\tb,I)-o(\tb,\ta,I)\\
+o(\ta,\tb,I)+o(\tb,\ta,I)+o(\ta,\tc,I)+o(\tc,\ta,I)|=0,
\end{multline*}
i.e. moving the join occurrences to the correct positions does not change the locality at all. In the next step, we have to move one {\em separating gap} to the left, i.e. we have to exchange a join occurrence of $\tc$ with an occurrence of the form $(\tc\tb)^{\ell}\tc$. In this case we have again $\ta_{k_j}=\ta$, $u[1]=u[|u|]=\tc$ and $v[1]=v[|v|]=\tc$ 
resulting in
\begin{multline*}
|n(w',I)-n(w,I)|
= |-o(\ta,\tc,I)-o(\tc,\ta,I)-o(\ta,\tc,I)-o(\tc,\ta,I)\\
+o(\ta,\tc,I)+o(\tc,\ta,I)+o(\ta,\tc,I)+o(\tc,\ta,I)|=0
\end{multline*}
and hence, applying $R_1$ never changes the locality. In the next step we are going to move all separating occurrences into the same gap (the one where we just put one such block by $R_1$). Here we know 
$\overrightarrow{\ta_i}^{k_j}[|\overrightarrow{\ta_i}^{k_j}|]=\tc$ for $j\in[2]$, $\ta_i=\tb$, $v[|v|]=c$, and $z_j=\ta$ for $j\in[4]$ and we get
\begin{multline*}
|n(w',I)-n(w,I)| = |-o(\tc,\tb,I)-o(\tc,\ta,I)-o(\tc,\ta,I)\\
+o(\tc,\ta,I)+o(\tc,\tb,I)+o(\tc,\ta,I)|=0.
\end{multline*}
Hence, the application of $R_2$ does not change the locality either. Finally, we look at the application of $R_3$ and notice $\overleftarrow{\ta_i}^R[|\overleftarrow{\ta_i}^R|]=\tc$, $\ta_i=\tb$, $c_i[|c_i|]=\ta$, $x_2[1]=\tc$, $\overleftarrow{\ta_i}[1]=\tc$ and obtain
\begin{multline*}
|n(w',I)-n(w,I)|
=|-o(\tc,\tb,I)
-o(\ta,\tc,I)+o(\ta,\tc,I)+o(\tb,\tc,I)|=0
\end{multline*}
which leads to $\loc_{\sigma_{\Sigma}}(w)=\loc_{\sigma_{\Sigma}}(w_{\gamma})$. \qed
\end{proof}

\fi

Thus, on a ternary alphabet we may assume the normal form without any restriction w.r.t. $\loc_{\sigma_{\Sigma}}$. The following proposition determines the optimal marking sequence for the normal form just by the $\ebs$.

\ifpaper
\begin{theorem}[$\ast$]
\else
\begin{theorem}
\fi
\label{optcard3}
Given a valid $\ebs$ $\gamma=((b_1,b_2,1),j_1,s_1)$  and $w_{\gamma}$ w.r.t. $\sigma_{\Sigma}$  the optimal marking sequence is given by\\
- $(\tb,\tc,\ta)$ if $2b_1\geq2j_1+b_2$ and $b_2-1\geq b_1$ or $2b_1\leq2j_1+b_2$ and $b_1\geq2j_1+1$,\\
- $(\tc,\ta,\tb)$ if $b_1\leq 2j_1+1$ and $2b_1\geq2j_1+b_2$ or $b_1\geq2j_1+1$ and $b_1\geq b_2-1$,\\
- $(\tc,\tb,\ta)$ if $b_1\geq b_2-1$ and $2b_1\leq2j_1+b_2$ or $2b_1\geq2j_1+b_2$ and $b_1\leq 2j_1+1$,\\
- $(\ta,\tb,\tc)$ otherwise. 	
\end{theorem}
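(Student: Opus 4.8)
The plan is to reduce everything to the normal form and then to an explicit finite comparison. First I would invoke Proposition~\ref{locchanges3}, or rather its proof, which in fact shows that each application of $R_1$, $R_2$, $R_3$ (in the shapes that occur when bringing a ternary word into normal form) leaves the number of marked blocks unchanged for \emph{every} set $I$ of already-marked letters, not just along $\sigma_{\Sigma}$. Composing these, $\loc_\sigma(w)=\loc_\sigma(w_\gamma)$ for all six marking sequences $\sigma$ over $\{\ta,\tb,\tc\}$, so it suffices to compute $\loc_\sigma(w_\gamma)$ for each $\sigma$ and read off the minimizer. Since $w_\gamma$ is condensed, for $\sigma=(X,Y,Z)$ we have $\loc_\sigma(w_\gamma)=\max(|w_\gamma|_X,\,N_{XY})$, where $N_{XY}$ is the number of marked blocks when exactly $X$ and $Y$ are marked (the first stage yields $|w_\gamma|_X$ blocks because no two equal letters are adjacent, and the last stage always yields one block).

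Next I would evaluate the ingredients on $w_\gamma=(\ta\tb)^{j_1}\ta(\tc\tb)^{s_1}(\tc\ta)^{b_1-j_1-1}(\tc\tb)^{b_2-b_1-s_1+j_1}$. By Theorem~\ref{lengthforebs} (or a direct count) $|w_\gamma|_\ta=b_1$, $|w_\gamma|_\tb=2j_1+b_2-b_1$, $|w_\gamma|_\tc=b_2-1$, and by definition of the $\ebs$, $N_{\ta\tb}=b_2$. The remaining quantities $N_{\ta\tc}$ and $N_{\tb\tc}$ are obtained by counting maximal runs of marked letters: each equals the number of occurrences of the \emph{unmarked} letter plus a correction in $\{-1,0,+1\}$ coming from whether $w_\gamma$ begins, resp.\ ends, with a marked letter; for $w_\gamma$ this correction is governed only by whether the trailing factor $(\tc\tb)^{b_2-b_1-s_1+j_1}$ is empty and by the value of $b_1-j_1-1$. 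Substituting, all six $\loc_\sigma(w_\gamma)$ become maxima of two affine expressions in $b_1,b_2,j_1$ (up to those corrections), and several sequences are seen to be dominated — $(\ta,\tc,\tb)$ never beats $(\tb,\tc,\ta)$, and $(\tb,\ta,\tc)$ never beats $(\tc,\ta,\tb)$ — so the only candidates for the optimum are $(\tb,\tc,\ta)$, $(\tc,\ta,\tb)$, $(\tc,\tb,\ta)$ and $(\ta,\tb,\tc)$.

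Finally I would compare these four expressions. Each being a maximum of two affine functions, picking the smallest reduces to checking the signs of finitely many affine inequalities, and one verifies that the decision depends only on the three comparisons $2b_1$ vs.\ $2j_1+b_2$, $b_1$ vs.\ $b_2-1$, and $b_1$ vs.\ $2j_1+1$ — equivalently, the pairwise comparisons among $|w_\gamma|_\ta$, $|w_\gamma|_\tb$, $|w_\gamma|_\tc$. Running through the possible configurations of these three inequalities and recording, in each, the sequence attaining the minimum yields exactly the four cases of the statement, with $(\ta,\tb,\tc)$ as the default.

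The main obstacle I expect is the bookkeeping of the boundary-correction terms in $N_{\ta\tc}$ and $N_{\tb\tc}$: one must split the generic situation from the degenerate ones in which $w_\gamma$ starts or ends with a marked letter, and check that the resulting $\pm1$ shifts never change which of the four expressions is the minimum, so that the (non-strict) inequalities in the statement correctly record the ties. A secondary point is confirming that the four listed conditions are jointly exhaustive and overlap only where the associated sequences are equally optimal.
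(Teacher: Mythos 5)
Your proposal follows essentially the same route as the paper: evaluate the block counts of the normal form under all six marking sequences as maxima of affine expressions in $b_1,b_2,j_1$ (the paper tabulates exactly the quantities $b_1$, $b_2$, $b_2-1$, $2j_1+b_2-b_1$ that you identify), observe that $(\ta,\tc,\tb)$ and $(\tb,\tc,\ta)$ coincide while $(\tc,\ta,\tb)$ dominates $(\tb,\ta,\tc)$, and finish with a case analysis on the pairwise comparisons of these quantities. If anything, your explicit tracking of the $\pm1$ boundary corrections in $N_{\ta\tc}$ and $N_{\tb\tc}$ (depending on whether $w_\gamma$ ends in a marked letter) is more careful than the paper's table, which records only the generic values.
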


\ifpaper
\else
\begin{proof}
Notice that for determining the locality of a word, it suffices to calculate all blocksequences of the possible marking sequence; the extended blocksequence is not of interest.
We will here only calculate the blocksequence for one marking sequence in detail since the calculation is similar in all cases (the results are depicted in Table \ref{stupidtable}. Set for convenience $u_1=(\ta\tb)^{j_1}\ta$, $u_2=(\tc\tb)^{s_1}$, $u_3=(\tc\ta)^{b_1-j_1-1}$, and $u_4=(\tc\tb)^{b_2-b_1-s_1+j_1}$. Consider the marking sequence $(\tb,\tc,\ta)$. Marking $\tb$ leads to $j_1$ marked blocks in $u_1$ (the last letter is unmarked), $s_1$ marked blocks in $u_2$ (the last letter is marked), no marked block in $u_3$, and $b_2-b_1-s_1+j_1$ marked blocks in $u_4$. This leads
to $j_1+s_1+b_2-b_1-s_1+j_1=2j_1+b_2-b_1$ marked blocks. Now marking $\tc$ leads to $j_1$ marked blocks in $u_1$ (the last letter is still unmarked), $u_2$ is one marked block, $b_1-j_1-1$ marked blocks in $u_3$ (the first letter is marked, the last letter is unmarked), and $u_4$ is one marked block. Thus we get $j_1+1+b_1-j_1-2+1=b_1$.

\begin{table}[]
\centering
\begin{tabular}{|l|l|l|}
\hline
marking sequence & blocks after 1$^{\mbox{\tiny st}}$ marked letter & blocks after 2$^{\mbox{\tiny nd}}$ marked letter \\
\hline
$\sigma_1=(\ta,\tb,\tc)$ & $b_1$  & $b_2$                             \\
$\sigma_2=(\ta,\tc,\tb)$ & $b_1$                            & $2j_1+b_2-b_1$                    \\
$\sigma_3=(\tb,\ta,\tc)$ & $2j_1+b_2-b_1$                   & $b_2$                             \\
$\sigma_4=(\tb,\tc,\ta)$ & $2j_1+b_2-b_1$                   & $b_1$                             \\
$\sigma_5=(\tc,\ta,\tb)$ & $b_2-1$                          & $2j_1+b_2-b_1$                    \\
$\sigma_6=(\tc,\tb,\ta)$ & $b_2-1$                          & $b_1$\\
\hline                            
\end{tabular}
\caption{Number of blocks after marking the first and after marking the second letter for each marking sequence.}
\label{stupidtable}
\end{table}
This information can now be used to derive which sequences are optimal. Recall that a marking sequence is optimal if there is no other marking sequence that leads to a smaller locality.
We know that $\loc_{(\ta,\tb,\tc)}=\min\{b_1,b_2\}$ and thus any other marking sequence is only better if it needs at most $\loc_{(\ta,\tb)}$ blocks while marking. Notice from Table \ref{stupidtable} that $\sigma_2$ and $\sigma_4$ result in the same locality - therefore we are only considering $\sigma_2$.\\
\textbf{case 1:} $b_1\leq b_2$\\
In this case $\sigma_2,\sigma_3$ and $\sigma_5$ are better than $\sigma_1$ if $2j_1+b_2-b_1<b_2$ 
holds. This is equivalent to $2j_1<b_1$; $\sigma_6$ is in any case better than $\sigma_1$. If both
conditions are true, $\sigma_6$ is worse than the other ones since the opposite led to $b_2-1<2j_1+b_2-b_1$ which is a contradiction to $b_1<2j_1$.\\
\textbf{case 2:} $b_1>b_2$\\
In this case $\sigma_2,\sigma_3$ and $\sigma_5$ are better than $\sigma_1$ if $2j_1+b_2-b_1<b_1$ holds. This is equivalent to $2j_1+b_2<2b_1$. The last marking sequence $\sigma_6$ is never better than $\sigma_1$.\qed
\end{proof}

\fi

Thus, in the ternary case we are able to determine the optimal marking sequence for a neighbourless word with a constant number of arithmetic operations and comparisons if the extended marking sequence is given; notice that the normal form does not have to be computed since only the information from the extended blocksequence is needed.

\section{Conclusions}
In this paper, we investigated a new point of view regarding the notion of $k$-locality. While previous works
were focussed on the locality of one single word and the connection to other domains (especially pattern matching or graph theory), we 
introduced the notion of blocksequence for grouping words and finding similarities of these words. We noticed that 
just a blocksequence does not provide enough information for a reasonable characterisation, since too many words with
different locality fall into the same class. Thus, we strengthened this notion, and introduced extended blocksequences. These sequences not only
count the number of marked blocks, in each step of a marking sequence, but also provide information about the roles of single letters: neighbours, joins, 
separators, and satellites. Further, we focused our analysis on neighbourless words. In that case, we were able to define 
a normal form for each class, and compute it in linear time. We have also shown an upper bound on the difference between the locality of a word and 
that of its normal form. It remains open to determine the exact difference between these two for a specific word over an alphabet with at least four letters. 
We conjecture that our upper bound is actually not tight, since the worst case for one of the applied rules can be cancelled out with the application of the
next rule. Surprisingly for us, a computer programme showed that the locality of a word and that of its normal form, over a six-letter alphabet, differ by at most seven, 
independent of the number of satellites, joins, and separators. For a three letter alphabet we gave a full characterisation
including the optimal marking sequence of a word, as determined by the extended blocksequence.

In this work, we merely started the study of this new perspective on the locality of words. Further problems, such as the computation of the normal form's locality and a deeper understanding
of the locality changes between a word and its normal form, are left as future work.

\bibliography{blocksequences.bib}

\ifpaper
\newpage
    \section*{Appendix}
    \noindent
\textbf{Proof of Theorem~\ref{ntupel}.}

\bigskip

\noindent
\textbf{A Characterisation of all condensed words of $\mathfrak{W}_\beta$ (including a characterisation of the shortest words from this set)}

\bigskip

\noindent
\textbf{Proof of Theorem~\ref{validity}.}

\bigskip

\noindent
\textbf{Proof of Theorem~\ref{lengthforebs}.}

\bigskip

\noindent
\textbf{Proof of Lemma~\ref{occletter}.}

\bigskip

\noindent
\textbf{Proof of Theorem~\ref{orderofletters}.}

\newpage

\noindent
\textbf{Proof of Proposition~\ref{nlcheck}.}

\bigskip

\noindent
\textbf{Proof of Proposition~\ref{ebs}.}

\bigskip

\noindent
\textbf{Proof of Theorem~\ref{obtainingnormalform}.}

\bigskip

\noindent
\textbf{Proof of Lemma~\ref{class}.}

\bigskip

\noindent
\textbf{Proof of Corollary~\ref{charVgamma}.}

\bigskip

\noindent
\textbf{Proof of Theorem~\ref{applR1}.}

\bigskip

\noindent
\textbf{Proof of Corollary~\ref{multappl}.}

\bigskip

\noindent
\textbf{Proof of Lemma~\ref{calc}.}

\bigskip

\noindent
\textbf{Proof of Proposition~\ref{locchanges3}.}

\newpage

\noindent
\textbf{Proof of Theorem~\ref{optcard3}.}

\fi
%
%
%
%

\end{document}